\newcommand{\eqdef} {\mathrel{\stackrel{\makebox[0pt]{\mbox{\normalfont\tiny
def}}}{=}}}
\newcommand{\ignore}[1]{}
\definecolor{corlinks}{RGB}{64,128,128}
\definecolor{cormenu}{RGB}{0,37,94}
\definecolor{corurl}{RGB}{0,46,91}
\definecolor{darkgreen}{rgb}{0,0.5,0}
\newcommand{\E} {\mathbb{E}}
\newcommand{\R}{\mathbb{R}}
\newcommand{\N}{\mathbb{N}}
\newcommand{\poly}{\mathsf{poly}}
\newcommand{\Prob}{\Pr}
\newcommand{\eps}{\epsilon}
\newcommand{\ep}{\epsilon}
\newcommand{\ra}{\rightarrow}
\newcommand{\dypv}{{D^{\rm Y}_{\mathrm{PV}}}}
\newcommand{\dnpv}{{D^{\rm N}_{\mathrm{PV}}}}
\newcommand{\dmpv}{{D^{\rm Mix}_{\mathrm{PV}}}}
\newcommand{\Dmpv}{{\mathcal{D}^{\rm Mix}_{\mathrm{PV}}}}
\newcommand{\Dmpvp}{{\mathcal{D}^{\rm Mix+}_{\mathrm{PV}}}}
\newcommand{\One}{\mathbbm{1}}
\newcommand{\indep}{\rotatebox[origin=c]{90}{$\models$}}
\newcommand{\MS}{\mathcal{S}}
\newtheorem{fact}{Fact}[section]
\newtheorem{definition}[fact]{Definition}
\newtheorem{defn}[fact]{Definition}
\newtheorem{theorem}[fact]{Theorem}
\newtheorem{lemma}[fact]{Lemma}
\newtheorem{lem}[fact]{Lemma}
\newtheorem{corollary}[fact]{Corollary}
\newtheorem{proposition}[fact]{Proposition}
\newtheorem{claim}[fact]{Claim}
\newtheorem{remark}[fact]{Remark}
\newcommand{\Inote}[1]{}
\newcommand{\Pnote}[1]{}
\newcommand{\Mnote}[1]{}
\newif\ifnotes\notestrue
\newcommand{\msnote}[1]{\textcolor{red}{{\bf (Madhu:} {#1}{\bf ) }} \marginpar{\tiny\bf
             \begin{minipage}[t]{0.5in}
               \raggedright Madhu
            \end{minipage}}}
\newcommand{\mbnote}[1]{\textcolor{red}{{\bf (Mitali:} {#1}{\bf ) }} \marginpar{\tiny\bf
        \begin{minipage}[t]{0.5in}
        \raggedright Mitali
    \end{minipage}}}
\newcommand{\bnote}[1]{\textcolor{red}{{\bf (Badih} {#1}{\bf ) }} \marginpar{\tiny\bf
             \begin{minipage}[t]{0.5in}
               \raggedright Badih
            \end{minipage}}}            
\newcommand{\nnote}[1]{\textcolor{red}{{\bf (Noah:} {#1}{\bf ) }} \marginpar{\tiny\bf
             \begin{minipage}[t]{0.5in}
               \raggedright Noah
                \end{minipage}}}  
\newcommand{\msnote}[1]{}
\newcommand{\mbnote}[1]{}
\newcommand{\bnote}[1]{}
\newcommand{\nnote}[1]{}
\newcommand{\Z}{\mathbb{Z}}
\title{Communication-Rounds Tradeoffs for Common Randomness and Secret Key Generation} 
\author{%
Mitali Bafna\thanks{Harvard John A. Paulson School of Engineering and Applied Sciences, 33 Oxford Street, Cambridge, MA 02138, USA. {\tt mitalibafna@g.harvard.edu.} Work supported in part by a Simons Investigator Award and NSF Award CCF 1715187.}
\and
Badih Ghazi\thanks{Google Research, 1600 Amphitheatre Parkway Mountain View, CA 94043, USA. {\tt badihghazi@gmail.com} This work was partly done while the author was a student at MIT. Supported in parts by NSF CCF-1650733 and CCF-1420692.} 
\and 
Noah Golowich\thanks{Harvard University. {\tt ngolowich@college.harvard.edu}}
\and
Madhu Sudan\thanks{Harvard John A. Paulson School of Engineering and Applied Sciences, 33 Oxford Street, Cambridge, MA 02138, USA. {\tt madhu@cs.harvard.edu.} Work supported in part by a Simons Investigator Award and NSF Award CCF 1715187.}
}
\begin{document}
\maketitle

\begin{abstract}
	We study the role of interaction in the Common Randomness Generation (CRG) and Secret Key Generation (SKG) problems. 
	In the CRG problem, two players, Alice and Bob, respectively get samples $X_1,X_2,\dots$ and $Y_1,Y_2,\dots$ with the pairs $(X_1,Y_1)$, $(X_2, Y_2)$, $\dots$ being drawn independently from some known probability distribution $\mu$. They wish to communicate so as to agree on $L$ bits of randomness. The SKG problem is the restriction of the CRG problem to the case where the key is required to be close to random even to an eavesdropper who can listen to their communication (but does not have access to the inputs of Alice and Bob). In this work, we study the relationship between the amount of communication and the number of rounds of interaction in both the CRG and the SKG problems. Specifically, we construct a family of distributions $\mu = \mu_{r, n,L}$, parametrized by integers $r$, $n$ and $L$, such that for every $r$ there exists a constant $b = b(r)$ for which CRG (respectively SKG) is feasible when $(X_i,Y_i) \sim \mu_{r,n,L}$ with $r+1$ rounds of communication, each consisting of $O(\log n)$ bits, but when restricted to $r/2 - 3$
	rounds of interaction, the total communication must exceed $\Omega(n/\log^{b}(n))$ bits. Prior to our work no separations were known for $r \geq 2$.
\end{abstract}

\thispagestyle{empty}

\newpage
\thispagestyle{empty}
\tableofcontents

\newpage
\setcounter{page}{1}

\newpage

\section{Introduction}

\subsection{Problem Definition}

In this work, we study the \emph{Common Randomness Generation (CRG)} and \emph{Secret Key Generation (SKG)} problems  --- two central questions in information theory, distributed computing and cryptography --- and study the need for interaction in solving these problems.
 
In the CRG problem, two players, Alice and Bob, have access to correlated randomness, with Alice being given $X_1, X_2, \dots$, and Bob being given $Y_1, Y_2, \dots$, where $(X_1, Y_1), (X_2, Y_2), \dots$ are drawn i.i.d from some known probability distribution $\mu$. Their goal is to agree on $L$ bits of entropy with high probability while communicating as little as possible. In the SKG problem, the generated random key is in addition required to be secure against a third player, Eve, who does not have access to the inputs of Alice and Bob but who can eavesdrop on their conversation. The CRG and SKG settings are illustrated in Figures~\ref{fig:CRG} and~\ref{fig:SKG} respectively.

Common random keys play a fundamental role in distributed computing and cryptography. They can often be used to obtain significant performance gains that would otherwise be impossible using deterministic or private-coin protocols. Under the additional secrecy constraints, the generated keys are of crucial importance as they can be used for encryption -- a central goal of cryptography.

\begin{figure}[h]
	\hspace{0.65in}
	\includegraphics[scale=0.28]{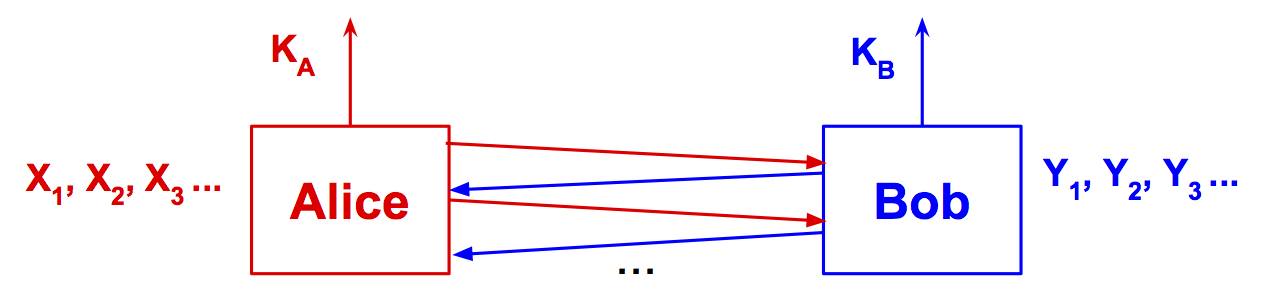}
	\caption{Common Randomness Generation (CRG)}\label{fig:CRG}
\end{figure}

\begin{figure}[h]
	\vspace{0.25in}
	\hspace{0.65in}
	\includegraphics[scale=0.28]{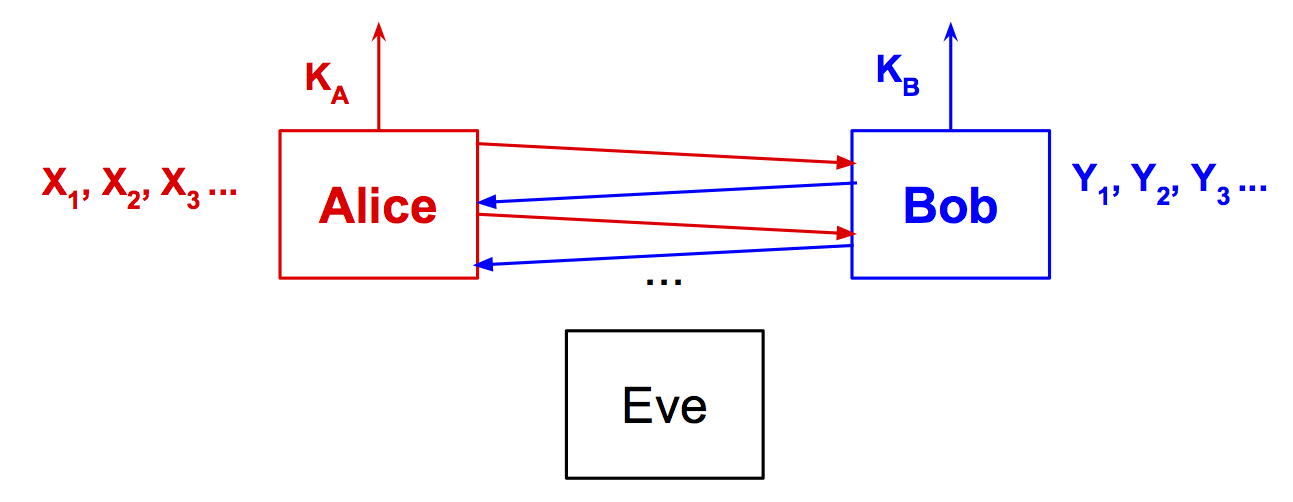}
	\caption{Secret Key Generation (SKG)}\label{fig:SKG}
\end{figure}

This paper investigates the tradeoff between rounds and communication for protocols for common randomness and secret key generation: We start with some terminology needed to describe our problem.
We say that a communication protocol $\Pi$ is an \emph{$(r,c)$-protocol} if it involves at most $r$ rounds of interaction with Alice starting and with the total length of all the messages being at most $c$ bits. Let $H_\infty(\cdot)$ denote the min-entropy function. A protocol is said to be an \emph{$(L, \epsilon)$-CRG} scheme for a correlation source $\mu$ if Alice and Bob get a finite number of i.i.d. samples of $\mu$, and after the final round of $\Pi$, Alice outputs a key $K_A$ and Bob outputs a key $K_B$, with $K_A$ and $K_B$ belonging to a finite set, satisfying $\min\{H_\infty(K_A),H_\infty(K_B)\}  \geq L$, and with $K_A$ and $K_B$ being equal with probability at least $1-\epsilon$. A protocol is said to be an \emph{$(L,\epsilon)$-SKG} scheme for $\mu$ if it is an $(L,\epsilon)$-CRG scheme for $\mu$ and satisfies the additional security guarantee that $\max\{I(\Pi; K_A), I(\Pi; K_B)\} = o(1)$ where $\Pi$ is also used to denote the protocol transcript and $I(\cdot ;  \cdot)$ is the mutual information. Then, we define the $r$-round communication complexity of $(L, \epsilon)$-CRG of a correlation source $\mu$, denoted by $CC_{r}(CRG_{L, \epsilon}(\mu))$, as the smallest $c$ for which there is an $(r,c)$-protocol that is an $(L, \epsilon)$-CRG scheme for $\mu$. We similarly define the $r$-round communication complexity of $(L,\epsilon)$-SKG of $\mu$ and denote it by $CC_r(SKG_{L, \epsilon}(\mu))$.
In terms of the above notation we study the functions $CC_{r}(CRG_{L, \epsilon}(\mu))$ and $CC_r(SKG_{L, \epsilon}(\mu))$ as we vary $r$.

\subsection{History}


The CRG and SKG problems have been well-studied in information theory and theoretical computer science. In information theory, they go back to the seminal work of Shannon on secrecy systems \cite{shannon1949communication}, which was followed by the central works of  Maurer \cite{maurer1993secret} and Ahlswede and Csisz{\'a}r \cite{ahlswede1993common, ahlswede1998common}. A crucial motivation for the study of SKG is the task of \emph{secure encryption}, where a common secret key can potentially be used to encrypt/decrypt messages over an insecure channel. It turns out that without \emph{correlated} inputs (and even allowing each party an unlimited amount of private randomness), efficiently generating common randomness is infeasible: agreeing on $L$ bits of randomness with probability $\gamma$ can be shown to require communicating at least $L-O(\log(1/\gamma))$ bits~\footnote{This fact is a special case of several known results in the literature on CRG. In particular, it follows from the proof of the agreement distillation lower bound of \cite{canonne2017communication}}. Since the original work of Shannon, the questions of how much randomness can be agreed on, with what probability, with what type of correlation and with how many rounds of interaction have attracted significant effort in both the information theory and theoretical computer science communities (e.g., \cite{maurer1993secret,ahlswede1993common, ahlswede1998common,csiszar2000common,gacs1973common,Wyner_CommonInfo, csiszar2004secrecy,zhao2011efficiency,tyagi2013common, liu2015secret,liu2016common, bogdanov2011extracting,  chan2014extracting, guruswami2016tight, ghazi2018resource} to name a few). In particular, Ahlswede and Csisz{\'a}r studied the CRG and SKG problems in the case of \emph{one-way communication} where they gave a characterization of the ratio of the entropy of the key to the communication in terms of the \emph{strong data processing constant} of the source (which is closely related to its hypercontractive properties \cite{ahlswede1976spreading, anantharam2013maximal}).

We point out that the aforementioned results obtained in the information theory community hold for the \emph{amortized} setup where the aim is to characterize the achievable $(H,C)$ pairs for which for every positive $\delta$, there is a large enough $N$, such that there is a CRG/SKG scheme taking as input $N$ i.i.d. copies from the source and generating $(H-\delta)\cdot N$ bits of entropy while communicating at most $(C+\delta) \cdot N$ bits. Moreover, these results mostly focus on the regime where the agreement probability gets arbitrarily close to one for sufficiently large $N$. The \emph{non-amortized} setup, where the entropy of the keys and the communication are potentially independent of the number of i.i.d. samples drawn from the source, as well as the setting where the agreement probability is not necessarily close to one, have been studied in several works within theoretical computer science. In particular, for the \emph{doubly symmetric binary source}, Bogdanov and Mossel gave a CRG protocol with a nearly tight agreement probability in the \emph{zero-communication} case where Alice and Bob are not allowed to communicate \cite{bogdanov2011extracting}. This CRG setup can be viewed as an abstraction of practical scenarios where hardware-based procedures are used for extracting 
a unique random ID from process 
variations~\cite{lim2005extracting,su2008digital,yu2009towards}
that can then be used for authentication~\cite{lim2005extracting,suh2007physical}. Guruswami and Radhakrishnan generalized the study of Bogdanov and Mossel to the case of one-way communication (in the non-amortized setup) where they gave a protocol achieving a near-optimal tradeoff between (one-way) communication and agreement probability \cite{guruswami2016tight}. Later, \cite{ghazi2018resource} gave explicit and sample-efficient CRG (and SKG) schemes matching the bounds of \cite{bogdanov2011extracting} and \cite{guruswami2016tight} for the doubly symmetric binary source and the bivariate Gaussian source.

Common randomness is thus a natural model for studying how shared keys can be generated
in settings where only weaker forms of correlation are available. It is one of the simplest and most natural questions within the study of \emph{correlation distillation} and the \emph{simulation of joint distributions} ~\cite{gacs1973common, Wyner_CommonInfo, witsenhausen1975sequences, mossel2004coin,mossel2006non,kamath2015non, ghazi2016decidability, de2018non, ghazi2017dimension}.

Moreover, when studying the setup of \emph{communication with imperfectly shared randomness}, Canonne et al. used lower bounds for CRG \emph{as a black box} when proving the existence of functions having small communication complexity with public randomness but large communication complexity with imperfectly shared randomness \cite{canonne2017communication}. Their setup -- which interpolates between the extensively studied public-coin and private-coin models of communication complexity -- was first also independently introduced by \cite{bavarian2014role} and further studied in \cite{ghazi2015communication,ghazi2018resource}.

Despite substantial work having been done on CRG and SKG, some very basic questions remained open such as the the quest of this paper, namely the role of interaction in generating common randomness (or secret keys). 
Recently, Liu, Cuff and Verdu generalized the CRG and SKG characterizations of Ahlswede and Csisz{\'a}r to the case of \emph{multi-round communication} \cite{liu2015secret,liu2016common,liurate}. Their characterization has been shown by \cite{ghazi2018resource} to be intimately connected to the notions of \emph{internal} and \emph{external information costs} of protocols which were first defined by \cite{BJKS,barak2013compress} and \cite{cswy01} respectively (who were motivated by the study of direct-sum questions arising in theoretical computer science). However their work does not yield sources for which randomness generation requires many rounds of interaction (to be achieved with low commununication). Their work does reveal sources where interaction does {\em not} help. For example, in the case where the agreement probability tends to one, Tyagi had shown that for \emph{binary symmetric} sources, interaction does not help, and conjectured the same to be true for any (possibly asymmetric) \emph{binary} source \cite{tyagi2013common}-- a conjecture which was proved by Liu, Cuff and Verdu \cite{liu2016common}. Morever, Tyagi constructed a source on \emph{ternary alphabets} for which there is a constant factor gap between the $1$-round and $2$-round communication complexity for Common Randomness and Secret Key Generation.  This seems to be the strongest tradeoff known for communication complexity of CRG or SKG till our work. 




\subsection{Our Results}

In this work, we study the relationship between the \emph{amount of communication} and the \emph{number of rounds} of interaction in each of the CRG and SKG setups, namely: can Alice and Bob communicate less and still generate a random/secret key by interacting for a larger number rounds?

For every constant $r$ and parameters $n$ and $L$, we construct a family of probability distributions $\mu = \mu_{r, n, L}$ for which CRG (respectively SKG) is possible with $r$ rounds of communication, each consisting of $O(\log{n})$ bits, but when restricted to $r/2$ rounds, the total communication of any protocol should exceed $n/\log^{\omega(1)}(n)$ bits. Formally, we show that $CC_{r+1}(CRG_{L, 0}(\mu)) \leq (r+1)\log n$ while for every constant $\epsilon < 1$ we have that $CC_{r/2-3}(CRG_{\ell,\epsilon}) \geq \min\{\Omega(\ell),n/\poly\log n\}$ (and similarly for SKG). 

\begin{theorem}[Communication-Rounds Tradeoff for Common Randomness Generation]
	\label{thm:main-crg}
	For all $\epsilon < 1, r \in \Z^+$, there exist $\eta > 0, n_0, \beta < \infty$, such that for all $n\geq n_0, L$ there exists a source $\mu_{r,n,L}$ for which the following hold:
	\begin{enumerate}
		\item There exists an $((r+1),(r+1)\lceil \log n\rceil)$-protocol for $(L,0)$-CRG from $\mu_{r,n,L}$.
		\item For every $\ell \in \Z^+$ there is no $(r/2-2,\min\{\eta \ell - \beta, n/\log^\beta n\})$-protocol for $(\ell,\epsilon)$-CRG from $\mu_{r,n,L}$.
	\end{enumerate}
\end{theorem}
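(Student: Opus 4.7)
The plan is to construct $\mu_{r,n,L}$ as a \emph{pointer-chasing source augmented with random labels}. Alice's marginal includes functions $f_i : [n] \to [n]$ for the odd indices $i \in \{1,3,\ldots\}$ up to $r$, Bob's marginal includes $f_i$ for the even indices, and the joint distribution additionally forces both players to hold the same random labeling $\mathrm{lab} : [n] \to \{0,1\}^L$ (so $\mathrm{lab}$ is part of both marginals but is shared, not independent). Define the pointer chain by $v_0 = 1$ and $v_i = f_i(v_{i-1})$. The intended common key is $K^\star = \mathrm{lab}(v_r)$. (The actual construction in the paper may tweak this --- e.g., layering labels along the chain so that earlier pointers cannot be used --- but this sketch captures the essential structure.)

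For the upper bound (Part 1), Alice and Bob run the natural $(r+1)$-round pointer-chasing protocol: in round $i$ the party who knows $v_{i-1}$ and $f_i$ sends $v_i$ using $\lceil \log n\rceil$ bits. After at most $r+1$ rounds both parties know $v_r$, and both output $\mathrm{lab}(v_r)$. Agreement is perfect, the total communication is $(r+1)\lceil \log n\rceil$, and since $\mathrm{lab}$ is uniform independently of the pointer functions, $K^\star$ has min-entropy exactly $L$.

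For the lower bound (Part 2), the plan is to reduce from the classical round-communication tradeoff for pointer chasing (Nisan--Wigderson, Ponzio--Radhakrishnan--Venkatesh), which says that any $(r/2-2)$-round protocol determining $v_r$ must communicate $\Omega(n/\poly\log n)$ bits. The reduction proceeds in two stages. First, a structural lemma: if a protocol $\Pi$ of few rounds produces keys $K_A, K_B$ that agree with probability $\geq 1-\epsilon$ and $K_A$ has min-entropy $\geq \ell$, then (because the only source of ``fresh'' entropy is $\mathrm{lab}$, which is shared and indexed by $[n]$) the output must essentially equal $\mathrm{lab}(I)$ for some index $I \in [n]$ that both players can extract from $(\Pi, \text{their inputs})$ with high probability. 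Second, one shows that agreement on such an index $I$ with enough entropy forces $I$ to be $v_r$ (or a related quantity that still requires traversing the full chain), and then invokes the pointer-chasing lower bound. The $\min\{\eta \ell - \beta, n/\log^\beta n\}$ form arises because small $\ell$ allows a cheap trivial protocol (just send $\ell$ random bits), while large $\ell$ activates the pointer-chasing lower bound.

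The main obstacle I expect is the structural lemma converting ``agree on a high-entropy key'' into ``agree on an index.'' Carefully designing the source is crucial: if labels for all $v \in [n]$ are simultaneously available to both parties, then a clever protocol might output a combination like $\mathrm{lab}(v_{r-1}) \oplus \mathrm{lab}(v_{r-2})$ using less information, or exploit $\mathrm{lab}$ in aggregate ways. I would address this by layering --- associating a fresh independent label block with each level of the chain and making level-$i$ labels usable only after level-$i$ pointers are revealed --- so that the only feasible way to extract $\ell$ bits of entropy is to know the final pointer $v_r$. Technically, I would carry out a round-elimination / information-theoretic argument in the style of the pointer-chasing lower bound proofs, tracking how much information about the ``active'' pointer at each level leaks in each round, and showing that with $r/2-2$ rounds the distribution of $v_r$ conditioned on the transcript remains sufficiently spread to preclude agreement on $\mathrm{lab}(v_r)$ without $\Omega(n/\poly\log n)$ bits of communication. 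The constant error $\epsilon<1$ (rather than vanishing error) is handled by using distributional / discrepancy-style lower bounds for pointer chasing rather than success-amplification arguments.
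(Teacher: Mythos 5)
There is a genuine gap, and it starts with the construction itself. You make the labeling $\mathrm{lab}:[n]\to\{0,1\}^L$ \emph{shared} by both players. But then Alice and Bob already possess $nL$ bits of perfectly common randomness before any communication: they can both output $\mathrm{lab}(1)$ and solve $(L,0)$-CRG with zero rounds and zero bits, so no lower bound of the claimed form can hold for such a source, and no ``layering'' of shared labels fixes this. The actual source must make the label vectors \emph{independent} on Alice's and Bob's sides except at the single hidden coordinate: Alice gets $A_1,\ldots,A_n$, Bob gets $B_1,\ldots,B_n$, all uniform and independent conditioned on $A_j=B_j$ where $j$ is the endpoint of the pointer chain. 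The entire difficulty of the problem is that the correlated coordinate is hidden, which your shared-label version erases.

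Your lower-bound plan also misses the central obstacle even after correcting the source. Because only one coordinate is correlated, there is a cheap ``guess-and-verify'' strategy: nondeterministically guess $j$, exchange $O(\log 1/\epsilon)$ bits of $A_j,B_j$ to confirm, and output them. So one cannot hope to prove a structural lemma of the form ``any low-round protocol achieving agreement on an $\ell$-bit key must effectively compute the final pointer $v_r$'' and then quote the Nisan--Wigderson pointer-chasing bound; the hardness has to combine a set-disjointness-type argument (ruling out finding the hidden correlated coordinate by ignoring the pointers) with a round-elimination argument, and the round-elimination must be carried out for inputs that are \emph{correlated} across the two players, which the classical pointer-chasing proof does not cover. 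The paper does this by an indistinguishability argument: it shows $\mu$ is indistinguishable from $\mu_X\times\mu_Y$ to $(r/2-O(1))$-round, $n/\poly\log n$-bit protocols, via a hybrid distribution handled by Razborov's disjointness bound on one side and a new round-elimination lower bound for a ``pointer verification'' problem (with source and target pointers, over a carefully constrained class of correlated distributions) on the other; CRG hardness then follows from the folklore fact that without correlation $\ell$ bits of agreement cost about $\ell$ bits of communication, together with a hashing-based reduction from CRG protocols to distinguishers and a reduction from $t$ samples to one. Note also that the factor-$2$ loss in rounds ($r/2-2$ rather than $r-O(1)$) is an artifact of pointer verification being solvable in $r/2$ rounds by chasing from both ends, not a feature of the classical pointer-chasing tradeoff as your sketch suggests.
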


We also get an analogous theorem for SKG, with the same source!

\begin{theorem}[Communication-Rounds Tradeoff for Secret Key Generation]
	\label{thm:main-skg}
	For all $\epsilon < 1, r \in \Z^+$, there exist $\eta > 0, n_0, \beta < \infty$, such that for all $n\geq n_0, L$ there exists a source $\mu_{r,n,L}$ for which the following hold:
	\begin{enumerate}
		\item There exists an $((r+1),(r+1)\lceil \log n\rceil)$-protocol for $(L,0)$-SKG from $\mu_{r,n,L}$.
		\item For every $\ell \in \Z^+$ there is no $(r/2-2,\min\{\eta \ell - \beta, n/\log^\beta n\})$-protocol for $(\ell,\epsilon)$-SKG from $\mu_{r,n,L}$.
	\end{enumerate}
\end{theorem}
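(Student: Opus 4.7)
\textbf{Proof proposal for Theorem~\ref{thm:main-skg}.} My plan is to piggyback on Theorem~\ref{thm:main-crg}, using the \emph{same} source $\mu_{r,n,L}$, and to reduce the two halves of the SKG statement to (i) a direct augmentation of the CRG protocol and (ii) a one-line implication from the CRG lower bound. The core observation is that SKG is strictly stronger than CRG (any $(L,\epsilon)$-SKG scheme is an $(L,\epsilon)$-CRG scheme by definition), so the lower bound transfers for free; the real content lies in verifying that the pointer-chasing-based upper bound of Theorem~\ref{thm:main-crg} actually delivers a \emph{secret} key rather than merely a common one.

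For the upper bound, I would exhibit the source $\mu_{r,n,L}$ as a pointer-chasing instance augmented with a fresh independent randomness layer: Alice receives $f_A\colon[n]\to[n]$ together with a tuple $R=(R_1,\ldots,R_n)$ of i.i.d.\ uniform strings in $\{0,1\}^L$, and Bob receives $f_B\colon[n]\to[n]$ together with the \emph{same} tuple $R$, with $R$ drawn independently of $(f_A,f_B)$ and of a public starting point $x_0$. (This is the natural ``same-source'' completion of the CRG construction underlying Theorem~\ref{thm:main-crg}; any additional structure required by the CRG lower bound is carried along unchanged, since $R$ is independent of everything relevant to pointer chasing.) The protocol is the obvious one: run $r+1$ rounds of pointer chasing, each sending $\lceil\log n\rceil$ bits, to obtain a final index $x^*=x^*(f_A,f_B,x_0)$ known to both parties, and output $K_A=K_B=R_{x^*}$. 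Correctness and the entropy bound are immediate: Alice and Bob agree with probability $1$, and since $R_i$ is uniform on $\{0,1\}^L$ for each $i$, the key $K_A=R_{x^*}$ has min-entropy exactly $L$ for every realization of $x^*$. For secrecy, note that the transcript $\Pi$ is a deterministic function of $(f_A,f_B,x_0)$ alone, so $\Pi \indep R$, which in turn gives $R_{x^*}\mid\Pi$ uniformly distributed on $\{0,1\}^L$; hence $I(\Pi;K_A)=I(\Pi;K_B)=0$, comfortably satisfying the $o(1)$ security requirement.

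For the lower bound, I would simply invoke Theorem~\ref{thm:main-crg} applied to the same source $\mu_{r,n,L}$. Any $(r/2-2,c)$-protocol realizing $(\ell,\epsilon)$-SKG is, in particular, an $(r/2-2,c)$-protocol realizing $(\ell,\epsilon)$-CRG, so the CRG lower bound $c\geq \min\{\eta\ell-\beta,\ n/\log^\beta n\}$ applies verbatim with the same constants $\eta,n_0,\beta$ furnished by Theorem~\ref{thm:main-crg}.

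The main obstacle is confirming that the ``same source'' statement is legitimate, i.e.\ that attaching the independent $L$-bit randomness layer $R$ used by the SKG protocol does not weaken the CRG lower bound: one must check that the hard-distribution reduction behind Theorem~\ref{thm:main-crg} continues to go through on the augmented source. This is expected to be routine because $R$ is independent of the pointer-chasing component and can be simulated locally by either player in any reduction argument; nonetheless, it is the one place where I would need to re-examine the lower-bound proof rather than cite it as a black box, and it determines whether the constants $\eta,\beta,n_0$ may simply be inherited from Theorem~\ref{thm:main-crg}.
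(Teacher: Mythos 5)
Your lower-bound half is exactly the paper's argument: an $(\ell,\epsilon)$-SKG scheme is by definition an $(\ell,\epsilon)$-CRG scheme, so Part (2) of \Cref{thm:main-crg} transfers verbatim with the same $\eta,\beta,n_0$ — but this transfer is only legitimate if you use the \emph{same} source as in \Cref{thm:main-crg}, and that is precisely where your upper-bound half goes wrong. You replace the source by a ``pointer chasing plus an independent shared tuple $R=(R_1,\ldots,R_n)$'' construction in which \emph{both} players receive the identical tuple $R$, independent of the pointer data. On that augmented source the lower bound is simply false: Alice and Bob can output $K_A=K_B=R_1$ with zero communication, getting an $(L,0)$-CRG (indeed $(L,0)$-SKG, since the empty transcript is independent of $R_1$) scheme, so no statement of the form of Part (2) can hold for it, and \Cref{thm:main-crg} certainly does not apply to it. The worry you flag at the end — whether attaching $R$ ``weakens'' the CRG lower bound — is not a routine check; it is fatal.

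The fix is that no augmentation is needed. The source $\mu_{r,n,L}$ of \Cref{def:PCS} already contains the randomness layer, but correlated only at the hidden index: $A_j=B_j$ for $j=\pi_r(\cdots\pi_1(i))$, with all other blocks independent. The natural $(r+1)$-round, $\lceil\log n\rceil$-bits-per-round pointer-chasing protocol outputs $K_A=A_j$, $K_B=B_j$, which agree with probability $1$ and have min-entropy $L$; and since $A_j=B_j$ is independent of $(i,\pi_1,\ldots,\pi_r)$ while the transcript is a deterministic function of those, $I(\Pi;K_A)=I(\Pi;K_B)=0$. This is exactly \Cref{lem:ub-skg} in the paper, and with it the theorem follows in two lines: Part (1) from \Cref{lem:ub-skg}, Part (2) from Part (2) of \Cref{thm:main-crg} applied to the very same $\mu_{r,n,L}$. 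Your secrecy calculation is the right one; it just has to be carried out on the paper's source rather than on a modified one that destroys the hardness.
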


In particular, our theorems yield a gap in the amount of communication that is almost \emph{exponentially large} if the number of rounds of communication is squeezed by a constant factor. Note that every communication protocol can be converted to a two-round communication protocol with an exponential blowup in communication - so in this sense our bound is close to optimal.
Prior to our work, no separations were known for any number of rounds larger than two!

\subsection{Brief Overview of Construction and Proofs}

Our starting point for constructing the source $\mu$ is the well-known ``pointer-chasing'' problem \cite{NW} 
used to study tradeoffs between rounds of interaction and communication complexity.
In (our variant of) this problem Alice and Bob get a series of permutations $\pi_1,\pi_2,\ldots,\pi_r: [n]\to[n]$ along with an initial pointer $i_0$ and their goal is to ``chase'' the pointers, i.e., compute $i_r$ where $i_j = \pi_j(i_{j-1})$ for every $j \in \{1,\dots, r\}$.
Alice's input consists of the odd permutations $\pi_1,\pi_3,\ldots,$ and Bob gets the initial pointer $i_0$ and the even permutations $\pi_2,\pi_4,\ldots$. The natural protocol to determine $i_r$ takes $r+1$ rounds of communication with the $j$th round involving the message $i_j$ (for $j = 0,\ldots,r$). Nisan and Wigderson show that any protocol with $r$ rounds of interaction requires $\Omega(n)$ bits of communication~\cite{NW}.

To convert the pointer chasing instance into a correlated source, we let the source include $2n$ strings $A_1,\ldots,A_n$ and $B_1,\ldots,B_n \in \{0,1\}^L$ where $(A_1,\ldots,B_n)$ is uniform in $\{0,1\}^{2nL}$ conditioned on $A_{i_r} = B_{i_r}$. Thus the source outputs $X = (\pi_1,\pi_3,\ldots; A_1,\ldots,A_n)$ and
$Y = (i_0,\pi_2,\pi_4,\ldots; B_1,\ldots,B_n)$ satisfy $A_{i_r} = B_{i_r}$ with
$i_j = \pi_j(i_{j-1})$ for every $j \in \{1, \dots, r\}$. (See \Cref{def:PCS} and \Cref{fig:PCS} for more details.) The natural protocol for the pointer chasing problem also turns into a natural protocol for CRG and SKG with $r+1$ rounds of communication, and our challenge is to show that protocols with few rounds cannot extract randomness.

The lower bound does not follow immediately from the lower bound for the pointer chasing problem --- and indeed we do not even give a lower bound for $r - O(1)$ rounds of communication. We explain some of the challenges here and how we overcome them. 

Our first challenge is that there is a low-complexity ``non-deterministic protocol'' for common randomness generation in our setting. The players somehow guess $i_r$ and then verify $A_{i_r} = B_{i_r}$ (by exchanging the first $\log1/\epsilon$ bits of these strings) and if they do, then they output $A_{i_r}$ and $B_{i_r}$ respectively. While the existence of a non-deterministic protocol does not imply the existence of a deterministic one, it certainly poses hurdles to the lower bound proofs. Typical separations between non-deterministic communication complexity and deterministic ones involve lower bounds such as those for ``set-disjointness''~\cite{KalyanaS,Razborov,BJKS} which involve different reasoning than the ``round-elimination'' arguments in \cite{NW}. 
Our lower bound would somehow need to combine the two approaches.

We manage to do so ``modularly'' at the expense of a factor of $2$ in the number of rounds of communication by introducing an intermediate ``pointer verification (PV)'' problem. In this problem Alice and Bob get permutations $\pi_1,\ldots,\pi_r$ (with Alice getting the odd ones and Bob the even ones) and additionally Bob gets pointers $i$ and $j$. Their goal is to decide if the final pointer $i_r$ equals $j$ given that the initial pointer $i_0$ is equal to $i$. 
The usefulness of this problem comes from the fact that we can reduce the common randomness generation problem to the complexity of the pointer verification problem on a specific (and natural) distribution: Specifically if PV is hard on
this distribution with $r'$ rounds of communication, then we can show (using the hardness of set disjointness as a black box) that 
the common randomness generation problem is hard with $r'-1$ rounds of communication. 

We thus turn to showing lower bounds for PV. We first note that we cannot expect a lower bound for $r$ rounds of communication: 
PV can obviously be solved in $r/2$ rounds of communication with Alice and Bob chasing both the initial and final pointers till they meet in the middle. We also note that one can use the lower bound from \cite{NW} as a black box to get a lower bound of $r/2 - 1$ rounds of communication for PV but it is no longer on the ``natural'' distribution we care about and thus this is not useful for our setting.

The bulk of this paper is thus devoted to proving an $r/2 - O(1)$ round lower bound for the PV problem on our distribution. We get this lower bound by roughly following the ``round elimination'' strategy of \cite{NW}. A significant challenge in extending these lower bounds to our case is that we have to deal with distributions where Alice and Bob's inputs are dependent. This should not be surprising since the CRG problem provides Alice and Bob with correlated inputs, and so there is resulting dependency between Alice and Bob even before any messages are sent. The dependency gets more complex as Alice and Bob exchange messages, and  
we need to ensure that the resulting mutual information is not correlated with the desired output, i.e., the PV value of the game. We do so by a delicate collection of conditions (see \Cref{def:dmixdef}) that allow the inputs to be correlated while guaranteeing sufficient independence to carry out a round elimination proof. See \Cref{sec:proofs} for details.


\paragraph{Organization of Rest of the Paper.}
In \Cref{sec:constr}, we present our construction of the distribution $\mu$ alluded to in \Cref{thm:main-crg} and \Cref{thm:main-skg}. In \Cref{sec:reduction} we reduce the task of proving communication lower bounds for CRG with few rounds to the task of proving lower bounds for distinguishing some distributions. We then introduce our final problem, the Pointer Verification problem, and the distribution on which we need to analyze it in \Cref{sec:PV_prob}. This section includes the statement of our main technical theorem about the pointer verification problem (\Cref{thm:pv-main}) and the proofs of \Cref{thm:main-crg} and \Cref{thm:main-skg} assuming this theorem.
Finally in \Cref{sec:proofs}, we prove \Cref{thm:pv-main}.

\section{Construction}
\label{sec:constr}

We start with some basic notation used in the rest of the paper.
For any positive integer $n$, we denote by $[n]$ the set $\{1, \dots, n\}$. We use $\log$ to denote the logarithm to the base $2$. For a distribution $D$ on a universe $\Omega$ we use the notation $X \sim D$ to denote a random variable $X$ sampled according to $D$. For any positive integer $t$, we denote by $D^t$ the distribution obtained by sampling $t$ independent identically distributed samples from $D$. We use the notation $X \indep Y$ to denote that $X$ is independent of $Y$ and $X\indep Y | Z$ to denote that $X$ and $Y$ are independent conditioned on $Z$. 
We denote by $\E_{X \sim D}[X]$ the expectation of $X$ and for an event $E \subseteq \Omega$, we denote by $\Pr_X[E]$  the probability of the event $E$. 
For $i \in \Omega$, $D_i$ (and sometimes $D(i)$) denotes the probability of the element $i$, i.e., $D_i = D(i) = \Pr_{X \sim D} [X = i]$. 
For distributions $P$ and $Q$ on $\Omega$, the total variation distance $\Delta(P,Q) \eqdef \frac 12 \sum_{i \in \Omega} |P_i - Q_i|$. The entropy of $X \sim P$ is the quantity $H(X) = \E_{X \sim P}[-\log P_X]$. The min-entropy of $X \sim P$ is the quantity $H_\infty(X) = \min_{x \in \Omega}\{-\log P_x\}$. For a pair of random variables $(X,Y) \sim P$, $P_X$ denotes the marginal distribution on $X$ and $P_{X|y}$ denotes the distribution of $X$ conditioned on $Y = y$. The conditional entropy $H(X|Y) \eqdef \E_{y \sim P_Y} [H(X_y)]$, where $X_y \sim P_{X|Y=y}$. The mutual information between $X$ and $Y$, denoted $I(X;Y)$, is the quantity $H(X) - H(X|Y)$. The conditional mutual information between $X$ and $Y$ conditioned on $Z$, denoted $I(X;Y | Z)$, is the quantity $\E_{z \sim P_Z}[H(X_z) - H(X_z | Y_z)]$ where $(X_z,Y_z) \sim P_{X,Y | Z= z}$.  
We use standard properties of entropy and information such as the Chain rules and the fact ``conditioning does not increase entropy''.
For further background material on information theory and communication complexity, we refer the reader to the books \cite{cover2012elements} and \cite{kushilevitz_nisan} respectively.

We start by describing the family of distributions $\mu_{r,n,L}$ that we use to prove \Cref{thm:main-crg} and \Cref{thm:main-skg}. For a positive integer $n$, we let $S_n$ denote the family of all permutations of $[n]$. 

\begin{defn}[The Pointer Chasing Source $\mu_{r,n,L}$]
\label{def:PCS}
	For positive integers $r$, $n$ and $L$, the support of $\mu = \mu_{r,n,L}$ is 
	$(S_n^{\lceil r/2\rceil} \times \{0,1\}^{nL}) \times ([n]\times S_n^{\lfloor r/2 \rfloor}\times \{0,1\}^{nL})$. Denoting $X = (\pi_1,\pi_3,\ldots,\pi_{2\lceil r/2 \rceil-1},A_1,\ldots,A_n)$ and $Y = (i,\pi_2,\pi_4,\ldots,\pi_{2\lfloor r/2 \rfloor},B_1,\ldots,B_n)$, a sample $(X,Y) \sim \mu$ is drawn as follows:
	\begin{itemize}
		\item $i \in [n]$ and $\pi_1,\ldots,\pi_r \in S_n$ are sampled uniformly and independently. 
		\item Let $j = \pi_r(\pi_{r-1}(\cdots \pi_1(i)\cdots))$. 
		\item $A_j = B_j \in \{0,1\}^L$ is sampled uniformly and independently of $i$ and $\pi$'s. 
	    \item For every $k \ne j$, $A_k\in \{0,1\}^L$ and $B_k\in \{0,1\}^L$ are sampled uniformly and independently.
	\end{itemize}
    See \Cref{fig:PCS} for an illustration of the inputs to the Pointer Chasing Source.
\end{defn}

\begin{figure}[h]
	\hspace{1in}
	\includegraphics[scale=0.5]{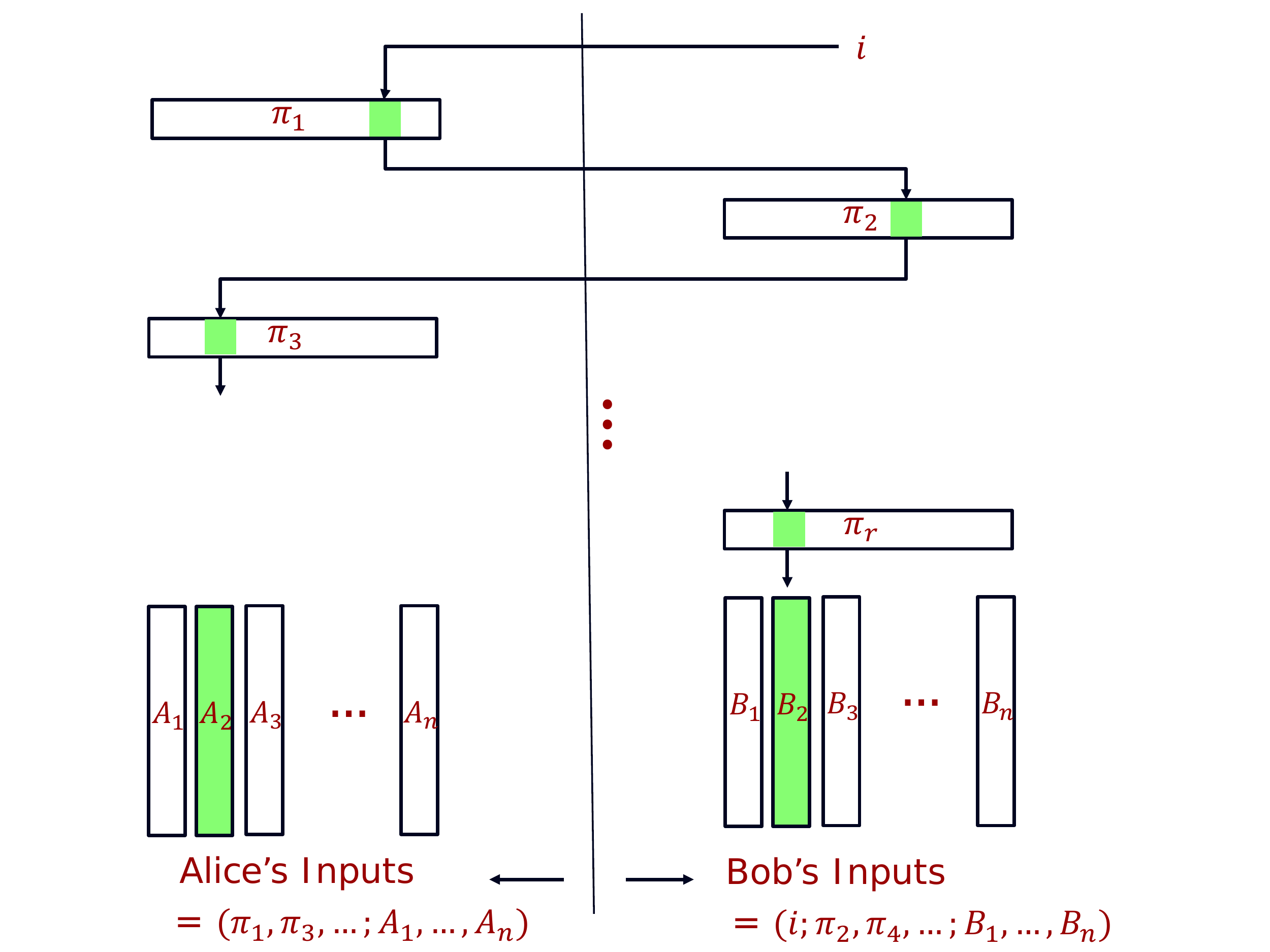}
	\caption{The Pointer Chasing Source}\label{fig:PCS}
\end{figure}

Informally, a sample from $\mu$ contains a common hidden block of randomness $A_j = B_j \in \{0,1\}^L$ that Alice and Bob can find by following a sequence of pointers, where Alice holds the odd pointers in the sequence and Bob holds the even pointers. The next lemma gives (the obvious) upper bound on the $r$-round communication needed to generate common randomness from $\mu$.

\begin{lem}[Upper bound on $r$-round communication of SKG]
	\label{lem:ub-skg}
	For every $r$, $n$ and $L$, there exists an $(r+1,\lceil \log n \rceil)$-protocol for $(L,0)$-SKG (and hence also for $(L,0)$-CRG) from $\mu_{r,n,L}$ with Bob speaking in the first round. 
\end{lem}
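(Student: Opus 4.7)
The approach is to use the most natural protocol suggested by the structure of $\mu_{r,n,L}$: literally chase the pointer. The plan is to have Bob open by sending $i_0 \eqdef i$ in round $1$; then Alice (who holds $\pi_1$) computes and sends $i_1 \eqdef \pi_1(i_0)$ in round $2$; then Bob (who holds $\pi_2$) computes and sends $i_2 \eqdef \pi_2(i_1)$ in round $3$; and so on, alternating until after round $r+1$ both parties know $j \eqdef i_r = \pi_r(\pi_{r-1}(\cdots \pi_1(i)\cdots))$. Each message is a single element of $[n]$, taking at most $\lceil \log n\rceil$ bits, so the total communication is $(r+1)\lceil \log n\rceil$. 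The outputs are $K_A \eqdef A_j$ and $K_B \eqdef B_j$. (The convention that Alice formally speaks first in an $(r,c)$-protocol can be preserved at no communication cost by letting her send an empty message in round $1$.)

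Correctness and the min-entropy claim are both immediate from \Cref{def:PCS}. By construction $A_j = B_j$ holds pointwise on the support of $\mu_{r,n,L}$, so $K_A = K_B$ with probability $1$ and the scheme is $(L,0)$-correct. Moreover, the common value $A_j = B_j$ is drawn uniformly from $\{0,1\}^L$, so the marginal of each of $K_A, K_B$ is uniform on $\{0,1\}^L$ and therefore $H_\infty(K_A) = H_\infty(K_B) = L$.

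The only remaining thing to check is the SKG secrecy condition. The key observation is that the transcript $\Pi = (i_0, i_1, \ldots, i_r)$ is a deterministic function of $(i, \pi_1, \ldots, \pi_r)$ alone. By the last bullet of \Cref{def:PCS}, the common string $A_j = B_j$ is sampled uniformly in $\{0,1\}^L$ independently of $i$ and of $\pi_1, \ldots, \pi_r$, and hence independently of $\Pi$. Therefore $I(\Pi; K_A) = I(\Pi; K_B) = 0$, which trivially meets the $o(1)$ secrecy requirement for SKG and in particular also implies $(L,0)$-CRG.

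There is no real obstacle here: each of the three properties to verify (agreement, min-entropy, and independence of the transcript from the key) reduces to a single line that reads directly off the construction of $\mu_{r,n,L}$ in \Cref{def:PCS}. The only mild bookkeeping issue is aligning the "Alice starts" convention with Bob speaking first, which is handled by the empty-message device mentioned above.
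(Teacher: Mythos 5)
Your proof is correct and follows essentially the same route as the paper's: the same pointer-chasing protocol, the same observation that $A_j=B_j$ pointwise gives zero-error agreement, and the same independence-of-transcript argument for the secrecy bound. The only addition is your explicit min-entropy calculation and the empty-message device for the "Alice starts" convention (the paper sidesteps this by simply stating Bob speaks first), and you also correctly read the total communication as $(r+1)\lceil\log n\rceil$, which matches Theorem~\ref{thm:main-crg} rather than the $\lceil\log n\rceil$ appearing in the lemma's statement.
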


\begin{proof}
	The protocol $\Pi$ is the obvious one in which Bob and Alice alternate by sending a pointer to each other starting with $i$ and culminating in $j$, and the randomness they ``agree on'' is $A_j = B_j$. 
	
	Formally, for $t \in [r]$, let $i_{t} = \pi_t(i_{t-1})$ with $i_0 = i$. In odd round $t+1$, Bob sends $i_t$ to Alice and in even round $t+1$, Alice sends $i_t$ to Bob. At the end of $r+1$ rounds of communication Alice outputs $A_{i_{r}}$ and Bob outputs $B_{i_{r}}$.
	
	Note that by the construction of $\mu$, we have that $i_{r} = j$ and $A_j = B_j$. Note further that at the beginning of the $(t+1)$st round of communication both  Alice and Bob know $i_{t-1}$. Furthermore if $t+1$ is odd, then Bob also knows $\pi_{t}$ and hence can compute $i_t = \pi_{t}(i_{t-1})$ (and similarly Alice knows her message in even rounds). 
	
	Thus we conclude that the above is a valid $(r+1,\lceil \log n\rceil)$-protocol for $(L,0)$-CRG. Furthermore since $A_{i_r} = B_{i_r}$ is independent of $i_0,\ldots,i_r$ it follows that $I((i_0,\ldots,i_r); A_{i_r}) = I(\Pi;K_A) = 0$ (and similarly for $I(\Pi;K_B)$) and so this is also a valid protocol $(L,0)$-SKG.
\end{proof}

In the rest of the paper we show that no $r/2 - O(1)$ round protocol can solve CRG from $\mu_{r,n,L}$ with non-trivial communication.

\section{Related Indistinguishability Problems}\label{sec:reduction}

Our lower bound on the number of rounds needed to generate common randomness comes from an ``indistinguishability argument''. We show that to protocols with a small number of rounds and small amount of communication, the distribution $\mu$ is indistinguishable from the distribution $\mu_X \times \mu_Y$, where Alice and Bob's inputs are independent. Using the well-known fact that generating $L$ bits of common randomness essentially requires $L$ bits of communication in the absence of correlated inputs, this leads us to conclude that CRG is hard with limited number of rounds of communication. 

In this section we simply set up the stage by defining the notion of indistinguishability and connecting it to the task of common randomness generation, leaving the task of proving the indistinguishability to later sections.

\subsection{The Main Distributions and Indistinguishability Claims}

We start by defining the indistinguishability of inputs to protocols.

\begin{definition}
We say that two distributions $D_1$ and $D_2$ on $(X,Y)$ are {\em $\epsilon$-indistinguishable} to a protocol $\Pi$ if the distributions of transcripts (the sequence of messages exchanged by Alice and Bob) generated when $(X,Y) \sim D_1$ has total variation distance at most $\epsilon$ from the distribution of transcripts when $(X,Y) \sim D_2$. 

We say that distributions $D_1$ and $D_2$ are {\em $(\epsilon,c,r)$-indistinguishable} if they are $\epsilon$-indistinguishable to every $(r,c)$-protocol $\Pi$ using public randomness. Conversely, we say that the distributions $D_1$ and $D_2$ are {\em $(\ep, c, r)$-distinguishable} if they are not $(\ep, c, r)$-indistinguishable.
\end{definition}

Fix $r,n,L$ and let $\mu = \mu_{r,n,L}$. Now let $\mu_X$ denote the marginal distribution of $X$ under $\mu$, i.e., $X = (\pi_1,\pi_3,\ldots,\pi_{2\lceil r/2 \rceil-1},A_1,\ldots,A_n)$ have all coordinates chosen independently and uniformly from their domains. Similarly let $\mu_Y$ denote the marginal on $Y$, and let $\mu_X \times \mu_Y$ denote the distribution where $X \sim \mu_X$ and $Y \sim \mu_Y$ are chosen independently. 

Our main technical result (\Cref{thm:pv-main} and in particular its implication \Cref{lem:main-ind}) shows that $\mu$ and $\mu_X \times \mu_Y$ are $(\epsilon,r/2-O(1),n/\poly \log n)$-indistinguishable, even to protocols with common randomness. In the rest of this section, we explain why this rules out common randomness generation.

\subsection{Reduction to Common Randomness Generation}

\begin{proposition}
\label{prop:priv-impos}
	There exists a constant $\eta > 0$ such that for every $r,r',n,L,\ell,t$ and $\eps<1$, there is no $(r',\eta \ell-\log(1/1-\eps))$-protocol for $(\ell,\epsilon)$-CRG from $\mu_X^t \times \mu_Y^t$, where $\mu = \mu_{r,n,L}$ with $\mu_X$ and $\mu_Y$ being its marginals.
\end{proposition}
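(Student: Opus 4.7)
My plan is to reduce Proposition~\ref{prop:priv-impos} to the classical \emph{agreement distillation} lower bound (a special case of which is attributed to \cite{canonne2017communication} in Section~1.2): in the absence of any correlation, generating $\ell$ bits of common randomness with agreement probability $1-\epsilon$ requires at least $\ell - \log(1/(1-\epsilon))$ bits of communication. Since $\mu_X^t$ and $\mu_Y^t$ are independent marginals, under $\mu_X^t \times \mu_Y^t$ Alice's input $X$ and Bob's input $Y$ are independent, and from the protocol's perspective each party's input can be absorbed into its own private randomness, reducing to exactly this zero-correlation setting. (As in the standard CRG literature, I interpret the CRG protocol as using private randomness only; allowing public randomness would let the parties trivially output the public string as the key and make the problem vacuous.)

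Concretely, I would suppose for contradiction that $\Pi$ is an $(r', c)$-protocol producing keys $K_A = f_A(X, r_A, \Pi)$ and $K_B = f_B(Y, r_B, \Pi)$ with $\min\{H_\infty(K_A), H_\infty(K_B)\} \geq \ell$ and $\Pr[K_A = K_B] \geq 1-\epsilon$. Since $X, Y, r_A, r_B$ are mutually independent at the outset, the standard rectangle (``cut-and-paste'') property of interactive communication gives that, conditioned on any transcript $\Pi = \pi$, the pairs $(X, r_A)$ and $(Y, r_B)$ remain independent. Consequently $K_A \indep K_B \mid \Pi$.

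A short counting argument then finishes the proof. For any transcript $\pi$,
\[
\Pr[K_A = K_B \mid \Pi = \pi] \;=\; \sum_k \Pr[K_A = k \mid \Pi = \pi]\Pr[K_B = k \mid \Pi = \pi] \;\leq\; \max_k \Pr[K_A = k \mid \Pi = \pi] \;\leq\; \frac{2^{-\ell}}{\Pr[\Pi = \pi]},
\]
where the first equality uses $K_A \indep K_B \mid \Pi$ and the last inequality uses $\Pr[K_A = k, \Pi = \pi] \leq \Pr[K_A = k] \leq 2^{-\ell}$ from the min-entropy hypothesis. Averaging over $\pi$ yields
\[
1 - \epsilon \;\leq\; \Pr[K_A = K_B] \;\leq\; 2^{-\ell} \cdot |\mathrm{supp}(\Pi)| \;\leq\; 2^{c - \ell},
\]
using that $\Pi$ has at most $c$ bits. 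Rearranging gives $c \geq \ell - \log(1/(1-\epsilon))$, contradicting $c \leq \eta \ell - \log(1/(1-\epsilon))$ for any $\eta \leq 1$; so taking $\eta = 1$ (or any smaller positive constant) suffices.

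The only nonroutine step is verifying the rectangle property for multi-round interactive protocols with private randomness, but this is completely standard (by induction on the number of messages sent, each message being a function of one party's view alone). Note that the final bound is uniform in the number of rounds $r'$ and the sample count $t$, which is why neither parameter appears in the conclusion.
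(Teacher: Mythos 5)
Your proof is correct, but it takes a different route from the paper in the sense that the paper does not prove the statement at all: it simply observes that the claim is folklore and cites \cite[Theorem~2.6]{canonne2017communication} with $\rho=0$ (the private-coin case of their agreement-distillation lower bound), whereas you reconstruct that folklore argument from scratch. Your reconstruction is the standard one and is sound: under $\mu_X^t\times\mu_Y^t$ the two parties' inputs (together with their private coins) are independent, the rectangle property for private-coin protocols on product inputs gives $K_A\indep K_B$ conditioned on the transcript, and the min-entropy bound $\Pr[K_A=k]\le 2^{-\ell}$ plus the fact that a $c$-bit protocol has at most $2^{c}$ transcripts yields $1-\eps\le 2^{c-\ell}$, i.e.\ $c\ge \ell-\log(1/(1-\eps))$, uniformly in $r'$ and $t$. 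Your reading of the model (CRG protocols use private randomness only) matches the paper, which makes this explicit both in its $\rho=0$ remark here and in the proof of the subsequent proposition. What your version buys is self-containedness and an explicit (essentially optimal) constant; what the paper's citation buys is brevity and compatibility with the imperfectly-shared-randomness framework of \cite{canonne2017communication}. One trivial point of care: with $\eta=1$ your conclusion $c\ge\ell-\log(1/(1-\eps))$ only matches the budget $\eta\ell-\log(1/(1-\eps))$ at the boundary rather than strictly contradicting it, so one should take $\eta<1$ (as you already allow), which is harmless since the proposition only asserts existence of some constant $\eta>0$.
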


\begin{proof}
	This is essentially folklore. For instance it follows immediately from  \cite[Theorem 2.6]{canonne2017communication} using $\rho=0$ (which corresponds to private-coin protocols). 
\end{proof}

\begin{proposition}	
  \label{prop:need-dist}
  There is an absolute constant $\xi$ such that the following holds.
	Let $\eta$ be the constant from \Cref{prop:priv-impos}.
	If there exists an $(r',c)$-protocol that solves the $(\ell,1 - \gamma)$-CRG problem from $\mu = \mu_{r,n,L}$ with $c < \eta(\ell-3) - \log 1/\gamma$, then there exists some positive integer $t$ for which $\mu^t$
	and $\mu_X^t \times \mu_Y^t$ are  $(\gamma/10,r'+1,c+ \xi \log1/\gamma)$-distinguishable.
\end{proposition}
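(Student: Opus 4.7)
The plan is a proof by contrapositive. I assume that $\mu^t$ and $\mu_X^t \times \mu_Y^t$ are $(\gamma/10, r'+1, c + \xi\log(1/\gamma))$-indistinguishable for every $t$ (in particular $t=1$), and derive a contradiction with Proposition~\ref{prop:priv-impos}.

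First I build an $(r'+1)$-round protocol $\Pi^+$ from the hypothesized $\Pi$. Using public randomness, draw a pairwise-independent hash $h : \{0,1\}^\ell \to \{0,1\}^s$ with $s = \Theta(\log(1/\gamma))$. Run $\Pi$; without loss of generality Bob speaks in the last round of $\Pi$ (otherwise swap roles throughout). By the time Bob is about to send his round-$r'$ message he knows the entire $\Pi$-transcript and therefore his key $K_B$, so he can append $h(K_B)$ to that message, costing $s$ extra bits but \emph{no} extra round. The only new round is $r'+1$, in which Alice sends the single check bit $b := \mathbf{1}[h(K_A) = h(K_B)]$, which she can compute from her own $K_A$ and the hash she just received. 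Thus $\Pi^+$ is an $(r'+1, c + s + 1)$-protocol, well within the budget $c + \xi\log(1/\gamma)$ for $\xi$ a suitable absolute constant. Since $b$ is part of the transcript of $\Pi^+$, the indistinguishability hypothesis yields $|\Pr_{\mu}[b=1] - \Pr_{\mu_X\times\mu_Y}[b=1]| \leq \gamma/10$; the CRG guarantee gives $\Pr_\mu[b=1] \geq \Pr_\mu[K_A = K_B] \geq \gamma$, so $\Pr_{\mu_X\times\mu_Y}[b=1] \geq 9\gamma/10$, and pairwise independence of $h$ then forces $\Pr_{\mu_X\times\mu_Y}[K_A = K_B] \geq 9\gamma/10 - 2^{-s}$, which for $s$ a suitable constant multiple of $\log(1/\gamma)$ is at least $(9/10 - o(1))\gamma$.

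To invoke Proposition~\ref{prop:priv-impos} I need min-entropy $\ell$ on $\mu_X\times\mu_Y$, but $\Pi$'s CRG guarantee provides this only on $\mu$. I patch this cleanly with a one-time-pad trick: define $\Pi'$ to be $\Pi$ with Alice and Bob drawing a fresh uniform $U \in \{0,1\}^\ell$ from public randomness and outputting $K_A \oplus U$ and $K_B \oplus U$ respectively. Since $U$ is shared and not transmitted, $\Pi'$ is still an $(r', c)$-protocol; the outputs have min-entropy exactly $\ell$ under any input distribution (one-time-pad by an independent uniform string is uniform); and the agreement event is unchanged. Applying Proposition~\ref{prop:priv-impos} to $\Pi'$ on $\mu_X\times\mu_Y$ gives $\Pr_{\mu_X\times\mu_Y}[K_A = K_B] \leq 2^{c - \eta\ell}$, which under the hypothesis $c < \eta(\ell-3) - \log(1/\gamma)$ is at most $2^{-3\eta}\gamma$. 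Picking $\xi$ (and hence $s$) large enough makes $(9/10 - o(1))\gamma$ strictly larger than $2^{-3\eta}\gamma$ for the absolute constant $\eta$, giving the desired contradiction and hence the proposition.

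The main obstacle is the tight one-extra-round budget: a naive agreement check wants two new messages, Bob's hash followed by Alice's verdict, but only $r'+1$ rounds are allowed. The key maneuver is piggybacking Bob's hash onto his already-planned final $\Pi$-message, which is free in rounds and cheap in bits, so only Alice's verdict consumes the extra round. A secondary but easily resolved obstacle, the lack of min-entropy on $\mu_X\times\mu_Y$, is handled by the public one-time pad, which supplies min-entropy $\ell$ for free and lets Proposition~\ref{prop:priv-impos} apply verbatim.
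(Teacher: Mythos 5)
Your hash-check construction $\Pi^+$ is essentially the one the paper uses (piggyback the hash on the last message, one extra round for the verdict bit), and checking only the indicator $\One[K_A = K_B]$ rather than the full joint distribution $(K_A, I)$ as the paper does would indeed be a genuine simplification. Unfortunately the one-time-pad trick that makes this simplification possible is the step where the proof breaks.

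The one-time pad $K'_A = K_A \oplus U$, $K'_B = K_B \oplus U$ requires $U$ to be \emph{public} (shared) randomness, as you explicitly write. But \Cref{prop:priv-impos} is only valid for private-coin protocols: on a product source, players with access to shared randomness can trivially output that shared randomness as the key, achieving any agreement probability and any min-entropy with zero communication, which would falsify the proposition. (The paper's proof of \Cref{prop:priv-impos} is via \cite{canonne2017communication} with $\rho=0$, i.e.\ the private-coin case, and the paper's own Case 3 argument is careful to emphasize that running $\Pi$ on $\mu_X^t \times \mu_Y^t$ ``involves only private randomness.'') So \Cref{prop:priv-impos} cannot be applied to your $\Pi'$, and the claimed contradiction evaporates. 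This is precisely the obstacle the paper's Case 3 is built to circumvent: instead of padding, it modifies $\Pi$ using only \emph{private} coins, replacing over-weight key values with fresh private uniform strings. The price of that fix is that it degrades the agreement probability, and one must separately verify (Cases 1 and 2) that this degradation is small — which is why the paper has to compare the full distributions $D^A_1$ versus $D^A_2$ of $(K_A, I)$ and not merely the agreement probability, and why the auxiliary distinguisher $T(K_A, I)$ appears. Your streamlining of the case analysis is therefore not available without first finding a private-coin substitute for the pad.

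A smaller issue: the closing inequality ``picking $\xi$ large enough makes $(9/10 - o(1))\gamma$ strictly larger than $2^{-3\eta}\gamma$'' does not follow. The quantity $\eta$ from \Cref{prop:priv-impos} is an unspecified positive constant; if $\eta < \tfrac{1}{3}\log\tfrac{10}{9}$ then $2^{-3\eta} > 9/10$ regardless of $s$ or $\xi$, and the claimed contradiction fails. This particular slack is fixable by a constant adjustment to the hypothesis $c < \eta(\ell-3) - \log(1/\gamma)$ (and indeed the paper's own Case 3 arithmetic has a comparable $\log 12$ slack), but your argument as written asserts something false about an arbitrary positive constant. The shared-randomness issue is the fundamental one; this constant issue is secondary.
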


\begin{proof}
	Let $\Pi$ be an $(r',c)$ protocol with private randomness for $(\ell,1-\gamma)$-CRG from $\mu$ and let $D_1$ denote the distribution of $K_A$ conditioned on $K_A = K_B$. Let $t$ be the number of samples of $\mu$ used by $\Pi$. Let $I = \One[K_A = K_B]$ be the indicator variable determining if $K_A = K_B$. Let $D^A_1$ be the distribution of $(K_A,I)$ when $\Pi$ is run on samples from $\mu^t$. Let $D^A_2$ be the distribution of the $(K_A,I)$ when $\Pi$ is run on samples from $\mu_X^t \times \mu_Y^t$. Define $D^B_1$ and $D^B_2$ analogously. We distinguish between the cases where $\Delta(D^A_1,D^A_2)$
	and $\Delta(D^B_1,D^B_2)$ are both small from the cases where one of them is large.

	\medskip 
	
	\noindent {\sf Case 1: $D^A_1$ is $\gamma/4$-far from $D^A_2$ (in total variation distance).} We argue that in this case, $\mu^t$ and $\mu_X^t \times \mu_Y^t$ are distinguishable. Let $T$ be the optimal distinguisher of $D^A_1$ from $D^A_2$ (i.e., $T$ is a $0/1$ valued function with $\E_{(K_A,I) \sim D^A_1}[T(K_A,I)] - \E_{(K_A,I) \sim D^A_2}[T(K_A,I)] \geq \gamma/4$).
	Let $\alpha$ denote $\E_{(K_A,I) \sim D^A_2}[T(K_A,I)]$. 
	We now describe a protocol $\Pi'$ which uses public randomness and augments $\Pi$ by including a bit $I'$ (which is usually equal to $I$) and $T(K_A,I')$ as part of the transcript. We consider two subcases: (1) If Bob is the last speaker in $\Pi$, then $\Pi'$ executes $\Pi$ and then at the conclusion of $\Pi$, Bob sends a random hash $h_B = h(K_B)$ which is $O(\log 1/\gamma)$ bits long (so that for $K_A \ne K_B$ we have $\Pr_h [h(K_A) = h(K_B)]\leq \gamma/20$). Alice then sends $I' = \One[h(K_A) = h_B]$ and the bit $b_{I'}  = T(K_A,I')$. (2) If Alice is the last speaker in $\Pi$, then $\Pi'$ executes $\Pi$ and then Alice sends $h_A = h(K_A)$ to Bob, as well as $b_0 = T(K_A,0)$ and $b_1 = T(K_A,1)$. Bob then sends $I' = \One[h_A = h(K_B)]$ and $b_{I'}$. 
	
	Note that in both cases $\Pi'$ has $r'+1$ rounds of communication and the total number of bits of communucation is $c + O(\log 1/\gamma)$. We now show that $\Pi'$ distinguishes $\mu^t$ from  $\mu_X^t \times \mu_Y^t$ with probability $\Omega(\gamma)$. To see this note that $\Pr_{(K_A,I) \sim D^A_1}[b_{I'} = 1] \geq \Pr_{(K_A,I) \sim D^A_1}[T(K_A,I) = 1] - \Pr_h[I' \ne \One[K_A = K_B]] \geq (\alpha + \gamma/4) - \gamma/20 = \alpha + \gamma/5$. 
	On the other hand we also have $\Pr_{(K_A,I) \sim D^A_2}[b_{I'} = 1] \leq \Pr_{(K_A,I) \sim D^A_2}[T(K_A,I) = 1] + \Pr_h[I' \ne \One[K_A = K_B]] \leq \alpha + \gamma/20$. We conclude
	that $\Pr_{(K_A,I) \sim D^A_1}[b_{I'} = 1] - \Pr_{(K_A,I) \sim D^A_2}[b_{I'} = 1] \geq \gamma/5 - \gamma/20 \geq \gamma/10$. And since $b_{I'}$ is a part of the transcript of $\Pi'$ we conclude that the two distributions are $\gamma/10$-distinguished by $\Pi'$.
	
	\medskip 
	
	\noindent {\sf Case 2: $D^B_1$ is $\gamma/4$-far from $D^B_2$.} This is similar to the above and yields that $\mu^t$
	and $\mu_X^t \times \mu_Y^t$ are $(\gamma/10,r'+1,c+ O(\log1/\gamma))$-distinguishable.
	
		\medskip 
	
	\noindent {\sf Case 3: $\Delta(D^A_1,D^A_2) \leq \gamma/4$ and $\Delta(D^B_1,D^B_2) \leq \gamma/4$.} We argue that this case can not happen since this allows a low-communication protocol to solve CRG with private randomness, thereby contradicting \Cref{prop:priv-impos}. The details are the following.
	
	Our main idea here is to run $\Pi$ on $\mu_X^t \times \mu_Y^t$ (which, being a product distribution involves only private randomness). The proximity of $D^A_1$ to $D^A_2$ implies that the probability that $K_A = K_B$ when $\Pi$ is run on $\mu_X^t \times \mu_Y^t$ is at least $3\gamma/4$ (since the probability that $K_A = K_B$ on $\mu^t$ is at least $\gamma$ and the probability that $I = \One[K_A = K_B]$ is different under $\mu^t$ than under $\mu_X^t \times \mu_Y^t$ is at most $\gamma/4$). But we are not done since the min-entropy of $K_A$ or $K_B$ when $\Pi$ is run on $\mu_X^t \times \mu_Y^t$ might not be lower-bounded by $\ell$. So we modify $\Pi$ to get a protocol $\Pi'$ as follows: Run $\Pi$ and let $(K_A,K_B)$ be the output of $\Pi$. (The output of $\Pi'$ will be different as we see next.)  If the probability of 
	outputting $K_A$ is more than $4 \cdot 2^{-\ell}$ then let $K'_A$ be a uniformly random string in $\{0,1\}^\ell$, else let $K'_A = K_A$. Similarly if the probability of 
	outputting $K_B$ is more than $4 \cdot 2^{-\ell}$ then let $K'_B$ be a uniformly random string in $\{0,1\}^\ell$, else let $K'_B = K_B$. (Note that when $K'_A \ne K_A$ then $K'_A$ and $K'_B$ are independent.) Let $(K'_A,K'_B)$ be the outputs of $\Pi'$. We claim below that $\Pi'$ solves the $(\ell-3,1-\gamma/12)$-CRG from $\mu_X^t \times \mu_Y^t$ which contradicts \Cref{prop:priv-impos} if $c < \eta(\ell-3) - \log(12/\gamma)$. First note that by design the probability of outputting any fixed output $k'_A$ is at most $4\cdot 2^{-\ell} + 2^{-\ell} < 2^{-(\ell-3)}$. (If $\Pr[K_A = k'_A] \geq 4\cdot 2^{-\ell}$ then $\Pr[K'_A = k'_A] \leq 2^{-\ell}$, else $\Pr[K'_A = k'_A] \leq \Pr[K_A = k'_A] + 2^{-\ell}$.) It remains to see that $\Pr[K'_A = K'_B] \geq \gamma/12$. First note that $\Pr[K_A \ne K'_A] \leq \gamma/3$. This is so since every $k'_A$ such that $\Pr[K_A = k'_A] \geq 4 \cdot 2^{-\ell}$ contributes at least $\Pr[K_A = k'_A] - 2^{-\ell} \geq (3/4)\cdot \Pr[K_A = k'_A]$ to $\Delta(D^A_1,D^A_2)$ (the probability of $k'_A$ on $\mu^t$ is at most $2^{-\ell}$). Thus using $\Delta(D^A_1,D^A_2) \leq \gamma/4$, we conclude $\Pr[K_A \ne K'_A] \leq (4/3)\Delta(D^A_1,D^A_2) \leq \gamma/3$. But now we have $\Pr[K'_A = K'_B] \geq \Pr[K_A = K_B] - (\Pr[K_A \ne K'_A]+\Pr[K_B \ne K'_B]) \geq 3\gamma/4 - 2\gamma/3 = \gamma/12$.

\end{proof}

\subsection{Reduction to the Case $t=1$}

Next we show that we can work with the case $t=1$ without loss of generality. Roughly the intuition is that all permutations look the same, and so chasing one series of pointers $\pi_1,\ldots,\pi_r$ is not harder than chasing a sequence of $t$ pointers of the form $(\pi'_{1,\tau}\ldots,\pi'_{r,\tau})_{\tau\in [t]}$. Informally, even if the players in latter problem are given the extra information $(\pi'_{\ell,\tau})^{-1}\pi_\ell$, for every $\ell \in [r]$ and $\tau \in [t]$, they still have to effectively chase the pointers $\pi_1,\ldots,\pi_r$. This intuition is formalized in the reduction below.

\begin{proposition}
	\label{prop:t-removal}
	Fix $r,n,L$ and let $\mu = \mu_{r,n,L}$ and $\mu_X$ and $\mu_Y$ be its marginals. If there exists $\epsilon,r',c,t$ such that $\mu^t$ and $\mu_X^t \times \mu_Y^t$ are $(\epsilon,r',c)$-distinguishable, then $\mu'= \mu_{r,n,Lt}$ and $(\mu')_X \times (\mu')_Y$ are $(\epsilon,r',c)$-distinguishable.
\end{proposition}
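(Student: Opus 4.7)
The plan is to construct a local reduction that converts any $(r',c)$-distinguisher for $(\mu^t,\mu_X^t\times\mu_Y^t)$ into an $(r',c)$-distinguisher for $(\mu',\mu_X'\times\mu_Y')$ with the same advantage $\epsilon$. The reduction uses public randomness only and performs no communication, ``expanding'' a single sample of string-width $Lt$ into $t$ samples of string-width $L$.

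Concretely, Alice and Bob sample independent uniform permutations $\sigma_{\ell,\tau}\in S_n$ for $\ell\in\{0,1,\ldots,r\}$ and $\tau\in[t]$ using the public randomness. Writing each $A_k\in\{0,1\}^{Lt}$ as a concatenation $(A_k^{(1)},\ldots,A_k^{(t)})$ with $A_k^{(\tau)}\in\{0,1\}^L$ (and similarly for $B_k$), for each $\tau$ Alice locally sets
\[
\tilde\pi_\ell^{(\tau)}=\sigma_{\ell,\tau}\,\pi_\ell\,\sigma_{\ell-1,\tau}^{-1}\text{ for odd }\ell,\qquad \tilde A_k^{(\tau)}=A^{(\tau)}_{\sigma_{r,\tau}^{-1}(k)},
\]
while Bob sets $\tilde\pi_\ell^{(\tau)}$ analogously for even $\ell$, together with $\tilde i^{(\tau)}=\sigma_{0,\tau}(i)$ and $\tilde B_k^{(\tau)}=B^{(\tau)}_{\sigma_{r,\tau}^{-1}(k)}$. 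They then run $\Pi$ on $(\tilde X,\tilde Y)=((\tilde X^{(\tau)},\tilde Y^{(\tau)}))_{\tau\in[t]}$. Since no messages are exchanged during the expansion step, the resulting protocol $\Pi'$ remains an $(r',c)$-protocol.

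A short telescoping calculation gives $\tilde\pi_\ell^{(\tau)}\circ\cdots\circ\tilde\pi_1^{(\tau)}(\tilde i^{(\tau)})=\sigma_{\ell,\tau}(i_\ell)$, where $i_\ell=\pi_\ell(i_{\ell-1})$ and $i_0=i$; in particular the final pointer in instance $\tau$ is $\sigma_{r,\tau}(j)$ for $j=i_r$, so $\tilde A_{\sigma_{r,\tau}(j)}^{(\tau)}=A_j^{(\tau)}$ and $\tilde B_{\sigma_{r,\tau}(j)}^{(\tau)}=B_j^{(\tau)}$. The core of the proof is then the \emph{distributional equivalence}: when $(X,Y)\sim\mu'$ the tuple $(\tilde X,\tilde Y)$ is distributed as $\mu^t$, and when $(X,Y)\sim\mu_X'\times\mu_Y'$ it is distributed as $\mu_X^t\times\mu_Y^t$. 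I will verify this by a counting argument, showing that for every $(\hat x,\hat y)$ in the support of the target distribution, the number of $(X,Y,(\sigma_{\ell,\tau}))$ in the source support that maps to $(\hat x,\hat y)$ equals the same constant $n\cdot(n!)^r\cdot((n-1)!)^t$. The count proceeds by first freely choosing $\pi_1,\ldots,\pi_r$ and $i$; then for each $\tau$, $\sigma_{0,\tau}$ is constrained by $\sigma_{0,\tau}(i)=\tilde i^{(\tau)}$ (leaving $(n-1)!$ choices) and the chain relations $\sigma_{\ell,\tau}=\hat\pi^{(\tau)}_\ell\,\sigma_{\ell-1,\tau}\,\pi_\ell^{-1}$ then force $\sigma_{\ell,\tau}$ for all $\ell\ge 1$, after which $A_k^{(\tau)}=\hat A^{(\tau)}_{\sigma_{r,\tau}(k)}$ and $B_k^{(\tau)}=\hat B^{(\tau)}_{\sigma_{r,\tau}(k)}$ are determined. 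In the $\mu'$ case the required identity $A_j=B_j$ is automatic because $\hat A^{(\tau)}_{\sigma_{r,\tau}(j)}=\hat B^{(\tau)}_{\sigma_{r,\tau}(j)}$ holds already in the $\mu^t$ target.

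The main obstacle to anticipate is precisely this distributional equivalence in the product case, since the $t$ transformed permutations $\tilde\pi_\ell^{(1)},\ldots,\tilde\pi_\ell^{(t)}$ are all built by conjugating the \emph{same} underlying $\pi_\ell$ by different $\sigma$'s, and a priori look correlated. The counting argument shows that the fresh randomness in the $\sigma_{\ell,\tau}$'s injects exactly the right amount of independence to restore the product structure after marginalizing over the public randomness. Once the distributional equivalence is established, the distribution of $\Pi'$'s transcript on each of the two source distributions equals that of $\Pi$'s transcript on the corresponding target, so the total variation distance is preserved and remains at least $\epsilon$, completing the proof.
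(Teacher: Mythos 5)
Your proposal is correct and is essentially the paper's proof: the same zero-communication public-coin reduction that conjugates each $\pi_\ell$ by fresh permutations $\sigma_{\ell,\tau}$, relabels the initial pointer via $\sigma_{0,\tau}$ and the blocks via $\sigma_{r,\tau}$, and then runs the given $(r',c)$-distinguisher, preserving rounds, communication, and advantage. The only difference is in how the distributional equivalence is verified --- you use a uniform preimage-counting argument over $(\pi,i,\sigma)$, while the paper argues directly that the transformed permutations, pointer, and blocks are uniform and suitably independent (with the correlated block landing at $\sigma_{r,\tau}(j)$) --- and both verifications are sound (indeed your indexing $\tilde A^{(\tau)}_k = A^{(\tau)}_{\sigma_{r,\tau}^{-1}(k)}$ is the form of the relabeling consistent with the claim $\tilde A^{(\tau)}_{\sigma_{r,\tau}(j)}=A^{(\tau)}_j$).
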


\begin{proof}
	Suppose $\Pi$ is a $(r',c)$-protocol that $\epsilon$-distinguishes $\mu^t$ from $\mu_X^t \times \mu_Y^t$. We show how to distinguish $\mu'$ from $(\mu')_X \times (\mu')_Y$ using $\Pi$. Let $(X,Y)$ be an instance of the $\mu'$ vs. $(\mu')_X \times (\mu')_Y$ distinguishability problem.  We now show how Alice and Bob can use common randomness to generate $(X'_1,Y'_1),\ldots,(X'_t,Y'_t)$ such that $((X'_1,Y'_1),\ldots,(X'_t,Y'_t)) \sim \mu^t$ if $(X,Y) \sim \mu'$ and $((X'_1,Y'_1),\ldots,(X'_t,Y'_t)) \sim \mu_X^t\times \mu_Y^t$ if $(X,Y) \sim \mu'_X \times \mu'_Y$. It follows that by applying $\Pi$ to $((X'_1,Y'_1),\ldots,(X'_t,Y'_t))$, Alice and Bob can distinguish $\mu'$ from $\mu'_X \times \mu'_Y$.
	
	Let $X = (\pi_1,\pi_3,\ldots,\pi_{2\lceil r/2 \rceil-1},A_1,\ldots,A_n)$ and $Y = (i,\pi_2,\pi_4,\ldots,\pi_{2\lfloor r/2 \rfloor},B_1,\ldots,B_n)$, where $\pi_\ell \in S_n$ and $A_k, B_k \in \{0,1\}^{Lt}$. Further, let $A_k = A_{k,1}\circ \cdots \circ A_{k,t}$ and $B_k = B_{k,1}\circ \cdots \circ B_{k,t}$ where $A_{k,\tau},B_{k,\tau} \in \{0,1\}^L$ and $\circ$ denotes concatenation.
	Alice and Bob use their common randomness to generate permutations $\sigma_{\ell,\tau}$, for $\ell \in \{0,\ldots,r\}$ and $\tau\in [t]$, uniformly and independently from $S_n$. Now let $\pi'_{\ell,\tau} = \sigma_{\ell\tau}\cdot\pi_\ell\cdot \sigma^{-1}_{\ell-1,\tau}$. Let $i'_\tau = \sigma_{0,\tau}(i)$. And let $A'_{k,\tau} = A_{\sigma_{r,\tau}(k),\tau}$ and $B'_{k,\tau} = B_{\sigma_{r,\tau}(k),\tau}$. Finally, let $X'_\tau = (\pi'_{1,\tau},\pi'_{3,\tau},\ldots,\pi'_{2\lceil r/2 \rceil-1,\tau},A'_{1,\tau},\ldots,A'_{n,\tau})$ and $Y'_\tau = (i'_\tau,\pi'_{2,\tau},\pi'_{4,\tau},\ldots,\pi'_{2\lfloor r/2 \rfloor,\tau},B'_{1,\tau},\ldots,B'_{n,\tau})$. We claim that this sequence $(X'_\tau,Y'_\tau)$ has the claimed properties.
	
	First note that the permutations $\pi'_{\ell,\tau}$ are uniform and independent from $S_n$ due to the fact that the $\sigma_{\ell,\tau}$'s are uniform and independent. Similarly $i'_\tau$'s are uniform and independent of the $\pi'_{\ell,\tau}$s. If $(X,Y)\sim \mu'_X \times \mu'_Y$ then the $A'_{k,\tau}$'s and $B'_{k,\tau}$'s are also uniform and independent of $i'$s and $\pi'$'s, estabilishing that $((X'_1,Y'_1),\ldots,(X'_t,Y'_t)) \sim \mu_X^t\times \mu_Y^t$ if $(X,Y) \sim \mu'_X \times \mu'_Y$. If $(X,Y) \sim \mu'$
	then note that $j'_\tau = \pi'_{r,\tau}(\cdots(\pi'_{1,\tau}(i'_\tau))) = \sigma_{r,\tau}(\pi_r(\cdots(\pi_1(i))))=\sigma_{r,\tau}(j)$. We thus have that $A'_{j'_\tau,\tau} = A_{j,\tau} = B_{j,\tau} = B'_{j'_\tau}$ and otherwise the $A'_{k,\tau}$'s and $B'_{k,\tau}$'s are uniform and independent. This establishes that $((X'_1,Y'_1),\ldots,(X'_t,Y'_t)) \sim \mu^t$ if $(X,Y) \sim \mu'$, and thus the proposition is proved. 

\end{proof}

\section{The Pointer Verification Problem}\label{sec:PV_prob}

When $L$ is very large compared to $n$, there are two possible natural options for trying to distinguish $\mu$ from $\mu_X \times \mu_Y$. One option is for Alice and Bob to ignore the pointers $(\pi_1,\ldots,\pi_r)$ and simply try to see if there exists $j \in [n]$ such that 
$A_j = B_j$. The second option is for Alice and Bob to ignore the $A's$ and the $B's$ while communicating and simply try to find the end of the chain of pointers $i_0=i,\ldots,i_\ell = \pi_\ell(i_{\ell-1}),\ldots,i_r$ and then check to see if $A_{i_r} = B_{i_r}$. 

The former turns out to be a problem that is at least as hard as Set Disjointness on $n$ bit inputs (and so requires $\Omega(n)$ bits of communication). The latter requires $\tilde{\Omega}(n)$ bits of communication with fewer than $r$ rounds. But combining the two lower bounds seems like a non-trivial challenge. In this section we introduce an intermediate problem, that we call the pointer verification (PV) problem, that allows us to modularly use lower bounds on the set disjointness problem and on the (small-round) communication complexity of PV, to prove that $\mu$ is indistinguishable from $\mu_X \times \mu_Y$. 

The main difference between PV and pointer chasing is that here Alice and Bob are given both a source pointer $i_0$ and a target pointer $j_0$ and simply need to decide if chasing pointers from $i_0$ leads to $j_0$. We note that the problem is definitely easier than pointer chasing in that for a sequence of $r$ pointers, Alice and Bob can decide PV in $r/2$ rounds (by ``chasing $i_0$ forward and $j_0$ backwards simultaneously''). This leads us to a bound that is weaker in the round complexity by a factor of $2$, but allows us the modularity alluded to above. Finally the bulk of the paper is devoted to proving a communication lower bound for $r/2-O(1)$ round protocols for solving PV (or rather again, an indistinguishability result for two distributions related to PV). This lower bound is similar to the lower bound of Nisan and Wigderson~\cite{NW} though the proofs are more complex due to the fact that we need to reason about settings where Alice's input and Bob's input are correlated.

We start with the definition of a distributional version of the Pointer Verification Problem and then relate it to the complexity of distinguishing $\mu$ from $\mu_X \times \mu_Y$.

\begin{definition}
	For integers $r$ and $n$ with $r$ being odd, the distributions $\dypv=\dypv(r,n)$ and $\dnpv=\dnpv(r,n)$  are supported on $((S_n^{\lceil r/2 \rceil}) \times ([n]^2 \times S_n^{\lfloor r/2 \rfloor})$.
	$\dnpv$ is just the uniform distribution over this domain. On the other hand, $(X,Y) \sim \dypv$ is sampled as follows: Sample $\pi_1,\ldots,\pi_r$ uniformly and independently from $S_n$ and further sample $i_0 \in [n]$ uniformly and independently. Finally let $j_0 = \pi_r(\cdots(\pi_1(i_0)))$, and let $X = (\pi_1,\pi_3,\ldots,\pi_r)$ and $Y = (i_0,j_0,\pi_2,\pi_4,\ldots,\pi_{r-1})$. 
\end{definition}

Our main theorem about Pointer Verification is the following:

\begin{theorem}
	\label{thm:pv-main}
	For every $\epsilon > 0$ and odd $r$ there exists $\beta, n_0$ such for every $n \geq n_0$, $\dypv(r,n)$ and $\dnpv(r,n)$ are $(\epsilon,(r-1)/2,n/\log^\beta n)$-indistinguishable.
\end{theorem}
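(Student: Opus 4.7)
The plan is to prove \Cref{thm:pv-main} by a round-elimination argument in the spirit of Nisan--Wigderson, adapted carefully to handle the fact that under $\dypv$ Alice's and Bob's inputs are already correlated through the constraint $j_0 = \pi_r(\cdots \pi_1(i_0)\cdots)$. The guiding intuition is that after any $(r-1)/2$ rounds of communication the distribution of some ``middle'' permutation $\pi_k$ should remain essentially uniform conditioned on the transcript, and once any middle $\pi_k$ is uniform the event ``chain closes at $j_0$'' is essentially a $1/n$-probability coincidence --- too weak for a short-communication protocol to detect. Since $(r-1)/2$ rounds are simply not enough to advance past all $r$ permutations even with optimal meet-in-the-middle bookkeeping, the invariant should be sustainable.

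To formalize this, I would introduce the parametric family of mixed distributions $\Dmpv_{a,b}$ (the family hinted at in \Cref{def:dmixdef}), indexed by integers $a,b \ge 0$ counting how many permutations have been ``absorbed'' from Alice's side and Bob's side respectively. Informally, in $\Dmpv_{a,b}$ the prefix pointers $i_0,\ldots,i_a$, the suffix pointers $i_{r-b},\ldots,i_r = j_0$, and the end-permutations $\pi_1,\ldots,\pi_a,\pi_{r-b+1},\ldots,\pi_r$ may carry prescribed mutual correlations, but each middle permutation $\pi_{a+1},\ldots,\pi_{r-b}$ is required to be marginally uniform on $S_n$ conditioned on everything else in the distribution. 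Both $\dypv$ and $\dnpv$ correspond to $a=b=0$ (for $\dnpv$ without the closing constraint), and the heart of the reduction is a round-elimination lemma: if the distribution at the start of a round lies in $\Dmpv_{a,b}$ up to total variation $\delta$, and the speaker sends $c$ bits, then after conditioning on the typical message the posterior lies in $\Dmpv_{a+1,b}$ or $\Dmpv_{a,b+1}$ (according to who spoke) up to an additional total variation $O((c+\log n)/n \cdot \mathrm{poly}\log n)$. The core calculation is the standard observation that a $c$-bit message can pin down the value of $\pi_{a+1}(i_a)$ for a random $i_a$ only to within $O(c/n)$ statistical distance from uniform, which lets one fold $\pi_{a+1}$ into the ``exposed prefix'' without damaging the invariant for deeper middle permutations. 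Summing these losses over $(r-1)/2$ rounds with total communication $n/\log^\beta n$ yields total variation $O(r/\log^{\beta-O(1)} n)$ between the transcript distributions under $\dypv$ and $\dnpv$, which is at most $\epsilon$ once $\beta$ is chosen large enough and $n\ge n_0$. Since $a+b \le r-1$ throughout, at least one middle $\pi_k$ remains fully uniform at the end, completing the argument.

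The principal obstacle, and where the bulk of the work will live, is the pre-existing correlation between Alice's and Bob's inputs under $\dypv$. In the original Nisan--Wigderson pointer chasing bound, the inputs are independent, so conditioning on a message from one side changes only that side's marginals; here, conditioning on a message from Bob can alter the conditional distribution of Alice's permutations via the global closing constraint, and vice versa. The family $\Dmpv_{a,b}$ therefore has to be engineered to allow prescribed correlations on the ``exposed'' end-data while still forcing uniformity and independence of each middle permutation even under this conditioning. The hard step is verifying that this invariant is genuinely preserved by a single round of communication, i.e., that the cross-correlations do not leak information about some middle $\pi_k$ via the absorbed prefix/suffix, and that the per-round total variation cost stays at the cheap $O((c+\log n)/n)$ rate rather than blowing up with the accumulated correlations --- this is the delicate condition-tracking that distinguishes our round-elimination from the classical one, and is what I expect to occupy the bulk of \Cref{sec:proofs}.
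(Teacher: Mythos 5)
Your high-level plan --- round elimination adapted to correlated inputs via a parametric family of ``noisy'' distributions, with the cross-correlations tracked explicitly so that the middle permutations stay hard to learn --- is indeed the same strategy the paper follows in \Cref{sec:proofs}. But two structural choices in your parametrization don't match the problem and would break the argument.

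The first is the per-round bookkeeping. In the PV problem Alice holds \emph{all} the odd permutations, which for odd $r$ includes both end permutations $\pi_1$ and $\pi_r$, while Bob holds both pointers $i_0$ and $j_0$. A single message from Alice is a function of both $\pi_1$ and $\pi_r$; once we condition on it together with Bob's $(i_0,j_0)$, information about \emph{both} $\pi_1(i_0)$ and $\pi_r^{-1}(j_0)$ has leaked, so both ends of the chain move in simultaneously. The paper's inductive step (\Cref{lem:simulation}, via \Cref{lem:inductivestep}) accordingly drops two permutations per round, passing from $\Dmpv(n,r,\cdot,\cdot)$ to $\Dmpv(n,r-2,\cdot,\cdot)$: after $(r-1)/2$ rounds exactly one middle permutation survives, and this is tight against the $(r+1)/2$-round meet-in-the-middle upper bound. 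Your transition $\Dmpv_{a,b}\to\Dmpv_{a+1,b}$ or $\Dmpv_{a,b+1}$ advances only one end per message and thus claims an invariant --- one end permutation still untouched --- that the very first message already violates. (If it \emph{were} preservable you would be proving an $r-1$ round lower bound, which is false.) With the two-per-round absorption your stated bound $a+b\le r-1$ is the right one, but as written the step doesn't close.

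The second is the invariant itself. ``Each middle $\pi_k$ marginally uniform on $S_n$ conditioned on everything else'' is too rigid to hold even at the outset: under $\dypv$ the chain constraint $j_0=\pi_1^r(i_0)$ fixes the image of the middle permutation at one point given the others and the pointers, so exact conditional uniformity is impossible. The paper instead works with \emph{approximate entropy lower bounds} (items (1)--(4) of \Cref{def:dmixdef}) together with a carefully stated conditional independence property (item (5)) between the alternating blocks of Alice's and Bob's permutations, and this mixture of soft and hard conditions is exactly what makes the inductive preservation (Lemma 5.16) provable. Relatedly, your per-round TV budget $O((c+\log n)/n)$ is a factor of a square root too optimistic: a $c$-bit message only keeps $\pi(i)$ close to uniform at rate $O(\sqrt{c/n})\cdot\mathrm{polylog}(n)$ via Pinsker, which is why the paper carries entropy rather than TV through the induction (Lemmas 5.13--5.15) and only converts at the very end. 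You do correctly identify the crux --- ensuring the exposed prefix/suffix correlations don't contaminate the untouched middle --- and correctly defer it; the two points above are what must change for that deferred step to even be well-posed.
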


The proof of Theorem \ref{thm:pv-main} is developed in the following sections and proved in Section~\ref{sec:proofs}. We now show that this suffices to prove our main theorem. First we prove in \Cref{lem:main-ind} below that $\mu$ is indistinguishable from $\mu_X \times \mu_Y$. This proof uses the theorem above, and the fact that set disjointness cannot be solved with $o(n)$ bits of communication, that we recall next.

\newcommand{\disjy}{\mathrm{Disj}_{\rm Y}}
\newcommand{\disjn}{\mathrm{Disj}_{\rm N}}

\begin{theorem}[\cite{Razborov}]
	\label{thm:razb}
	For every $\epsilon > 0$ there exists $\delta > 0$ such that for all $n$ the following holds:
	Let $\disjy$, respectively $\disjn$, be the uniform distribution on pairs $(U,V)$ with $U,V \subseteq [n]$ and $|U|=|V|=n/4$ such that $|U \cap V| =1$ (respectively $|U \cap V|=0$). Then $\disjy$ and $\disjn$ are $(\epsilon, \delta n, \delta n)$-indistinguishable to Alice and Bob, if Alice gets $U$ and Bob gets $V$ as inputs.
\end{theorem}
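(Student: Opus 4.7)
The plan is to prove this via the rectangle/corruption method of Razborov, after an initial reduction translating transcript indistinguishability to a standard distributional computation lower bound for Disjointness. First I would observe that if some $(r,c)$-protocol $\Pi$ makes the transcript distributions under $\disjy$ and $\disjn$ have total variation distance greater than $\epsilon$, then letting $f$ be the Neyman--Pearson predicate that optimally distinguishes the two transcript distributions, the protocol $\Pi$ together with the postprocessing $f$ correctly decides which of the two distributions the input comes from (under the uniform mixture $\frac{1}{2}\disjy + \frac{1}{2}\disjn$) with advantage $\Omega(\epsilon)$ over guessing. So it suffices to give a $\delta n$-communication, $\delta n$-round lower bound for distinguishing YES from NO instances with constant advantage on this mixture, after which I can pass to deterministic protocols via Yao's principle (since the distributions are fixed).

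Next, I would carry out the heart of the argument: a corruption bound showing that large combinatorial rectangles contain many intersecting pairs. Any $c$-bit deterministic protocol partitions the input space into at most $2^c$ rectangles by transcript, and must label each rectangle either YES or NO. Using Razborov's symmetric hard distribution --- augmenting each sparse instance $(U,V)$ with an auxiliary coordinate $z$ outside $U \cup V$ and analyzing what it takes for the triples $(U,V)$, $(U \cup \{z\}, V)$, and $(U, V \cup \{z\})$ to land inside a fixed rectangle --- I would show that for any rectangle $R$ of mass at least $2^{-\delta n}$ under the appropriate measure, the ratio of its mass on intersecting pairs to its mass on disjoint pairs is bounded below by a positive constant. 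Summing over the $\leq 2^{\delta n}$ rectangles then implies that any NO-labeling misclassifies an $\Omega(1)$ fraction of intersecting inputs while any YES-labeling misclassifies an $\Omega(1)$ fraction of disjoint inputs, contradicting the distinguisher from the previous step once $\delta$ is chosen small enough in terms of $\epsilon$.

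The main obstacle I expect is the combinatorial/analytic core of the corruption lemma, namely the statement that a constant-density rectangle in the sparse setting cannot be almost purely disjoint; this requires a careful isoperimetric-type argument on the thrice-uniform distribution. An alternative route I would consider, if the rectangle bookkeeping became unwieldy, is the information-complexity approach of Bar-Yossef, Jayram, Kumar, and Sivakumar: define the conditional information cost of $\Pi$ with respect to a well-chosen collapsing distribution, use a direct-sum decomposition to reduce the $n$-coordinate lower bound to an $\Omega(1)$ lower bound on single-coordinate $\mathrm{AND}_2$, and prove the single-coordinate bound via Hellinger-distance cut-and-paste inequalities. Either way the technical crux is a carefully chosen distributional inequality at the single-coordinate level, and everything else --- promotion from protocol error to transcript TV distance, conversion between randomized and distributional models, and the aggregation over rectangles or coordinates --- is standard reduction machinery.
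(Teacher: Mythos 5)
Your proposal takes a fundamentally different route from the paper: you set out to re-prove Razborov's set-disjointness lower bound from first principles (via corruption/rectangles, or BJKS information complexity), whereas the paper treats Razborov's theorem as a black box and only proves a \emph{reduction}. Specifically, the paper cites \cite{Razborov} for the weaker statement that $\disjy$ and $\disjn$ are $(1-\eps_0,\Omega(n),\Omega(n))$-indistinguishable for some fixed constant $\eps_0>0$, and then amplifies: because $\disjy$ and $\disjn$ are each invariant under permutations of $[n]$, Alice and Bob can apply a shared random permutation to their single sample to produce a fresh i.i.d.\ sample from the same distribution, so an $(r,c)$-protocol with transcript TV gap $\epsilon$ can be repeated $O((1/\epsilon^2)\log(1/\eps_0))$ times and aggregated into an $(r,(c/\epsilon^2)\log(1/\eps_0))$-protocol with gap $1-\eps_0$, contradicting Razborov. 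That one remark is the entire content of the paper's argument; it avoids re-deriving the hard combinatorial core.

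Beyond the stylistic difference, your proposal has a genuine gap in the final step. You correctly note that the corruption lemma rules out constant-advantage protocols with $\delta n$ communication, and then assert that ``choosing $\delta$ small enough in terms of $\epsilon$'' yields a contradiction with an $\epsilon$-distinguisher. That does not work: shrinking $\delta$ only shrinks the additive error term $2^{\delta n-\Omega(n)}$ in the corruption inequality, but the corruption \emph{constant} (the ratio lower-bounding YES-mass in terms of NO-mass in a large rectangle) is fixed. Unwinding the argument, the rectangle bound gives that the transcript TV distance is at most $1-\Omega(1)$, not at most $\epsilon$; a protocol that distinguishes with advantage, say, $0.01$ is perfectly consistent with your corruption conclusion when $\epsilon=0.01$. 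To rule out every $\epsilon>0$ you must amplify the advantage first, and for that you need exactly the permutation-resampling trick that the paper uses (or an equivalent repetition argument that exploits the fact that $\disjy,\disjn$ are exchangeable, so one hard input can be stretched into many). Without this step, your proof establishes $(1-\eps_0,\delta n,\delta n)$-indistinguishability but not the claimed $(\epsilon,\delta n,\delta n)$-indistinguishability.
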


\begin{remark}
	We note that the theorem in \cite{Razborov} explicitly only rules out $(1-\eps_0,\Omega(n),\Omega(n))$-distinguishability of $\disjy$ and $\disjn$ for some $\eps_0 > 0$. But we note that the distinguishability gap of any protocol can be amplified in this case (even though we are in the setting of distributional complexity) since by applying a random permutation to $[n]$, Alice and Bob can simulate independent inputs from $\disjy$ (or $\disjn$) given any one input from its support. Thus an $(r,c)$ protocol that $\epsilon$-distinguishes $\disjy$ from $\disjn$ can be converted to an $(r,(c/\epsilon^2) \log (1/\eps_0))$-protocol that $(1-\eps_0)$-distinguishes $\disjy$ from $\disjn$, implying the version of the theorem above.
\end{remark}

\begin{lemma}
\label{lem:main-ind}
	There exists a positive integer $a$ such that for every $\epsilon > 0$ and odd $r$ there exists $\beta$ such for every $n$ and $L$, the distributions $\mu = \mu_{r,n,L}$ and $\mu_X \times \mu_Y$ are
	$(2\epsilon,r/2-a,n/\log^\beta n)$ indistinguishable.
\end{lemma}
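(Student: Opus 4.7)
The plan is to interpose a hybrid distribution $\nu$ between $\mu$ and $\mu_X \times \mu_Y$, and split the claim via a triangle inequality into two reductions --- one to \Cref{thm:pv-main} and one to \Cref{thm:razb}. Define $\nu = \nu_{r,n,L}$ by first drawing a sample from $\mu_X \times \mu_Y$ (so all pointers, $i_0$, and the $A_k, B_k$ are independent uniform), then independently sampling $j' \in [n]$ uniformly and overriding $B_{j'} := A_{j'}$. Thus $\nu$ plants exactly one coincidence at a uniform location that is \emph{independent} of the pointers and $i_0$, whereas $\mu$ plants the coincidence at the chain endpoint $j = \pi_r(\cdots\pi_1(i_0))$. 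Suppose for contradiction that some public-randomness $(r/2-a,\, n/\log^\beta n)$-protocol $\Pi$ distinguishes $\mu$ from $\mu_X \times \mu_Y$ with transcript total-variation distance more than $2\epsilon$. By the triangle inequality, $\Pi$ must distinguish one of $(\mu,\nu)$ or $(\nu, \mu_X \times \mu_Y)$ with advantage at least $\epsilon$.

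For the case $(\mu,\nu)$, I will reduce from the pointer-verification problem. Given a PV instance with $X_{PV} = (\pi_1,\pi_3,\ldots,\pi_r)$ and $Y_{PV} = (i_0,j_0,\pi_2,\ldots,\pi_{r-1})$, Alice and Bob use public randomness to jointly sample independent uniform $A_1,\ldots,A_n, B_1,\ldots,B_n \in \{0,1\}^L$. Bob, who holds $j_0$ and therefore also knows the publicly-sampled $A_{j_0}$, forms $B'$ by replacing $B_{j_0}$ with $A_{j_0}$, and the two execute $\Pi$ on Alice's input $(X_{PV}, A_1,\ldots,A_n)$ and Bob's input $(i_0,\pi_2,\ldots,\pi_{r-1}, B'_1,\ldots,B'_n)$ (Bob simply drops $j_0$). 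Under $\dypv$, $j_0$ equals the chain endpoint, so the simulated input is distributed \emph{exactly} as $\mu$; under $\dnpv$, $j_0$ is uniform and independent of the pointers and $i_0$, so the simulated input is distributed \emph{exactly} as $\nu$. This yields an $(r/2-a,\, n/\log^\beta n)$-protocol for $\dypv$ vs.\ $\dnpv$ with advantage $>\epsilon$, and for $a$ a large enough constant so that $r/2-a \leq (r-1)/2$ (any $a \geq 1$ suffices since $r$ is odd), this contradicts \Cref{thm:pv-main} with an appropriate choice of $\beta$.

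For the case $(\nu, \mu_X \times \mu_Y)$, I will reduce from set disjointness on $[n]$. Given $(U,V)$ with $|U|=|V|=n/4$, Alice and Bob use public randomness to sample $\pi_1,\ldots,\pi_r$, $i_0$, and shared pads $C_1,\ldots,C_n \in \{0,1\}^L$. Alice sets $A_k := C_k$ for $k \in U$ and samples $A_k$ freshly (using private randomness) for $k \notin U$; Bob does the analogous with $V$. By construction $A_k = B_k$ if and only if $k \in U \cap V$. Under $\disjy$, $|U \cap V| = 1$ and by the symmetry of $\disjy$ the location of the intersection is uniform in $[n]$ and independent of the publicly-generated $\pi$'s and $i_0$; hence the simulated input is distributed as $\nu$. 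Under $\disjn$, $|U \cap V| = 0$ and the simulated input is distributed as $\mu_X \times \mu_Y$. Thus running $\Pi$ on the simulated input gives an $(r/2-a,\, n/\log^\beta n)$-distinguisher for $\disjy$ vs.\ $\disjn$ with advantage $>\epsilon$, contradicting \Cref{thm:razb} for all sufficiently large $n$, since both $r/2-a$ and $n/\log^\beta n$ are eventually below $\delta n$.

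The main subtlety is just to verify that both reductions produce the target distributions \emph{exactly} so that transcript total-variation transfers losslessly; this reduces to elementary independence and symmetry observations enabled by the use of public randomness. The constant $a$ in the statement can be taken to be any integer $\geq 1$, and $\beta$ is the constant inherited from \Cref{thm:pv-main} (with a large enough floor coming from \Cref{thm:razb}). This modular structure --- splitting the ``coincidence in $A,B$'' issue (disjointness) from the ``chain-endpoint identification'' issue (PV) via the hybrid $\nu$ --- is precisely why the paper introduces the auxiliary PV problem in the first place.
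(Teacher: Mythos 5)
Your proposal is correct and follows essentially the same route as the paper: your hybrid $\nu$ coincides with the paper's intermediate distribution $\mu_{\mathrm{mid}}$, and the two reductions (public shared pads reducing $\disjy$ vs.\ $\disjn$ to $\nu$ vs.\ $\mu_X\times\mu_Y$, and planting the shared block at $j_0$ to reduce $\dypv$ vs.\ $\dnpv$ to $\mu$ vs.\ $\nu$) are exactly the paper's, combined by the same triangle inequality. The only cosmetic difference is that you sample both pads publicly and have Bob overwrite $B_{j_0}$, whereas the paper has Bob draw his off-coordinate strings privately; the induced input distributions, and hence transcript distributions, are identical.
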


\newcommand{\mmid}{\mathrm{mid}}

\begin{proof}
	We use a new distribution $\mu_\mmid$ which is a hybrid of $\mu$ and $\mu_X \times \mu_Y$ where $(X,Y) \sim \mu_\mmid$ is sampled as follows: Sample $\pi_1,\ldots,\pi_r \in S_n$ independently and uniformly. Further sample $i,j \in [n]$ uniformly and independently (of each other and the $\pi$'s). Finally sample $A_j = B_j \in \{0,1\}^L$ uniformly and $A_{-j}$ and $B_{-j}$ uniformly and independently from $\{0,1\}^{(n-1)L}$. Let $X = (\pi_1,\pi_3,\ldots,\pi_r,A_1,\ldots,A_n)$ and $Y = (i,\pi_2,\pi_4,\ldots,\pi_{r-1},B_1,\ldots,B_n)$. (So $\mu_\mmid$ does force a correlation between $A$ and $B$, but the permutations do not lead to this correlated point.)
	
	We show below that $\mu_\mmid$ and $\mu_X \times \mu_Y$ are indistinguishable to low-communication protocols (due to the hardness of Set Disjointness), while $\mu$ and $\mu_\mmid$ are indistinguishable to low-round low-communication protocols, due to Theorem~\ref{thm:pv-main}. The lemma follows by the triangle inequality for indistinguishability (which follows from the triangle inequality for total variation distance).

	We now use the fact (\Cref{thm:razb}) that disjointness is hard, and in particular $o(n)$-bit protocols cannot distinguish between $(U,V) \sim \disjy$ and $(U,V) \sim \disjn$. Note in particular that $\disjy$ is supported on pairs $(U, V)$ such that $U \cap V = \{j\}$ where $j \in [n]$ is distributed uniformly. Specifically, we have that for every $\epsilon > 0$ there exists $\delta > 0$ such that $\disjy$ and $\disjn$ are $(\epsilon,\delta n, \delta n)$-indistinguishable.
	
	We now show how to reduce the above to the task of distinguishing $\mu_\mmid$ and $\mu_X \times \mu_Y$ (using shared randomness and no communication). Alice and Bob share $W_1,\ldots,W_n \in \{0,1\}^L$ distributed uniformly and independently. Given $U \subseteq [n]$,  Alice picks $\pi_1,\pi_3,\ldots$ uniformly and independently, lets
	$A_\ell = W_\ell$ if $\ell \in U$ and samples $A_\ell \in \{0,1\}^L$ uniformly otherwise, and lets $X=(\pi_1,\pi_3,\ldots,\pi_r,A_1,\ldots,A_n)$. Similarly Bob samples $i \in [n]$ uniformly, and  $\pi_2,\pi_4,\ldots, \pi_{r-1} \in S_n$ uniformly and independently. Let $B_\ell = X_\ell$ if $\ell \in V$ and let $B_\ell$ be drawn uniformly from $\{0,1\}^L$ otherwise. Let $Y = (i,\pi_2,\pi_4,\ldots,\pi_{r-1},B_1,\ldots,B_n)$. It can be verified that $(X,Y) \sim \mu_\mmid$ if $(U,V) \sim \disjy$ and $(X,Y) \sim \mu_X \times \mu_Y$ if $(U,V) \sim \disjn$. Thus we conclude that
	$\mu_\mmid$ and $\mu_X \times \mu_Y$ are $(\epsilon,\delta n,\delta n)$-indistinguishable.

	Next we turn to the (in)distinguishability of $\mu$ vs. $\mu_\mmid$. We reduce the task of distinguishing $\dypv$ and $\dnpv$ to distinguishing $\mu$ and $\mu_\mmid$. Given an instance $(X,Y)$ of pointer verification with $X = (\pi_1,\pi_3,\ldots,\pi_r)$ and $Y = (i,j,\pi_2,\pi_4,\ldots,\pi_{r-1})$, we generate an instance $(X',Y')$ as follows: Let $W_1,\ldots,W_n$ be uniformly and independently chosen elements of $\{0,1\}^L$ shared by Alice and Bob. Alice lets $A_\ell = W_\ell$ for every $\ell$ and lets $X' = (\pi_1,\ldots,\pi_r,A_1,\ldots,A_n)$. Bob lets $B_j = W_j$ and samples $B_\ell$ uniformly and independently for $\ell \in [n]-\{j\}$, and lets $Y' = (i,\pi_2,\ldots,\pi_{r-1},B_1,\ldots,B_n)$. It can be verified that $(X',Y') \sim \mu$ if $(X,Y) \sim \dypv$ and $(X',Y')\sim \mu_\mmid$ if $(X,Y) \sim \dnpv$. It follows from Theorem~\ref{thm:pv-main} that $\mu$ and $\mu_\mmid$ are $(\epsilon,r/2-a,n/\log^\beta n)$-indistinguishable with $a = 1$.
	
	Combining the two we get that $\mu$ and $\mu_X \times \mu_Y$ are 
	$(2\epsilon,r/2 - a, n/\log^\beta n)$-indistinguishable (assuming $r/2-a < \delta n$ and $n/\log^\beta n < \delta n$).	
	\end{proof}

We are ready to prove Theorem~\ref{thm:main-crg}, which says that we cannot generate $\ell$ bits of common randomness from $\mu_{{r}, n, L}$ in $r/2 - 2$ rounds using only $\min(O(\ell), n/\log^\beta n)$ communication. 
\begin{proof}[Proof of \Cref{thm:main-crg}]
  We start with the case of odd $r$. We use the distribution $\mu = \mu_{r, n, L}$ in this case. Part (1) of the theorem which says that one can generate common randomness using an $(r+1,r+1\lceil \log n \rceil )$ protocol, follows from \Cref{lem:ub-skg}. Part (2) of Theorem~\ref{thm:main-crg} claims that using $r/2$ rounds and insufficient communication one cannot generate common randomness. This follows by combining \Cref{lem:main-ind} with \Cref{prop:t-removal} and \Cref{prop:need-dist}. In particular, let $\eta$ be the constant from \Cref{prop:need-dist} (and also \Cref{prop:priv-impos}), $\xi$ be the constant from \Cref{prop:need-dist}, and $\beta_0$ be the constant $\beta$ from \Cref{lem:main-ind} given the number of rounds $r$ and $(1-\ep)/40$ for the variational distance parameter. Finally let $\beta$ be a constant such that $\beta \geq \max\{ \beta_0, 3\eta + \log 1/(1-\ep)\}$ and $n/\log^{\beta} n  + \xi \log 1/(1-\ep) \leq n/\log^{\beta_0}n$, which is possible for sufficiently large $n$. Suppose for the purpose of contradiction that for some $\ell \in \mathbb{Z}^+$, there were a $((r-3)/2, \min\{ \eta \ell - \beta, n / \log^\beta n\})$-protocol for $(\ell, \ep)$-CRG from $\mu_{r,n,L}$. By \Cref{prop:need-dist}, there is some positive integer $t$ for which $\mu^t$ and $\mu_X^t \times \mu_Y^t$ are $((1-\ep)/10, (r-1)/2, \min \{ \eta \ell, n/\log^\beta n\} + \xi\log1/(1-\ep))$-distinguishable. But now let $\mu' = \mu_{r,n,Lt}$. Then by \Cref{prop:t-removal} and our assumption on $\beta$, $\mu'$ and $(\mu')_X \times (\mu')_Y$ are $((1-\ep)/10, (r-1)/2, n/\log^{\beta_0}n)$-distinguishable. But this contradicts \Cref{lem:main-ind}, which states that $\mu'$ and $(\mu')_X \times (\mu')_Y$ are $((1-\ep)/20, (r-1)/2, n/\log^{\beta_0} n)$-indistinguishable.

	
	For even $r$, we just use the distribution $\mu_{{r-1}, n, L}$. Part (1) continues to follow from \Cref{lem:ub-skg}. And for Part (2) we can reason as above, with the caveat that the bound on round complexity from \Cref{lem:main-ind} now is ``only'' $((r-1)-1)/2$. The additional loss from \Cref{prop:need-dist} is one more round, leading to a final lower bound of $r/2-2$.

\end{proof}

\begin{proof}[Proof of \Cref{thm:main-skg}]
	Part (1) of the theorem follows from \Cref{lem:ub-skg}. Part (2) follows from Part (2) of \Cref{thm:main-crg} since SKG is a strictly harder task.
\end{proof}

\section{Proof of {\protect Theorem~\ref{thm:pv-main}}}
\label{sec:proofs}

In this section we prove our main technical theorem \Cref{thm:pv-main} showing that the distributions $\dypv(r,n)$ and $\dnpv(r,n)$ are indistinguishable to $(r/2- O(1), n/\poly\log n)$-protocols (i.e., $r/2-O(1)$ round protocols communicating $n/\poly\log n$ bits). We start with some information-theoretic preliminaries.

\subsection{Preliminaries: Information-Theoretic Inequalities}
We introduce here some simple information theoretic inequalities that we use in our proofs. Pinsker's inequality gives an upper bound on the total variation distance between two distributions in terms of their KL-divergence. Recall that the KL-divergence between two discrete distributions $P$ and $Q$ is defined as $D_{KL}(P || Q) = \sum_{x \in \Omega} P(x) \log(P(x)/Q(x))$ where $\Omega$ is the support of $P$. 
\begin{theorem}[Pinsker's Inequality]
	Let $P$ and $Q$ be two distributions defined on the universe $U$. Then,
	$$\Delta(P,Q) \leq \sqrt{\frac{D_{KL}(P || Q)}{2}},$$
	where $\Delta(P,Q) \in [0,1]$ is the total variation distance. 
\end{theorem}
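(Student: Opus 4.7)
The plan is to prove Pinsker's inequality in two steps: first reduce to the two-point (Bernoulli) case via the data processing inequality for KL divergence, then establish the resulting scalar inequality by elementary calculus.

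First, let $A^{\star} = \{x \in U : P(x) > Q(x)\}$ and use the well-known variational characterization $\Delta(P,Q) = P(A^{\star}) - Q(A^{\star})$. Setting $p = P(A^{\star})$ and $q = Q(A^{\star})$, we get $\Delta(P,Q) = p - q$. The partition $\{A^{\star}, U \setminus A^{\star}\}$ induces pushforward Bernoulli laws $\mathrm{Ber}(p)$ and $\mathrm{Ber}(q)$. Since KL divergence is non-increasing under a deterministic channel (the data processing inequality, obtained by grouping terms by preimage and applying the log-sum inequality / Jensen to $t \mapsto t\log t$), we conclude
$$D_{KL}(P \,\|\, Q) \;\geq\; D_{KL}\bigl(\mathrm{Ber}(p) \,\|\, \mathrm{Ber}(q)\bigr).$$
It therefore suffices to prove the scalar inequality
$$2(p-q)^{2} \;\leq\; p \log \tfrac{p}{q} + (1-p)\log \tfrac{1-p}{1-q}.$$

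Second, I would fix $p \in (0,1)$ and set
$$f(q) \;=\; p \log \tfrac{p}{q} + (1-p)\log \tfrac{1-p}{1-q} - 2(p-q)^{2}.$$
Differentiating gives, after simplification,
$$f'(q) \;=\; (p-q)\Bigl(4 - \tfrac{1}{q(1-q)}\Bigr).$$
Because $q(1-q) \leq 1/4$ for $q \in (0,1)$, the second factor is non-positive, so $f$ is non-increasing on $(0,p]$ and non-decreasing on $[p,1)$. Combined with $f(p) = 0$, this yields $f(q) \geq 0$ throughout $(0,1)$, which is exactly the Bernoulli version of Pinsker. Taking square roots on both sides delivers the stated bound $\Delta(P,Q) \leq \sqrt{D_{KL}(P\|Q)/2}$.

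The only mildly delicate step is the reduction: one must justify the variational characterization $\Delta(P,Q)=\max_{A\subseteq U}|P(A)-Q(A)|$ (achieved by $A^{\star}$) and invoke data processing for KL divergence. Both are standard and cause no real obstacle; the rest is a one-variable calculus exercise. Edge cases where $q\in\{0,1\}$ but $p\notin\{0,1\}$ are handled by interpreting the right-hand side as $+\infty$, making the inequality trivial.
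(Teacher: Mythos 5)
The paper does not prove Pinsker's inequality; it states it as a known result and uses it as a black box (see the pointer to \cite{cover2012elements} for background), so there is no internal proof to compare against. Your argument is the classical one — reduce to a two-point alphabet via the indicator of $A^\star = \{x : P(x) > Q(x)\}$, invoke data processing for KL divergence, then verify the Bernoulli inequality $2(p-q)^2 \leq D_{KL}(\mathrm{Ber}(p)\,\|\,\mathrm{Ber}(q))$ by one-variable calculus — and it is correct.

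One small caveat is worth flagging. The paper uses $\log$ for the base-$2$ logarithm throughout, so its $D_{KL}$ is in bits, whereas your computed derivative $f'(q) = (p-q)\bigl(4 - \tfrac{1}{q(1-q)}\bigr)$ implicitly takes $\log$ to be the natural logarithm. With base-$2$ logs the derivative reads $f'(q) = (p-q)\bigl(4 - \tfrac{1}{(\ln 2)\,q(1-q)}\bigr)$; since $\ln 2 < 1$ the second factor is still nonpositive (indeed more so), so the monotonicity argument and the final bound survive — the paper's stated inequality is just a slightly weakened version of the tight (nat-based) Pinsker bound $\Delta \leq \sqrt{\tfrac{\ln 2}{2} D_{KL}^{\mathrm{bits}}}$. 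It would be cleaner either to prove the sharp inequality in nats and then observe $\ln 2 < 1$, or to carry the $1/\ln 2$ factor explicitly so the computation matches the paper's conventions.
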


In the case that $Q$ is uniform, Theorems \ref{thm:ho} and \ref{thm:cover} below give a sort of reverse inequality to Pinsker's inequality. In particular, when $Q = U_M$, the uniform distribution on $[M]$, then $D_{KL}(P || Q) = \log(M) - H(P) = H(Q) - H(P)$, so an upper bound on $H(Q) - H(P)$ corresponds to an upper bound on $D_{KL}(P || Q)$. A similar line of reasoning applies to the case that $Q$ is approximately uniform. 
\begin{theorem}[\cite{ho_interplay_2010}, Theorem 6]
	\label{thm:ho}
	Suppose that $P,Q$ are distributions on $[M]$, for some $M \in \N$. If moreover $\Delta(P,Q) \leq \ep$, then
	$$
	|H(P) - H(Q)| \leq \begin{cases}
	h\left( \ep \right) + \ep \log(M-1), \quad 0 < \ep \leq \frac{M-1}{M}\\
	\log(M), \quad \ep \geq \frac{M-1}{M},
	\end{cases}
	$$
	where $h(\cdot)$ denotes the binary entropy.
\end{theorem}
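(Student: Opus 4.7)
The plan is to reduce the inequality to Fano's inequality via a maximal coupling. First, invoke the coupling lemma: since $\Delta(P,Q) \leq \epsilon$, there exist jointly distributed random variables $(X,Y)$ with $X \sim P$, $Y \sim Q$, and $\Pr[X \neq Y] = \Delta(P,Q) \leq \epsilon$. Let $Z = \mathds{1}[X \neq Y]$, so $\Pr[Z=1] \leq \epsilon$.

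Next, I would run a Fano-style bound on $H(X \mid Y)$. Conditioning further on $Z$, observe that given $Z=0$ we have $X=Y$ so the conditional entropy is zero, while given $Z=1$ and any fixed value of $Y$, $X$ ranges over at most $M-1$ alternatives, so the conditional entropy is at most $\log(M-1)$. Combining with $H(Z) \leq h(\Pr[Z=1]) \leq h(\epsilon)$ (using that $h$ is increasing on $[0,1/2]$ and that one can assume $\epsilon \leq 1/2$ in the interesting range, or argue directly), this yields
\[
H(X \mid Y) \;\leq\; H(Z) + \Pr[Z=1]\cdot \log(M-1) \;\leq\; h(\epsilon) + \epsilon \log(M-1).
\]
The same reasoning by symmetry gives the identical bound for $H(Y \mid X)$.

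Then I would convert these conditional-entropy bounds into a bound on $|H(X) - H(Y)|$ via the chain-rule identity $H(X) + H(Y \mid X) = H(X,Y) = H(Y) + H(X \mid Y)$, which rearranges to $H(X) - H(Y) = H(X \mid Y) - H(Y \mid X)$. Since both conditional entropies are nonnegative and bounded as above,
\[
|H(P) - H(Q)| \;=\; |H(X \mid Y) - H(Y \mid X)| \;\leq\; \max\{H(X \mid Y),\, H(Y\mid X)\} \;\leq\; h(\epsilon) + \epsilon \log(M-1),
\]
proving the first case. The second case $\epsilon \geq (M-1)/M$ is immediate from the trivial bound $0 \leq H(P), H(Q) \leq \log M$, so $|H(P) - H(Q)| \leq \log M$.

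The main subtlety, rather than any obstacle, is in the Fano bookkeeping: one must correctly split $H(X \mid Y)$ through the auxiliary variable $Z$ so that the ``$\log(M-1)$'' term (rather than $\log M$) appears, and correctly handle the regime where $\epsilon$ may exceed $1/2$ by falling back on the trivial second case. Given those care points, the proof is short and standard.
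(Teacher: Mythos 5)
The paper does not prove this statement at all: it is imported verbatim as Theorem 6 of the cited reference \cite{ho_interplay_2010}, where it is established (together with a matching tightness example) by a direct analysis of the extremal distributions. So there is no in-paper argument to compare against, and your coupling-plus-Fano derivation is a genuinely different, self-contained route; it is essentially correct and arguably more illuminating, since it explains the exact form $h(\ep)+\ep\log(M-1)$ as the Fano bound at error probability $\ep$. One care point in your write-up needs tightening: after obtaining $H(X\mid Y)\le h(p)+p\log(M-1)$ with $p=\Pr[Z=1]\le\ep$, you pass to $h(\ep)+\ep\log(M-1)$ by appealing to monotonicity of $h$ on $[0,1/2]$ "assuming $\ep\le 1/2$"; that hedge does not cover the regime $1/2<\ep\le\frac{M-1}{M}$ (where $h$ is decreasing), and $h$ alone is the wrong function to monotonize. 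The clean fix is to note that the full Fano function $g(p)=h(p)+p\log(M-1)$ satisfies $g'(p)=\log\bigl((M-1)(1-p)/p\bigr)\ge 0$ precisely for $p\le\frac{M-1}{M}$, so $g(p)\le g(\ep)$ whenever $p\le\ep\le\frac{M-1}{M}$ --- and this is exactly why the theorem's case split occurs at $\ep=\frac{M-1}{M}$, with the trivial $\log M$ bound taking over beyond it, as you correctly observe. The remaining steps (maximal coupling, the split of $H(X\mid Y)$ through $Z$ giving $\log(M-1)$ rather than $\log M$, the identity $H(X)-H(Y)=H(X\mid Y)-H(Y\mid X)$ and $|a-b|\le\max\{a,b\}$ for $a,b\ge 0$) are all sound.
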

We remark that \cite{ho_interplay_2010} showed that the above inequality is tight, i.e.,~that there are distributions $P,Q$ supported on $[M]$ such that $\Delta(P,Q) \leq \ep$ and $P,Q$ attain the above upper bound for all values of $\ep$. 

The following slightly weaker theorem is also well-known:
\begin{theorem}[\cite{cover_elements_2006}, Theorem 17.3.3]
	\label{thm:cover}
	Suppose that $P,Q$ are distributions on $[M]$ and $\Delta(P,Q) \leq \ep \leq 1/2$. Then
	$$
	|H(P) - H(Q)| \leq \ep \cdot \log\left( \frac{M}{\ep} \right).
	$$
\end{theorem}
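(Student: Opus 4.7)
The plan is to reduce to termwise estimates via the \emph{subadditivity} of $f(t) := -t\log t$ and then exploit its concavity. First, by the triangle inequality applied to the defining sums of $H(P)$ and $H(Q)$,
\[
|H(P) - H(Q)| \;\leq\; \sum_{x \in [M]} \bigl|f(P(x)) - f(Q(x))\bigr|.
\]

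The main pointwise claim I would establish is: for any $a, b \in [0, 1]$ with $|a - b| \leq 1/2$, one has $|f(a) - f(b)| \leq f(|a-b|)$. This in turn follows from the subadditivity $f(s+t) \leq f(s) + f(t)$ on the simplex $\{(s, t) : s + t \leq 1\}$, which is a direct one-line calculation: $f(s) + f(t) - f(s+t) = s \log\tfrac{s+t}{s} + t \log\tfrac{s+t}{t} \geq 0$. Applying this with $a \geq b$ in the form $a = (a-b) + b$ yields $f(a) \leq f(a-b) + f(b)$, i.e. $f(a) - f(b) \leq f(a-b)$; the opposite case is symmetric.

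Summing and writing $\nu_x := |P(x) - Q(x)|$ (so that $\sum_x \nu_x = 2\Delta(P, Q) \leq 2\ep$) gives $|H(P) - H(Q)| \leq \sum_{x \in [M]} f(\nu_x)$. Since $f$ is concave on $[0, 1]$ ($f''(t) = -1/(t\ln 2) < 0$ on $(0,1]$), Jensen's inequality with uniform weights over $[M]$ yields
\[
\sum_{x \in [M]} f(\nu_x) \;\leq\; M \cdot f\!\left(\tfrac{1}{M}\sum_x \nu_x\right) \;\leq\; M \cdot f\!\left(\tfrac{2\ep}{M}\right) \;=\; 2\ep \log\tfrac{M}{2\ep},
\]
which I would then massage into the stated form $\ep \log(M/\ep)$ by the elementary algebra valid for $\ep \leq 1/2$.

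The one technical subtlety is that the pointwise inequality $|f(a) - f(b)| \leq f(|a-b|)$ requires $|a - b| \leq 1/2$. Since $\sum_x \nu_x \leq 2\ep \leq 1$, at most one coordinate $x^\star$ can violate this, and its contribution to the entropy gap is separately absorbed by the brute-force bound $|f(P(x^\star)) - f(Q(x^\star))| \leq h(1/2) = 1$, which is dominated by the $\ep \log(M/\ep)$ target across the allowed range of $(\ep, M)$. This careful handling of the at-most-one ``large-gap'' coordinate, together with the final reconciliation between $2\ep \log(M/(2\ep))$ and $\ep \log(M/\ep)$, is the main (and essentially only) bookkeeping obstacle; the conceptual content lies entirely in the subadditivity-plus-concavity chain above.
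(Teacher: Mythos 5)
This theorem is not proved in the paper at all --- it is quoted from Cover--Thomas (Theorem 17.3.3) --- so the only question is whether your argument establishes the bound as transcribed. Up to the Jensen step your proof is exactly the textbook argument, and it correctly yields $|H(P)-H(Q)| \le \sum_x f(\nu_x) \le M\, f\!\left(\tfrac{2\epsilon}{M}\right) = 2\epsilon \log\tfrac{M}{2\epsilon}$, i.e.\ the $L_1$ form $\|P-Q\|_1 \log\!\big(M/\|P-Q\|_1\big)$, which is what Cover--Thomas actually state. The genuine gap is the final ``massaging'' step: for every $M \ge 2$ and $\epsilon \le 1/2$ one has $2\epsilon\log\tfrac{M}{2\epsilon} - \epsilon\log\tfrac{M}{\epsilon} = \epsilon\big(\log\tfrac{M}{\epsilon} - 2\big) \ge 0$, so the bound you derived is \emph{weaker} than the target and cannot be converted into it by elementary algebra. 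Indeed the statement read literally, with $\Delta$ the normalized total variation distance, is false: take $P$ a point mass and $Q$ putting mass $1-\delta$ on that point and $\delta$ spread uniformly over the remaining $M-1$ points; then $\Delta(P,Q)=\delta$ while $H(Q)-H(P) = h(\delta) + \delta\log(M-1)$, which exceeds $\delta\log(M/\delta)$ already for $M=4$, $\delta = 0.1$. The correct version has $\|P-Q\|_1 = 2\Delta$ in place of $\epsilon$ (with hypothesis $\|P-Q\|_1 \le 1/2$); the discrepancy is a factor-2 reparametrization that is harmless where the paper applies the theorem, but it is not provable as stated, and your write-up should either prove the $L_1$ version or restate the theorem with the extra factor of $2$.

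Two smaller problems sit in the pointwise step. First, ``the opposite case is symmetric'' is not: subadditivity only gives $f(a) - f(b) \le f(a-b)$ for $a \ge b$; the reverse direction $f(b) - f(a) \le f(a-b)$ genuinely uses $a - b \le 1/2$ (it fails, e.g., for $b=0.3$, $a=1$) and needs its own argument, e.g.\ that $t \mapsto f(t) - f(t+\nu)$ is increasing, so it is maximized at $t = 1-\nu$ where it equals $f(1-\nu) \le f(\nu)$ for $\nu \le 1/2$. Second, the treatment of the single coordinate with $\nu_{x^\star} > 1/2$ is not a proof: bounding its contribution by $1$ and asserting that $1$ ``is dominated by'' $\epsilon\log(M/\epsilon)$ does not bound the \emph{total}; you would need to add that term to the Jensen bound over the remaining coordinates and exhibit the slack explicitly (in the correct $L_1$ formulation this case is usually avoided altogether, since $\|P-Q\|_1 \le 1/2$ forces every $\nu_x \le 1/2$).
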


\subsection{A Reformulation of \Cref{thm:pv-main}}

In this section we state \Cref{lem:pv-hybrid} which is a slight reformulation of \Cref{thm:pv-main}
and then show how \Cref{thm:pv-main} follows from \Cref{lem:pv-hybrid}. The remaining subsections will then be devoted to the proof of \Cref{lem:pv-hybrid}.

We first introduce some additional notation for the pointer verification problem. For $s < t$, let $\pi_s^t = \pi_t \circ \pi_{t-1} \circ \cdots \pi_s$ and $(\pi^{-1})_t^s = \pi_s^{-1} \circ \cdots \circ \pi_t^{-1}$. Also let $i_s = \pi_1^s(i_0), j_s = (\pi^{-1})_r^{r-s+1}(j_0)$. Then over the distribution $\dypv$, $j_r = i_0$ and $i_r = j_0$ with probability 1. We also write $\pi_A = (\pi_1, \pi_3, \ldots, \pi_r)$ and $\pi_B = (\pi_2, \pi_4, \ldots, \pi_{r-1})$. Recall that Alice holds the permutations $\pi_A$ while Bob holds the permutations $\pi_B$.
For technical reasons, in this section, we consider protocols that get inputs sampled from a single ``mixed'' distribution, $\dmpv = \frac12(\dypv + \dnpv)$ and outputs a bit (last bit of the transcript) that aims to guess whether the input is a YES input to Pointer Verification ($\pi_1^r(i_0) = j_0$) or a NO input ($\pi_1^r(i_0) \ne j_0$). The success of a protocol is the probability with which this bit is guessed correctly. These terms are formally defined below.

\begin{definition}
	For any odd integer $r$ and any integer $n$, the distribution $\dmpv = \dmpv(r,n)$ is supported on $(S_n^{\lceil r/2 \rceil}) \times ([n]^2 \times S_n^{\lfloor r/2 \rfloor})$, and is defined by drawing $\dnpv(r,n)$ with probability 1/2 and drawing $\dypv(r,n)$ with probability 1/2.
	
	A protocol $\Pi$ is said to achieve {\em success} on a pair of inputs drawn from $\dmpv$ if the last bit of the transcript of $\Pi$, which we take as the output bit, is 1 if and only if $\pi_1^r(i_0) = j_0$.
\end{definition}

In \Cref{lem:pv-hybrid} we show that Alice and Bob cannot achieve success with probability significantly greater than 1/2 when their inputs are drawn from $\dmpv$. \Cref{thm:pv-main} follows fairly easily from \Cref{lem:pv-hybrid}.

\begin{lemma}
	\label{lem:pv-hybrid}
	For every $\ep > 0$ and every $r$, there exists $\beta, n_0$ such that for every $n \geq n_0$ the following holds: Every  $((r+1)/2, n/\log^\beta(n))$ protocol on $\dmpv$ achieves success with probability at most $1/2 + \ep$.
\end{lemma}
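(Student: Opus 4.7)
The plan is to prove \Cref{lem:pv-hybrid} by a round-elimination argument in the style of Nisan--Wigderson, strengthening the statement from $\dmpv$ to apply to a broader class of ``hybrid'' distributions $\dmpvp(r,n)$ (cf.\ \Cref{def:dmixdef}) that is closed under the conditioning operation performed in each round-elimination step. I would induct on $r$ (the number of permutations, kept odd throughout). The base case $r=1$ is essentially trivial: with $(r+1)/2=1$ round Alice's only message depends on her input $\pi_1$ alone, which under $\dmpv(1,n)$ is uniform and independent of Bob's input $(i_0,j_0)$, so the output bit has zero correlation with the answer $\One[\pi_1(i_0)=j_0]$. The inductive step will convert a $((r+1)/2,n/\log^\beta n)$-protocol on $\dmpvp(r,n)$ starting with Alice into a $(((r-2)+1)/2,n/\log^\beta n)$-protocol on $\dmpvp(r-2,n)$ starting with the ``new Alice'' (i.e.\ old Bob, after shifting permutation indices by one so that his $\pi_2,\pi_4,\dots$ are relabeled as $\pi_1,\pi_3,\dots$), at the cost of a small additive error in the success probability, to which the inductive hypothesis is then applied.

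The core of the round elimination works as follows. Alice's first message $M_1$ has length at most $c = n/\log^\beta n$, so by the chain rule $I(M_1;\pi_A\mid R)\le c$ where $R$ is the public randomness. A convexity/averaging argument then implies that, for a typical $(M_1,R)$, the KL-divergence of the conditional distribution of $\pi_A=(\pi_1,\pi_3,\dots,\pi_r)$ from its original product-of-uniforms distribution is at most $c/\Pr[\text{typical}]$, which for $\beta$ large is far smaller than $\log(n!)$. Pinsker's inequality converts this to total-variation distance, and in particular the marginals of $\pi_1$ and $\pi_r$ --- the two ``boundary'' permutations Alice holds, which sit at the two ends of the pointer chain and are exactly what the meeting-in-the-middle structure needs untouched --- remain close to uniform. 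I then fix $\pi_1=\sigma$ and $\pi_r=\tau$ at typical values, absorbing them into the pointers via $i_0'=\sigma(i_0)$ and $j_0'=\tau^{-1}(j_0)$. The residual instance is a PV problem on $r-2$ permutations $(\pi_2,\dots,\pi_{r-1})$ with new boundary pointers $(i_0',j_0')$, and the main technical claim is that its joint distribution --- after conditioning on $M_1,R,\sigma,\tau$ --- still lies in $\dmpvp(r-2,n)$. This is where the ``$+$'' slack in $\dmpvp$ over $\dmpv$ is used: the conditioning introduces residual correlations between Alice's and Bob's inputs and between the permutations and the pointers, and one must verify that these correlations stay within the quantitative tolerance that defines the class.

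By induction, the reduced protocol, which has $(r+1)/2-1 = ((r-2)+1)/2$ rounds and runs on $\dmpvp(r-2,n)$ starting with the new Alice, achieves success at most $1/2+\epsilon'$ for some $\epsilon' < \epsilon$. Combining this with the bounds on how much conditioning and fixing $(\sigma,\tau)$ can shift the success probability --- quantified via Pinsker together with the entropy-continuity estimates of \Cref{thm:ho} and \Cref{thm:cover}, and a union bound over the atypical sets for $M_1$, $R$, $\sigma$, and $\tau$ --- yields a total bound of $1/2+\epsilon$ provided $\beta$ and $n_0$ are chosen sufficiently large in terms of $\epsilon$ and $r$ (the quantitative loss per inductive layer is polylogarithmic in $n$, so $(r+1)/2$ layers can be absorbed by a single $\log^\beta n$ factor in the communication). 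The hard part, as hinted above, is the precise design of $\dmpvp(r,n)$ and the bookkeeping that shows closure under both Alice's and Bob's round-elimination steps (by the symmetry between the two parties after relabeling): because $\dypv$ forces a non-trivial correlation between Alice's and Bob's inputs from the outset, one cannot simply track marginals but must control subtle conditional independence properties throughout all $(r+1)/2$ stages of the induction, while still leaving enough slack in $\dmpvp$ to absorb the next round's conditioning.
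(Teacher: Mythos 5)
Your overall strategy — round elimination in the Nisan–Wigderson style, carried out over a class of noisy distributions $\Dmpv$/$\Dmpvp$ that is closed under the per-round conditioning, with induction on $r$ — is exactly the paper's approach, and the organization you describe (base case $r=1$, inductive step peeling off one of Alice's rounds, a final union bound over the atypical events) matches \Cref{lem:pv-hybrid-strong}, \Cref{lem:base-case}, and \Cref{lem:simulation}. However, two steps in your plan would fail as written.

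First, the base case is \emph{not} trivial. In a one-round protocol on $\dmpv(1,n)$, Alice sends $m=m(\pi_1)$ and the output bit is produced by Bob, who holds $(i_0,j_0)$ \emph{and} has seen $m$. (This is the paper's convention — see the use of $\Pi(m,i,j)$ in the proof of \Cref{cor:base-case}.) So the output is not a function of $\pi_1$ alone, and Bob can in principle exploit $m$ to gain an advantage: e.g., with $c$ bits Alice can transmit $\pi_1$ restricted to a $\Theta(c/\log n)$-subset, and Bob checks whether $i_0$ lands in it. Ruling out all such strategies requires showing that $H(\pi_1(i_0)\mid i_0,m,Z)\geq \log n-o(1)$ and then transferring this to the indicator $\One[\pi_1(i_0)=j_0]$, which is precisely the content of \Cref{lem:noisyimfn}--\Cref{lem:jxyz}. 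This is the heart of the whole argument, not a triviality — and since the induction hands you a distribution in $\Dmpvp$ rather than the original $\dmpv$, the independence you invoke between $\pi_1$ and $(i_0,j_0)$ need not even hold in the instances that arise after round elimination.

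Second, the Pinsker step in your inductive argument is quantitatively hopeless. You have $I(M_1;\pi_A\mid R)\le c=n/\log^\beta n$, so for a typical transcript the KL divergence of $\pi_A$ from its prior is $O(c)$; Pinsker then gives a total-variation bound of order $\sqrt{c}=\sqrt{n}/\log^{\beta/2}n\gg 1$, which carries no information. The conclusion that ``the marginals of $\pi_1$ and $\pi_r$ remain close to uniform'' is simply false at this granularity; in fact it cannot be salvaged, since $c$ bits \emph{can} substantially distort the global distribution of a single permutation (which only has $\Theta(n\log n)$ bits of entropy). What the paper proves instead (\Cref{lem:noisyimperm}) is the much weaker but sufficient statement that $\pi_1(i)$ has high conditional entropy for a typical coordinate $i$, and even that requires a non-obvious trick — conditioning on $i$ lying in a random subset $S$ of size $n/\poly\log n$ so that the chain-rule loss per coordinate becomes negligible. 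Relatedly, the paper does \emph{not} fix $\pi_1=\sigma,\pi_r=\tau$; it instead conditions on $(m_1,i_0,j_0)$ and has the new Bob privately re-sample $(\pi_1,\pi_r)$ from the conditional distribution, and the conditional-independence clause (item~\ref{item:dmpv5} of \Cref{def:dmixdef}) is engineered specifically to make that re-sampling distributionally correct. Fixing $\pi_1,\pi_r$ to ``typical values'' both hits the Pinsker problem above and discards the conditional-independence machinery your own description says you need.
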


We defer the proof of \Cref{lem:pv-hybrid} but first show how \Cref{thm:pv-main} follows from it. 


\begin{proof}[Proof of Theorem \ref{thm:pv-main}]
		
\Cref{lem:pv-hybrid} gives that there exists $\beta, n_0$ such that for every $n \geq n_0$, no $((r+1)/2, n/\log^\beta(n))$ protocol $\Pi$ on $\dmpv(r,n)$ achieves success with probability greater than $1/2 + \ep/4$. Suppose for the purpose of contradiction that there were an $((r-1)/2, n/\log^\beta(n)-1)$ protocol that $\ep$-distinguishes $\dypv(r,n)$ and $\dnpv(r,n)$. Then by the definition of $\ep$-distinguishability, by modifying this protocol to output an extra bit (which we interpret as the output bit), we get an $((r+1)/2, n/\log^\beta(n))$ protocol $\Pi'$ which outputs 1 with probability $p_Y$ when the inputs are drawn from $\dypv(r,n)$ and which outputs 1 with probability $p_N$ when the inputs are drawn from $\dnpv(r,n)$, where $p_Y \geq p_N + \ep$. Therefore, $\Pi'$ has probability of success of at least $1/2 + \ep/2$ when the inputs are drawn from $\dmpv(r,n)$, which contradicts \Cref{lem:pv-hybrid}.

\end{proof}

\subsection{Proof of the Main Lemma (\Cref{lem:pv-hybrid}): Setting up the Induction}

Our approach to the proof of \Cref{lem:pv-hybrid} is based on the ``round-elimination'' approach of \cite{NW}. Roughly, given inputs drawn from $\dmpv(n,r)$, the approach here is to show that after a single message $m = m(\pi_A)$ from Alice to Bob, Alice and Bob are still left with essentially a problem from $\dmpv(n,r-2)$ (with their roles reversed). Note that the distribution of $(\pi_2,\ldots,\pi_{r-1}; i_1,j_1)$, where $i_1 = \pi_1(i_0)$ and $j_1 = \pi_r^{-1}(j_0)$, is exactly $\dmpv(n,r-2)$ (with the roles of Alice and Bob switched). The crux of the \cite{NW} approach is to show that this roughly remains the case even when conditioned on the message $m = m(\pi_A)$ sent in the first round. If implemented correctly, this would lead to an inductive strategy for proving the lower bound, with the induction asserting that an additional $(r-2)/2$ rounds of communication do not lead to non-trivially high success probability. 
Of course the distributions of the inputs after conditioning on $m$ are not exactly the same as $\dmpv(n,r-2)$. Bob can definitely learns a lot of information about Alice's input $\pi_A$ 	from $m$.
So the inductive hypothesis needs to deal with distributions that retain some of the features of  $\dmpv(n,r)$ while allowing Alice and Bob to have a fair amount of information about each others inputs. In \Cref{def:dmixdef} we present the exact class of distributions with which we work. While most of the properties are similar to those used in \cite{NW} the exact definition is not immediate since we need to ensure that the bit ``Is $\pi_1^r(i_0) = j_0$" is not determinable even after a few rounds of communication. (In our definition, Item~\ref{item:indkey} in particular is the non-trivial ingredient.) In \Cref{lem:simulation} we then show that this definition supports induction on the number of rounds of communication. Finally in \Cref{lem:base-case} we show that the base-case of the induction with $r=1$ does not achieve non-trivial success probability. The proofs of \Cref{lem:base-case} and \Cref{lem:simulation} are deferred to \Cref{ssec:base-case} and \Cref{ssec:simulation} respectively. We conclude the current section with a proof of \Cref{lem:pv-hybrid} assuming these two lemmas.
	
We start with our definition of the class of ``noisy'' distributions, containing $\dmpv$. In particular, for $n,r, \delta, C$ satisfying $0 \leq \delta < 1$ and $0 \leq C < n$, we define the class of distributions $\Dmpv(n,r,\delta,C)$ in Definition \ref{def:dmixdef} below.


\begin{definition}
	\label{def:dmixdef}
	The set of {\em noisy} distributions, denoted $\Dmpv(n,r,\delta,C)$, consists of those distributions $D$ supported on $((S_n^{\lceil r/2 \rceil}) \times ([n]^2 \times S_n^{\lfloor r/2 \rfloor})$, satisfying the following properties. If we denote a sample from $D$ as $(i_0, j_0, \pi_1, \ldots, \pi_r)$, then
	\begin{enumerate}
		\item \label{item:dmpv1} \begin{enumerate}\item  $H(i_0 | \pi_1, \ldots, \pi_r) \geq \log(n) - \delta$ \item $H(j_0 | \pi_1, \ldots, \pi_r) \geq \log(n) - \delta$.\end{enumerate}
		\item $H(\pi_1, \ldots, \pi_r) \geq r \log(n!) - C$.
		\item \label{item:indkey} \begin{enumerate} \item $H(\One[\pi_1^r(i_0) = j_0] | i_0, \pi_1, \ldots, \pi_r) \geq 1 - \delta$. \item $H(\One[\pi_1^r(i_0) = j_0] | j_0, \pi_1, \ldots, \pi_r) \geq 1 - \delta$.\end{enumerate}
		\item \begin{enumerate} \item $H(j_0 | i_0, \pi_1, \ldots, \pi_r, \pi_1^r(i_0) \neq j_0) \geq \log(n) - \delta$. \item $H(i_0 | j_0, \pi_1, \ldots, \pi_r, \pi_1^r(i_0) \neq j_0) \geq \log(n) - \delta$.\end{enumerate}
		\item \label{item:dmpv5}For all odd $1 \leq t \leq r$, the following conditional independence properties hold. For all $i_0', \ldots, i_t', j_0', \ldots, j_t' \in [n]$, $\pi_{t + 2}', \pi_{t+4}', \ldots, \pi_{r-t-1}' \in S_n$,
		\begin{eqnarray}
		\pi_A \cap (\pi_1, \ldots, \pi_t, \pi_{r-t+1}, \ldots, \pi_r) \indep \pi_B &|& (i_0, \ldots, i_t) = (i_0', \ldots, i_t'), (j_0, \ldots, j_t) = (j_0', \ldots, j_t'), \nonumber\\
		&& (\pi_{t+2}, \pi_{t+4}, \ldots, \pi_{r-t-1}) = (\pi_{t+2}', \pi_{t+4}', \ldots, \pi_{r-t-1}')\nonumber.
		\end{eqnarray}
		and for all even $t$, $0 \leq t \leq r$, $i_0', i_1', \ldots, i_t', j_0', j_1', \ldots, j_t' \in [n]$, $\pi_{t+2}', \pi_{t+4}', \ldots, \pi_{r-t-1}' \in S_n$,
		\begin{eqnarray}
		\pi_B \cap (\pi_2, \ldots, \pi_t, \pi_{r-t+1}, \ldots, \pi_{r-1}) \indep \pi_A &|& (i_0, \ldots, i_t) = (i_0', \ldots, i_t'), (j_0, \ldots, j_t) = (j_0', \ldots, j_t'),\nonumber\\
		&& (\pi_{t+2}, \pi_{t+4}, \ldots, \pi_{r-t-1}) = (\pi_{t+2}', \pi_{t+4}', \ldots, \pi_{r-t-1}')\nonumber.
		\end{eqnarray}
	\end{enumerate}
	The set of {\em noisy-on-average} distributions, $\Dmpvp(n,r,\delta,C)$, consists of those distributions $D^+$ supported on $((S_n^{\lceil r/2 \rceil}) \times ([n]^2 \times S_n^{\lfloor r/2 \rfloor}) \times \mathcal{Z}$ where $\mathcal{Z}$ is some finite set and a sample $(i_0, j_0, \pi_1, \ldots, \pi_r,Z) \sim D^+$ satisfies Properties~(\ref{item:dmpv1})-(\ref{item:dmpv5}) when all quantities above are additionally conditioned on $Z$. (In particular the conditional entropies are additionally conditioned on $Z$ and the independences hold when conditioned on $Z$.)
\end{definition}

We first state a version of \Cref{lem:pv-hybrid} for every distribution $D \in \Dmpv(n,r,\delta, C)$, for sufficiently small $\delta, C$. We also show that $\dmpv$ belongs to this set for the permissible $\delta, C$, and thus \Cref{lem:pv-hybrid-strong} implies \Cref{lem:pv-hybrid}. 

\begin{lemma}
	\label{lem:pv-hybrid-strong}
	For every $\ep>0$ and odd $r$, there exists $\beta$ and $n_0$ such that for every $n \geq n_0$, and every $D \in \Dmpv(n,r,1/\log^{\beta} n, n/ \log^{\beta} n)$ it is the case that every $((r+1)/2, n/\log^\beta(n))$-protocol achieves success with probability at most $1/2 + \ep$ on $D$.
\end{lemma}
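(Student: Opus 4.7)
My plan is to prove \Cref{lem:pv-hybrid-strong} by induction on the odd number of permutations $r$, carried out in the slightly larger class $\Dmpvp(n,r,\delta,C)$ so that the hypothesis can absorb the residual distributions produced by conditioning on a message. The inductive claim will be: for every odd $r$ and every sufficiently small $\delta, C$, no $((r+1)/2, c)$-protocol achieves success greater than $1/2 + g(r,n,\delta,C,c)$ on any $D \in \Dmpvp(n,r,\delta,C)$, where $g$ is an explicit function that tends to $0$ when $c/n, \delta$ and $C/n$ tend to $0$ and $n \to \infty$. Choosing $\beta$ large enough in terms of $r$ and the target $\ep$, and instantiating at $(\delta, C, c) = (1/\log^\beta n,\; n/\log^\beta n,\; n/\log^\beta n)$ will give $g \leq \ep$; this proves \Cref{lem:pv-hybrid-strong} since $\dmpv(r,n) \in \Dmpv(n,r,0,0) \subset \Dmpvp(n,r,\delta,C)$ for every $\delta,C \geq 0$ (taking the auxiliary variable $Z$ to be a constant).

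The base case $r=1$ is exactly \Cref{lem:base-case}: property~\ref{item:indkey} of \Cref{def:dmixdef} guarantees that each of Alice and Bob is almost completely uncertain about $\One[\pi_1(i_0) = j_0]$ even given her/his own input, so by a standard cut-and-paste argument any one-round (or zero-round) protocol matches this bit with probability at most $1/2 + O(\delta) + O(c/n)$.

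For the inductive step, consider an $((r+1)/2, c)$-protocol $\Pi$ on some $D \in \Dmpvp(n,r,\delta,C)$ with auxiliary variable $Z$. Let $m$ denote Alice's first message, of length $c_1$. Setting $Z' = (Z, m)$ and passing to the residual problem on the ``shrunken'' instance $(i_1, j_1, \pi_2, \pi_3, \ldots, \pi_{r-1})$, where $i_1 = \pi_1(i_0)$ and $j_1 = \pi_r^{-1}(j_0)$, the simulation lemma (\Cref{lem:simulation}) will say that the conditional law of this tuple given $Z'$ lies in $\Dmpvp(n, r-2, \delta', C')$ with $\delta' \leq \delta + O(c_1/n) + o(1)$ and $C' \leq C + c_1 + O(1)$. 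The residual protocol has $(r-1)/2$ rounds starting from Bob, and the re-indexing $(\pi_2,\ldots,\pi_{r-1})$ swaps the parity of ``odd'' and ``even'' indexed permutations so Alice and Bob's roles line up with \Cref{def:dmixdef} for the new $r-2$. Applying the inductive hypothesis to this residual distribution and protocol gives the desired bound on success probability of $\Pi$, and the parameter degradation summed across $\leq (r+1)/2$ rounds keeps $\delta',C',c'$ well inside the permissible range provided $\beta$ grows linearly in $r$.

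The bulk of the work, and the main obstacle, is establishing the simulation lemma. Properties 1, 2, 4 of \Cref{def:dmixdef} will follow from routine entropy accounting: conditioning on a $c_1$-bit message loses at most $c_1$ bits of joint entropy, and by averaging over $m$ via Markov (or a concavity argument) most messages leave each single-coordinate entropy essentially intact; then \Cref{thm:ho} converts the bound on entropy loss into closeness in total variation to a distribution satisfying the properties exactly. The genuinely delicate point is property~\ref{item:indkey}, which demands that even after seeing $m$ and $(i_0, \pi_1, \ldots, \pi_r)$ (or the symmetric version with $j_0$), the bit $\One[\pi_1^r(i_0)=j_0]$ remains nearly uniform; crucially $m$ is a function of $\pi_A$ alone, so conditional on $\pi_A$ (hence on $m$) and on the indicator value, $j_0$ retains the conditional uniformity built into $D$, and one can chase this through the chain rule of mutual information. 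Finally, the conditional independence statements (property~\ref{item:dmpv5}) survive the message because $m$ is a measurable function of $\pi_A$ only, which is exactly why folding $m$ into the ``on average'' variable $Z$ of the class $\Dmpvp$ is the right formulation to carry the induction.
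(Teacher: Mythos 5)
Your proposal is correct and follows essentially the same route the paper takes: induction on the odd parameter $r$, with the base case supplied by \Cref{lem:base-case} (via \Cref{cor:base-case}) and the inductive step by \Cref{lem:simulation}/\Cref{lem:inductivestep}, with parameters degrading by a fixed polylog factor per round so that $\beta$ can be chosen as an increasing function of $r$. The one genuine variation is that you propose carrying the inductive hypothesis over the noisy-on-average class $\Dmpvp$ (folding the first message into the auxiliary variable $Z$), whereas the paper's \Cref{lem:simulation} uses a Markov/union-bound step to extract a single ``good'' tuple $(m_1',i_0',j_0')$ for which the conditional law is in $\Dmpv$, keeping the inductive statement inside $\Dmpv$. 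Both are valid: your phrasing is slightly cleaner bookkeeping (and is what \Cref{lem:base-case}'s general-$Z$ formulation already supports), while the paper's phrasing makes the intermediate object a concrete single distribution. I'd only caution that the inductive-step sketch glosses over the resampling of $(\tilde\pi_1,\tilde\pi_r)$ by Alice using private randomness and property~\ref{item:dmpv5}, which is the precise mechanism by which the outer shell is discarded without changing the law of the simulated transcript; that step, and the accompanying Lemma~\ref{lem:abcindep}, are what make the reduction sound and are worth stating explicitly.
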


\begin{remark}
	In the lemma statement we have suppressed the dependence of $\beta$ on $r$. (The dependence of $\beta$ on $\epsilon$ is minimal. Essentially only $n_0$ is affected by $\epsilon$.) A careful analysis (based on the remarks after \Cref{lem:base-case} and \Cref{lem:simulation}) yields that $\beta$ grows exponentially in $r$, though we omit the simple but tedious bookkeeping.
\end{remark}

The proof of \Cref{lem:pv-hybrid-strong} is via induction on $r$; the below lemma gives the main inductive step, which says that if one cannot solve the pointer verification problem with $r-2$ permutations then one cannot hope to solve the problem on $r$ permutations even with an additional round of (not too long) communication.

\begin{lemma}[Inductive step]\label{lem:simulation}
	For every $\ep_1 > \ep_2 > 0$, odd $r$ and $\beta_2$ there exists $\beta_1$ and $n_0$ such that for every $n \geq n_0$ the following holds:
	Suppose there exists $D \in \Dmpv(n,r,1/\log^{\beta_1} n, n/\log^{\beta_1}n)$ and an $((r+1)/2,n/\log^{\beta_1} n)$-protocol $\Pi$ that achieves success $1/2 + \ep_1$ on $D$. 
	Then there exists $\tilde{D} \in \Dmpv(n,r-2,1/\log^{\beta_2} n, n/\log^{\beta_2}n)$ and an $((r-1)/2,n/\log^{\beta_2} n)$-protocol $\tilde{\Pi}$ that achieves success $1/2 + \ep_2$ on $\tilde{D}$.

\end{lemma}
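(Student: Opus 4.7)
The plan is to execute a Nisan–Wigderson-style round elimination, adapted to the noisy-distribution framework of \Cref{def:dmixdef}. Given the $((r+1)/2)$-round protocol $\Pi$, I would peel off Alice's first message $m_1 = m_1(\pi_A)$ (of length at most $c := n/\log^{\beta_1} n$) and produce a $((r-1)/2)$-round protocol on a new distribution supported on $r-2$ permutations. The driving observation is that $m_1$ is measurable with respect to $\pi_A$ alone, so $I(m_1; W \mid \pi_A) = 0$ for any variable $W$, and the length $c$ is too small to create serious correlation between Alice's outer permutations $\pi_1, \pi_r$ and the middle permutations $\pi_2, \ldots, \pi_{r-1}$.

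\textbf{Construction.} After fixing the public coins of $\Pi$ by averaging so that $\Pi$ is deterministic, I would introduce the auxiliary variable $Z = (m_1, \pi_1, \pi_r)$ and set $i_1 := \pi_1(i_0)$ and $j_1 := \pi_r^{-1}(j_0)$. The reduced distribution $\tilde D := \tilde D_z$ is the conditional distribution of $(i_1, j_1, \pi_2, \ldots, \pi_{r-1})$ given $Z = z$ for a typical $z$; the $r-2$ permutations $\pi_2, \ldots, \pi_{r-1}$ play the role of new PV permutations and $(i_1, j_1)$ of the new source/target pointers. The composition $\pi_2^{r-1}(i_1) = \pi_1^{r-1}(i_0)$ together with $j_0 = \pi_r(j_1)$ yields $\One[\pi_2^{r-1}(i_1) = j_1] = \One[\pi_1^r(i_0) = j_0]$, so the PV answer is preserved. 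The new protocol $\tilde\Pi$ hardcodes $z$ into both players; since old Bob is the next to speak in $\Pi$, the roles swap. New Alice ($=$ old Bob) recovers $i_0 = \pi_1^{-1}(i_1)$ and $j_0 = \pi_r(j_1)$ from $z$ and her input, then simulates old Bob's messages in rounds $2,\ldots,(r+1)/2$ of $\Pi$; new Bob ($=$ old Alice) uses $\pi_1, \pi_r$ from $z$ together with her inputs $\pi_3, \pi_5, \ldots, \pi_{r-2}$ to simulate old Alice. Averaging over $z$ then selects one with success at least $1/2 + \epsilon_2$ on $\tilde D_z$.

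\textbf{Verification of membership.} The main technical work is to show that $\tilde D_z \in \Dmpv(n, r-2, 1/\log^{\beta_2} n, n/\log^{\beta_2} n)$ for a typical $z$. Properties 1, 3, and 4 of \Cref{def:dmixdef} transfer essentially unchanged from $D$: conditioning further on $m_1$ given $\pi_A$ is a no-op, and $\pi_1, \pi_r$ are already inside the conditioning of the corresponding original inequalities, so the entropies of $i_1$, $j_1$, and the PV indicator stay close to their original values. For Property 2, I would bound $I(Z; \pi_2, \ldots, \pi_{r-1})$ by the sum $I(\pi_1, \pi_r; \pi_2, \ldots, \pi_{r-1}) + H(m_1 \mid \pi_1, \pi_r) \leq C_1 + c$, where the first term is at most $C_1$ because $H(\pi_1, \ldots, \pi_r) \geq r\log(n!) - C_1$ in the original $D$. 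A Markov-style union bound over the relevant expected-entropy inequalities then produces a single $z$ for which every pointwise inequality holds with the required $1/\log^{\beta_2}n$ and $n/\log^{\beta_2}n$ slack, provided $\beta_1$ is chosen polynomially larger than $\beta_2$.

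\textbf{Main obstacle.} The principal difficulty is verifying Property 5, the web of conditional independences that the paper itself identifies as the non-trivial ingredient of \Cref{def:dmixdef}. Each required independence in $\tilde D$ must be derived from a corresponding independence in $D$ after shifting indices by one and swapping the roles of Alice and Bob. The delicate point is to confirm that augmenting the conditioning by $Z = (m_1, \pi_1, \pi_r)$ does not spuriously couple the two (now swapped) halves of the remaining permutations: it does not, because $m_1$ is $\pi_A$-measurable and the standard fact ``if $A \indep B \mid C$ and $D = f(A, C)$ then $A \indep B \mid C, D$'' lets us absorb $m_1$ into the conditioning, while $\pi_1$ and $\pi_r$ are themselves two of old Alice's permutations being removed, so their presence in $Z$ eliminates variables rather than creating cross-player dependence. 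Carefully composing these shifts over the even/odd cases and all admissible $t$ in Property 5 is where the intricate bookkeeping lives, and it is this step that forces $\beta_1$ to be polynomially larger than $\beta_2$.
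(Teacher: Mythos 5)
Your overall round-elimination blueprint matches the paper's, but your choice of auxiliary variable $Z=(m_1,\pi_1,\pi_r)$ differs from the paper's $Y=(m_1,i_0,j_0)$, and this difference is not a cosmetic simplification --- it breaks the reduction.

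\textbf{The structural problem.} In the reduced $(r-2)$-permutation PV instance, the player who speaks first must be the one holding $\pi_2,\pi_4,\ldots,\pi_{r-1}$ (the ``pure-permutation'' side of \Cref{def:dmixdef}), and the reduced pointers $(i_1,j_1)$ must be held by the \emph{other} player, who holds $\pi_3,\pi_5,\ldots,\pi_{r-2}$; this is exactly the $\Dmpv(n,r-2,\cdot,\cdot)$ support structure, and it is also what makes the new instance a genuine $(r-2)$-round pointer-chasing problem. With your $Z$, however, the first speaker (old Bob's permutation holder) cannot reconstruct old Bob's round-$2$ message $m_2(m_1,i_0,j_0,\pi_2,\ldots,\pi_{r-1})$, because she has neither $(i_0,j_0)$ (not in $Z$) nor $(i_1,j_1)$ (not her input under the $\Dmpv$ assignment). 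Your fix --- giving $(i_1,j_1)$ to this player so that she recovers $i_0=\pi_1^{-1}(i_1),\,j_0=\pi_r(j_1)$ from $z$ --- puts the pointers together with $\pi_2$ on the same side, which is precisely what the $\Dmpv$ structure forbids. Concretely, with your assignment that party can privately chase $i_1\mapsto\pi_2(i_1)$ and $j_1\mapsto\pi_{r-1}^{-1}(j_1)$ for free, so you have effectively produced an $(r-4)$-PV instance, not an $(r-2)$-PV instance, and the claim $\tilde D_z\in\Dmpv(n,r-2,1/\log^{\beta_2}n,n/\log^{\beta_2}n)$ is simply false.

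\textbf{What the paper does instead, and why it is harder.} The paper conditions on $Y=(m_1,i_0,j_0)$, so the first speaker of $\tilde\Pi$ can reconstruct old Bob, and $(i_1,j_1)$ naturally go to the other player, giving the right $\Dmpv(r-2)$ structure. The price is that neither player knows $(\pi_1,\pi_r)$, so the party simulating old Alice must \emph{privately sample} $(\tilde\pi_1,\tilde\pi_r)$ from the conditional distribution given only her own information $(i_1,j_1,\pi_3,\ldots,\pi_{r-2})$. Showing that this private sample is distributed as in the original requires exactly Property~\ref{item:dmpv5} of \Cref{def:dmixdef} at $t=1$; this is where the independence machinery does real work, and it is absent from your argument because your $Z$ hands $\pi_1,\pi_r$ directly to both players. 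Moreover, because $i_1=\pi_1(i_0)$ is now a noisy image under an unfixed $\pi_1$, the entropy conditions (1) and (4) for $\tilde D$ are no longer ``transferred unchanged'' as you suggest; they require the image-entropy lemmas (\Cref{lem:noisyimz}, \Cref{lem:noisyimz2}, and the base-case machinery culminating in \Cref{lem:noisyimperm}) to control $H(\pi_1(i_0)\mid i_0,j_0,m_1,\ldots)$. So your proposal is simultaneously simpler (no private sampling, trivial entropy transfer) and incorrect; the simplicity is a symptom of having conditioned on the wrong variables.

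\textbf{Minor point.} The paper's remark after \Cref{lem:simulation} notes that $\beta_2$ (resp.~$\beta_1$) may be taken to grow linearly, not polynomially, so your closing claim that Property 5 ``forces $\beta_1$ to be polynomially larger than $\beta_2$'' overstates the loss.
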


\begin{remark} A careful analysis of the proof yields that $\beta_2$ grows linearly with $\beta_1$ with some mild conditions on $n_0$ and $\ep_1-\ep_2$.
\end{remark}

The proof of \Cref{lem:pv-hybrid-strong} proceeds by using Lemma \ref{lem:simulation} repeatedly, to reduce the case with general $r$ to the case with $r=1$. In the case $r=1$, Alice is given one permutation $\pi_1$, Bob is given indices $i_0, j_0$, and Alice can communicate one message to Bob, who has to then decide whether $\pi_1(i_0) = j_0$ or not. 
The next lemma, \Cref{lem:base-case}, asserts that the pointer verification problem with $r=1$ cannot be solved in one round with less than $n/\log^{O(1)}(n)$ communication. In fact the lemma is a stronger one, where we show that if all the statements hold conditioned on a random variable $Z$, then the entropy of the indicator of the outcome is large even when conditioned on $Z$. Setting $Z$ to be a constant immediately yields the base case of the induction with $r=1$, as noted in \Cref{cor:base-case}. (We note that we need the stronger version stated in the lemma, i.e., with a general random variable $Z$, in the proof of \Cref{lem:simulation}.)

\begin{lemma}[Base case]\label{lem:base-case}
	There exists $0 < \ep_1^* < 1$ and $\ep_2^*$ such that  for every $\tilde{\beta}$ there is $n_0$ such that the following holds for every $n \geq n_0$. Let $\beta = (\tilde{\beta}+\ep_2^*)/\ep_1^*$, $\delta = 1/\log^{\beta}n$ and $C,C' = n/\log^{\beta} n$. Suppose $(i,j,\pi,Z)$ are drawn from a distribution $D$, where $Z$ is a random variable that takes on finitely many values, such that the following properties hold:
	\begin{enumerate}
		\item \label{item:base-case-1} $H(i | \pi, Z) \geq \log(n) - \delta$.
		\item $H(\pi | Z) \geq \log(n!)- C$.
		\item $H(\One[\pi(i) = j] | \pi, i, Z) \geq 1-\delta$.
		\item \label{item:base-case-4} $H(j | \pi, i, \One[\pi(i) \neq j], Z) \geq \log(n) - \delta$.
	\end{enumerate}
    Then for every deterministic function $m = m(\pi, Z)$ with $m \in \{0,1\}^{C'}$ we have the following:
	\begin{align}
	H(\pi(i) | i,m, Z)& \geq \log n - 1/\log^{\tilde{\beta}}n\label{eq:lem-part1} \\
	\mbox{~and~~} H(\One[\pi(i) = j] | m,i,j, Z)& \geq 1- 1/\log^{\tilde{\beta}}n. \label{eq:lem-part2}
	\end{align}
	
\end{lemma}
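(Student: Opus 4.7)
The plan is to derive both conclusions by establishing a sharp ``permutation spread'' inequality controlling how much a short message can reveal about $\pi(i)$ for a random coordinate $i$, and then combining this with the mixture structure of $j$ given by properties~3 and~4. The guiding intuition is that while a $C'$-bit message may reveal substantial information about $\pi$ as a whole, only a $1/n$ fraction of that information concerns $\pi(i)$ for a uniformly random index $i$. Property~1 gives $I(i;\pi\mid Z)\le \delta$, so $i$ is nearly independent of $\pi$ given $Z$, and property~2 together with $H(m\mid Z)\le C'$ gives $\E_{m,Z}[D_{KL}(P_{\pi\mid m,Z}\,\Vert\,U_{S_n})]\le C+C'$ via $H(\pi\mid m,Z)=H(\pi\mid Z)-I(\pi;m\mid Z)$. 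Using the elementary fact that $\pi(i)$ has uniform marginal on $[n]$ whenever $i$ is uniform and independent of $\pi\in S_n$, the first conclusion reduces to showing $I(\pi(i);i,m,Z)\le 1/\log^{\tilde\beta}n$.

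The crux is the following per-distribution bound (the ``spread lemma''): for any distribution $P$ on $S_n$ with $D_{KL}(P\,\Vert\,U_{S_n})\ll n\log n$, and for uniform $i$ independent of $\pi\sim P$,
\[
\log n - H(\pi(i)\mid i,\pi\sim P)\ \le\ \frac{D_{KL}(P\,\Vert\,U_{S_n})+O(\log n)}{n}.
\]
I would prove this via the chain-rule identity $H(\pi(i)\mid i)=H(P)-H(\pi\mid i,\pi(i))$, tightening the trivial bound $H(\pi\mid i,\pi(i))\le\log(n-1)!$ by exploiting the doubly-stochastic structure of the matrix $\big(\Pr_P[\pi(i_0)=j_0]\big)_{i_0,j_0\in[n]^2}$ in the spirit of Bregman-type inequalities for permanents. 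Applying the spread lemma conditionally on each $(m_0,z_0)$ and averaging over $(m,Z)$, together with a Markov-style truncation to handle rare events where $D_{KL}(P_{\pi\mid m,Z}\,\Vert\,U_{S_n})$ is large---which is where the factor $\epsilon_1^*<1$ enters---then yields $\log n-H(\pi(i)\mid i,m,Z)\le 1/\log^{\tilde\beta}n$ for $\tilde\beta=\epsilon_1^*\beta-\epsilon_2^*$.

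For the second conclusion I would use properties~3 and~4 to establish that the conditional law of $j$ given $(\pi,i,Z)$ is approximately the mixture ``$j=\pi(i)$ with probability $1/2$, else $j$ uniform on $[n]\setminus\{\pi(i)\}$,'' so $j$ depends on $\pi$ (given $i,Z$) only through $\pi(i)$, up to a small error. Marginalizing over $\pi$ given $(m,i,Z)$ and invoking the first conclusion, this yields $\Pr[j=j_0\mid m,i,Z]\approx 1/n$, after which Bayes' rule gives $\Pr[\pi(i)=j\mid m,i,j,Z]\approx 1/2$. Translating back via a quantitative bound on binary entropy near $1/2$ (a Pinsker-type estimate) produces the second conclusion, with a further constant loss absorbed into $\tilde\beta$.

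The main obstacle is the spread lemma. Naive subadditivity $H(\pi\mid m,Z)\le \sum_i H(\pi(i)\mid m,Z)$ carries an inherent $\approx n\log e$ slack and translates into only $\log n-H(\pi(i)\mid i,m,Z)\le \log e + (C+C')/n$, whose $\Theta(1)$ additive error is fatal for the target $1/\log^{\tilde\beta}n$. Closing this gap requires exploiting that $\pi(1),\ldots,\pi(n)$ are constrained to be a permutation (not just a vector in $[n]^n$), which is precisely what Bregman-type refinements capture. A secondary subtlety is that the per-distribution spread bound degrades for ``bad'' values of $(m,Z)$ where $P_{\pi\mid m,Z}$ is far from uniform, so the averaging step requires a careful truncation argument to convert a per-$(m_0,z_0)$ bound into the target bound on $H(\pi(i)\mid i,m,Z)$.
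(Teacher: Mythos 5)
Your high-level architecture matches the paper's: reduce \eqref{eq:lem-part2} to \eqref{eq:lem-part1} via a ``transfer'' step, prove \eqref{eq:lem-part1} by a per-distribution bound on $\log n - H(\pi(i)\mid i)$ applied to the conditional law $P_{\pi\mid m,Z}$, and then average over $(m,Z)$ with a Markov-type truncation. You also correctly identified the central obstacle: naive subadditivity $H(\pi)\le\sum_i H(\pi(i))$ has a $\Theta(1)$ per-coordinate slack ($n\log n - \log n!\approx n\log e$), which destroys any hope of an $o(1)$ final bound.

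The problem is that the ``spread lemma,'' which is the load-bearing step, is false as stated. Take $P$ uniform over permutations of $[n]$ that fix $m$ prescribed points (say $\pi$ restricted to $\{n-m+1,\dots,n\}$ is the identity), with $\log n \ll m \ll n$, and $i$ uniform and independent of $\pi$. Then $D_{KL}(P\Vert U_{S_n}) = \log n! - \log(n-m)! = m\log n - O(m^2/n)$, which is $\ll n\log n$, while a direct computation gives
\[
\log n - H(\pi(i)\mid i) \;=\; \frac{m}{n}\log n \;-\; \Bigl(1-\frac mn\Bigr)\log\Bigl(1-\frac mn\Bigr)
\;=\; \frac{m}{n}\log n + \frac{m}{n}\log e + O\!\left(\frac{m^2}{n^2}\right).
\]
The excess over $\bigl(D_{KL}(P\Vert U_{S_n}) + O(\log n)\bigr)/n$ is thus $\approx (m/n)\log e$, which dominates $\log n/n$ whenever $m\gg\log n$. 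So the additive $O(\log n)/n$ slack in your proposed inequality should really be a multiplicative $(1+\Theta(1/\log n))$ correction (equivalently an additive $\Theta(D_{KL}/(n\log n))$ term). This is recoverable in principle for the intended application (it only costs you a polylog factor), but as written the spread lemma does not hold, and more importantly the Bregman-type mechanism you invoke to prove it is not carried out; it is genuinely nontrivial to see how Bregman's inequality, which bounds permanents of $0/1$ matrices (equivalently $H(P)$ for $P$ uniform on its support in terms of \emph{support sizes}), would yield a bound in terms of the \emph{entropies} $H(\pi(i))$.

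The paper sidesteps this entirely. Rather than tightening subadditivity over all $n$ coordinates, Lemma~\ref{lem:noisyimperm} conditions on $i$ falling in a random set $S$ of size $k = n/\log^{\beta/8}n$ and applies the \emph{function} version, Lemma~\ref{lem:noisyimfn}, to $\pi|_S : S\to[n]$. Because $k\ll n$, the restricted map is close to a uniform function, so $\sum_{i\in S}H(\pi(i)) - H(\pi(S))$ is only $O(k^2/n)$ rather than $\Theta(k)$, i.e., $O(k/n)=1/\mathrm{polylog}(n)$ per coordinate. The price is a polynomially worse exponent (the paper ends up with something of the shape $\mathrm{polylog}(n)\cdot(C/n)^{1/16}$ rather than your $\approx(C+C')/n$), but no permanent inequality is needed and every step is elementary (Pinsker plus two standard entropy-versus-TV bounds). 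For \eqref{eq:lem-part2}, your Bayes/mixture sketch is in the same spirit as the paper's Lemma~\ref{lem:jxyz}, which instead runs the identity $H(J\mid Y)=H(J)+H(Y\mid J)-H(Y)$ directly; the latter is cleaner, but your route would likely work with care. The real gap is the spread lemma.
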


\begin{remark} The proof shows that $\beta$ grows linearly with $\tilde \beta$ provided that $n_0$ is sufficiently large (as a function of $\tilde \beta$).
\end{remark}

\begin{corollary}
	\label{cor:base-case}
	For every $\ep>0$, there exists $\beta_0$ and $n_0$ such that for every $n \geq n_0$, and every $D \in \Dmpv(n,1,1/\log^{\beta_0} n, n/ \log^{\beta_0} n)$ it is the case that every $(1, n/\log^{\beta_0}(n))$-protocol achieves success with probability at most $1/2 + \ep$ on $D$.
\end{corollary}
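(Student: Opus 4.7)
The plan is to obtain the corollary as a direct specialization of \Cref{lem:base-case} by taking the auxiliary random variable $Z$ there to be trivial (constant). A brief preliminary reduction handles public randomness: any $(1, n/\log^{\beta_0}(n))$-protocol on inputs from a noisy distribution consists of a single message $m = m(\pi_1, R)$ of at most $n/\log^{\beta_0}(n)$ bits sent by Alice (who speaks first), followed by an output bit (the last bit of the transcript), where $R$ denotes the public randomness; by averaging over $R$ one may fix a value that does not decrease the success probability, and after padding we may assume the resulting deterministic message $m(\pi_1)$ takes values in $\{0,1\}^{C'}$ with $C' = n/\log^{\beta_0} n$. Without loss of generality the output bit is computed by Bob (the Alice-decider case is even easier, and handled by the analogous bound $H(\One[\pi_1(i_0) = j_0] \mid \pi_1) \geq 1 - 1/\log^{\beta_0} n$ derived directly from item 3(a) of \Cref{def:dmixdef} and the fact that conditioning reduces entropy).

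Next, with $Z$ constant, the four hypotheses of \Cref{lem:base-case} reduce exactly to items 1(a), 2, 3(a), and 4(a) of the $r=1$ specialization of \Cref{def:dmixdef}; the remaining items of the noisy-distribution definition hold but are not needed here (and the conditional-independence item 5 is vacuous for $r=1$). Choosing a constant $\tilde \beta$ to be fixed at the end and setting $\beta_0 = (\tilde \beta + \ep_2^*)/\ep_1^*$ as prescribed by \Cref{lem:base-case}, invoking the lemma on the deterministic message $m(\pi_1)$ yields the key bound
\[
H\bigl(\One[\pi_1(i_0) = j_0] \bigm| m, i_0, j_0\bigr) \;\geq\; 1 - 1/\log^{\tilde \beta} n.
\]

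The last step is the routine conversion from a near-maximal conditional binary entropy to a near-uniform best-predictor success probability. Bob's output bit is a deterministic $\{0,1\}$-valued function of $(m, i_0, j_0)$, so its success probability is at most $\E[\max(p, 1-p)]$, where $p = p(m, i_0, j_0) = \Pr[\pi_1(i_0) = j_0 \mid m, i_0, j_0]$. Writing $q = \min(p, 1-p)$, concavity of the binary entropy $H_b$ together with Jensen's inequality gives
\[
1 - 1/\log^{\tilde \beta} n \;\leq\; \E[H_b(q)] \;\leq\; H_b(\E[q]),
\]
and the Taylor expansion $H_b(1/2 - \eta) = 1 - \Theta(\eta^2)$ around $1/2$ then yields $\E[q] \geq 1/2 - O(1/\log^{\tilde \beta/2} n)$, hence a success probability of at most $1/2 + O(1/\log^{\tilde \beta/2} n)$. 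Choosing $\tilde \beta$ a sufficiently large constant in terms of $\ep$ (e.g., any $\tilde \beta \geq 1$ suffices once $n_0$ is large enough) and $n_0$ accordingly completes the argument. The genuine content lies entirely in \Cref{lem:base-case} itself; for the corollary the only real obstacle is the routine entropy-to-advantage conversion and careful bookkeeping of the parameter chain $\ep \to \tilde \beta \to \beta_0 \to n_0$.
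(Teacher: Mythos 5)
Your proof is correct and takes essentially the same route as the paper's: instantiate \Cref{lem:base-case} with a constant (trivial) $Z$, observe that the noisy-distribution hypotheses of \Cref{def:dmixdef} for $r=1$ restrict exactly to the four preconditions of that lemma, and then convert the resulting high conditional entropy of $\One[\pi(i)=j]$ given $(m,i,j)$ into an upper bound on the success probability of a deterministic predictor of that bit. The only minor cosmetic differences are (i) you make explicit the reduction to deterministic protocols by averaging over public randomness and the (anyway vacuous) Alice-decider case, which the paper leaves implicit, and (ii) your entropy-to-advantage step argues directly via $\E[\max(p,1-p)]$, Jensen, and a Taylor expansion of the binary entropy, whereas the paper phrases the same estimate using the data-processing inequality after conditioning on the output bit $\Pi(m,i,j)$; these are routine and interchangeable.
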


\begin{proof}
	Recall that a 1-round distribution $D \in \Dmpv(n,1,\delta,C)$ is supported on triples $(\pi,i,j)$ and the goal is to determine if $\pi(i) = j$. We apply \Cref{lem:base-case} with $Z = 0$ (i.e., a constant). Given $\ep>0$ we let $\tilde{\beta} = 1$ and let $\beta$ be as given by \Cref{lem:base-case}. Further let $n'_0$ denote the lower bound on $n$ returned by \Cref{lem:base-case}. Let $\epsilon'$ be such that a binary variable of entropy at least $1 - \ep'$ is Bernoulli with bias in the range $[1/2-\ep,1/2+\ep]$ ($\ep'= O(\ep^2)$ works). We prove the claim for $\beta_0 = \beta$ and $n_0 = \max\{n'_0,2^{1/(\epsilon')}\}$ (so that $\log^{\tilde{\beta}} n \leq \ep'$ for all $n \geq n_0$). 
	
	By definition of $\Dmpv(n,1,1/\log^{\beta_0} n, n/\log^{\beta_0}n)$, we have that for $(\pi,i,j) \sim D$, the conditions (\ref{item:base-case-1})-(\ref{item:base-case-4}) of \Cref{lem:base-case} hold for $(\pi,i,j,Z)$ (where $Z$ is simply the constant $0$). Thus \Cref{lem:base-case} asserts that $H(\One[\pi(i) = j] | m,i,j, Z) \geq 1- 1/\log^{\tilde{\beta}}n \geq 1- \ep'$ for any message $m = m(\pi) \in \{0,1\}^{C'}$ sent by Alice. Let $\Pi(m,i,j)$ denote the output bit of the protocol output by Bob. Since this is a deterministic function of $m,i,j$ we have, by the data processing inequality,  that $H(\One[\pi_1(i_0) = j_0] | \Pi(m_1, i_0, j_0)) \geq 1 - \ep'$. By the choice of $\ep'$ and Jensen's inequality (to average over the conditioning on $\Pi(m,i,j)$) we have that 
$$
\Pr \left[ \One[\pi(i) = j] = \Pi(m,i,j) \right] \leq 1/2 + \ep,
$$
which verifies that the success probability of the protocol $\Pi$ is at most $1/2 + \ep$ as asserted.	
\end{proof}

Armed with \Cref{lem:simulation} and \Cref{cor:base-case} we are now ready to prove \Cref{lem:pv-hybrid-strong}.

\begin{proof}[Proof of \Cref{lem:pv-hybrid-strong}]
	We prove the lemma by induction on $r$. If $r=1$, then \Cref{cor:base-case} gives us the lemma. Assume now that the lemma holds for all odd $r' < r$. In particular, let $\beta_{r-2}$ and $n_{0,r-2}$ be the parameters given by the lemma for $r-2$ rounds and parameter $\ep/2$. We now apply \Cref{lem:simulation} with parameters $\ep_1 = \ep$, $\ep_2 = \ep/2$, $r$ rounds and $\beta_2 = \beta_{r-2}$. Let $n'_0$ and $\beta_1$ be the parameters given to exist by \Cref{lem:simulation}. We verify the inductive step with $n_{0,r} = \max\{n_{0,r-2},n'_0\}$ and $\beta_r = \beta_1$. Fix $D \in \Dmpv(n,r,1/\log^{\beta_r}n, n/\log^{\beta_r}n)$ and assume for contradiction that an $((r+1)/2,n/\log^{\beta_r}n)$-protocol achieves success $1/2 + \ep$ on $D$. Then by \Cref{lem:simulation} we have that there exists $\tilde{D} \in \Dmpv(n,r-2,1/\log^{\beta_{r-2}}n, n/\log^{\beta_{r-2}}n)$ and an $((r-1)/2,n/\log^{\beta_{r-2}}n)$-protocol $\tilde{\Pi}$ that achieves success $1/2 +\ep/2$ on $\tilde{D}$, which contradicts the inductive hypothesis.
	\end{proof}

We finally show how \Cref{lem:pv-hybrid} follows from \Cref{lem:pv-hybrid-strong} (which  amounts to verifying the $\dmpv$ satisfies the requirements of membership in $\Dmpv$ for appropriate choice of parameters). 

\begin{proof}[Proof of \Cref{lem:pv-hybrid}]
We claim that for each odd integer $r$, $\dmpv(r,n) \in \Dmpv(n,r, 2/n, 0)$ for sufficiently large $n$. To verify this, note that if $(\pi_1, \ldots, \pi_r, i_0, j_0)$ are drawn from $\dmpv(r,n)$, then
\begin{enumerate}
	\item $H(i_0 | \pi_1, \ldots, \pi_r) = H(j_0 | \pi_1, \ldots, \pi_r) = \log(n)$.
	\item $H(\pi_1, \ldots, \pi_r) = r \cdot \log(n!)$.
	\item $H(\One[\pi_1^r(i_0) = j_0] | i_0, \pi_1, \ldots, \pi_r) = H(\One[\pi_1^r(i_0) = j_0] | j_0, \pi_1, \ldots, \pi_r) = h(1/2 + 1/(2n)) \geq 1-1/n^2$.
	\item $H(j_0 | i_0, \pi_1, \ldots, \pi_r, \pi_1^r(i_0) \neq j_0) = H(i_0 | j_0, \pi_1, \ldots, \pi_r, \pi_1^r(i_0) \neq j_0) = \log(n-1) \geq \log(n) - 2/n$, for sufficiently large values of $n$.
	\item To verify the conditional independence properties (5) from Definition \ref{def:dmixdef}, first fix any odd $t$ such that $1 \leq t \leq r$, and pick any $i_0', \ldots, i_t', j_0', \ldots, j_t' \in [n]$ and $\pi_{t+2}',\pi_{t+4}',\ldots, \pi_{r-t-2}' \in S_n$. Given that
	$$
	\{ (i_0, \ldots, i_t) = (i_0', \ldots, i_t'), (j_0, \ldots, j_t) = (j_0', \ldots, j_t'), (\pi_{t+2}, \pi_{t+4}, \ldots, \pi_{r-t-1}) = (\pi_{t+2}', \pi_{t+4}', \ldots, \pi_{r-t-1}') \},
	$$
	and regardless of the choice of $\pi_B$, note that the permutations in $\pi_A \cap (\pi_1, \ldots, \pi_t, \pi_{r-t-1}, \ldots, \pi_r)$ are uniformly random subject to $\pi_{s}(i_{s-1}') = i_s'$ for $s \in \{1, 3, \ldots, t\}$ and $\pi_{r-s+1}^{-1}(j_s') = j_{s-1}'$ for $s \in \{1, 3, \ldots, t\}$. A similar argument verifies the analogous statement for even $t$.
\end{enumerate}
In particular, it follows that for every  $\beta > 0$ and every odd $r$, for sufficiently large $n$, we have that $\dmpv(r,n) \in \Dmpv(n, r, 1/\log^{\beta}(n), n/\log^{\beta}(n))$, and in particular this holds for the parameter $\beta$ guaranteed to exist by \Cref{lem:pv-hybrid-strong}.
The lemma now follows immediately from the conclusion of \Cref{lem:pv-hybrid-strong}, which asserts that every $((r+1)/2, n/\log^\beta(n))$-protocol achieves success with probability at most $1/2 + \ep$ on $D$.
\end{proof}

Thus the main lemma is proved assuming \Cref{lem:base-case} and \Cref{lem:simulation}. In the rest of this section we prove these two lemmas.

\subsection{The Base Case: Proof of {\protect Lemma~\ref{lem:base-case}}}
\label{ssec:base-case}

In the following we will fix
$\beta$ and argue that if $\tilde{\beta} \leq \ep_1^* \cdot \beta - \ep_2^*$ then the conditions (\ref{eq:lem-part1}) and (\ref{eq:lem-part2}) of \Cref{lem:base-case} hold.
Specifically we will prove (\ref{eq:lem-part1}) first and then derive  (\ref{eq:lem-part2}) as a consequence. For (\ref{eq:lem-part1}), we will first bound $H(\pi(i) | i)$ when  $\pi$ is a nearly uniform {\it function} instead of a nearly random {\it permutation}, and then extend it to case that $\pi$ is a nearly uniform {permutation}. Then using this result, we will bound $H(\pi(i) | i,m)$, where $m$ is a short message that depends on $\pi$.

In the below Lemma \ref{lem:noisyimfn}, we will take $i \in [k]$ and $\pi : [k] \rightarrow [n]$ to be a nearly uniformly random function. We allow that $k \neq n$ in order to deal with the case that $\pi$ is a nearly uniformly random permutation later on (in our application we will always have $k \leq n$).

\begin{lemma}
	\label{lem:noisyimfn}
	For every $k,n \in \Z_+$ and every $\delta,C \in \R_+$ the following holds: Suppose $(i,\pi)$ are drawn from a distribution $D$ such that the resulting random variables, $i \in [k],\pi : [k]\to [n]$ have the following properties:
	\begin{enumerate}
		\item $H(i | \pi) \geq \log(k) - \delta$, with $\delta \in [1/n,1/8)$.
		\item $H(\pi) \geq k\log n- C$, with $C \leq k$.
        \end{enumerate}
        Then
	$$
	H(\pi(i) | i) \geq \log(n) - \frac Ck - 2 \sqrt{2\delta} \log(n).
	$$
\end{lemma}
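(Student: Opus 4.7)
\medskip

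\noindent\textbf{Proof plan.} The natural route is to compare the random variable $i$ to a fresh uniform variable $U \sim \mathrm{Unif}([k])$ drawn independently of $\pi$, and to show $(i,\pi)$ is close in total variation to $(U,\pi)$, so that $H(\pi(i)\mid i)$ is close to $H(\pi(U)\mid U)$, which in turn is easy to bound from below using hypothesis~2.

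\medskip

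\noindent\emph{Step A: lower bound $H(\pi(U)\mid U)$ when $U$ is independent of $\pi$.} Since $U$ is uniform on $[k]$ and independent of $\pi$, we have $H(\pi(U)\mid U) = \tfrac{1}{k}\sum_{i'=1}^{k} H(\pi(i'))$. By subadditivity of entropy, $\sum_{i'} H(\pi(i')) \geq H(\pi(1),\ldots,\pi(k)) = H(\pi) \geq k\log n - C$. Hence $H(\pi(U)\mid U) \geq \log n - C/k$.

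\medskip

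\noindent\emph{Step B: couple $(i,\pi)$ to $(U,\pi)$ in total variation.} For each fixed $\pi_0$, the conditional distribution $P_{i\mid \pi=\pi_0}$ on $[k]$ has entropy at least $\log k - \delta_{\pi_0}$, where $\E_{\pi_0}[\delta_{\pi_0}] \leq \delta$ by hypothesis~1. In particular, $D_{\mathrm{KL}}(P_{i\mid \pi_0} \,\|\, U_k) = \log k - H(i\mid \pi=\pi_0) = \delta_{\pi_0}$, so Pinsker gives $\Delta(P_{i\mid \pi_0}, U_k) \leq \sqrt{\delta_{\pi_0}/2}$. Averaging over $\pi$ and applying Jensen's inequality (concavity of $\sqrt{\cdot}$),
\[
\Delta\bigl((i,\pi),\,(U,\pi)\bigr) \;=\; \E_{\pi_0}\bigl[\Delta(P_{i\mid \pi_0},U_k)\bigr] \;\leq\; \E_{\pi_0}\!\bigl[\sqrt{\delta_{\pi_0}/2}\bigr] \;\leq\; \sqrt{\delta/2}.
\]
Since $(i,\pi(i))$ is a deterministic function of $(i,\pi)$, the data-processing inequality for total variation yields $\Delta((i,\pi(i)),(U,\pi(U))) \leq \sqrt{\delta/2}$. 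Note $\sqrt{\delta/2} < 1/2$ because $\delta < 1/8$.

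\medskip

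\noindent\emph{Step C: convert TV-closeness to entropy-closeness via Theorem~\ref{thm:cover}.} The joint distributions $(i,\pi(i))$ and $(U,\pi(U))$ are supported on $[k]\times[n]$, which has size $kn$. By Theorem~\ref{thm:cover},
\[
\bigl|H(i,\pi(i)) - H(U,\pi(U))\bigr| \;\leq\; \sqrt{\delta/2}\cdot \log\!\bigl(kn\,\sqrt{2/\delta}\bigr).
\]
Writing $H(\pi(i)\mid i) = H(i,\pi(i)) - H(i)$ and $H(\pi(U)\mid U) = H(U,\pi(U)) - \log k$, and using $H(i) \leq \log k$, we obtain
\[
H(\pi(i)\mid i) \;\geq\; H(\pi(U)\mid U) \;-\; \sqrt{\delta/2}\cdot\log\!\bigl(kn\,\sqrt{2/\delta}\bigr) \;\geq\; \log n \;-\; \tfrac{C}{k} \;-\; \sqrt{\delta/2}\cdot\log\!\bigl(kn\,\sqrt{2/\delta}\bigr).
\]

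\medskip

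\noindent\emph{Step D: simplify the error term.} Using $k\leq n$ (the regime relevant here) and $\delta \geq 1/n$, we have $\log k \leq \log n$ and $\tfrac12\log(2/\delta) \leq \tfrac12\log(2n) \leq \log n$ (for $n\geq 2$), so $\log(kn\sqrt{2/\delta}) \leq 2\log n + \log n = 3\log n \leq 4\log n$. Therefore $\sqrt{\delta/2}\cdot\log(kn\sqrt{2/\delta}) \leq 4\sqrt{\delta/2}\log n = 2\sqrt{2\delta}\log n$, completing the bound.

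\medskip

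\noindent I don't expect a serious obstacle: the argument is clean modulo a careful accounting of constants. The only delicate spot is the reverse-Pinsker-type step, where one must verify that the $\sqrt{\delta/2}$ TV bound survives the Jensen averaging over $\pi$ (Step B) and that the resulting $\log(M/\ep)$ factor with $M=kn$ can indeed be absorbed into $2\sqrt{2\delta}\log n$ using the mild hypotheses $\delta \geq 1/n$ and $k\leq n$.
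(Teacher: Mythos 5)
Your argument is correct on its own terms, and it is a genuinely different route from the paper's. The paper works value-by-value: it applies Pinsker to $I(\pi;i)$, extracts for each $i'$ a total-variation bound $\ep_{i'}/(2\Pr[i=i'])$ between the law of $\pi(i')$ conditioned on $i=i'$ and its unconditional law, invokes the continuity bound (\Cref{thm:ho}) separately on each of these distributions over $[n]$, and then recombines via the chain rule for $H(\pi)$ and concavity of the binary entropy. You instead compare the joint law of $(i,\pi(i))$ to that of $(U,\pi(U))$ for an independent uniform $U$, which collapses the bookkeeping into one application of Pinsker-plus-Jensen (your Step B is a clean and valid computation, since the two joints share the $\pi$-marginal) and a single application of \Cref{thm:cover}. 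This is shorter and conceptually tidier, and the constants work out: Steps A--C are all sound, including the monotonicity needed to plug the upper bound $\sqrt{\delta/2}<1/4\le 1/2$ into \Cref{thm:cover}.

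The one discrepancy is that your single global application of \Cref{thm:cover} is on a support of size $kn$, so the error term carries a $\log k$, and in Step D you must assume $k\le n$ to absorb it into $2\sqrt{2\delta}\log n$. \Cref{lem:noisyimfn} as stated quantifies over all $k,n\in\Z_+$ with no such restriction, and the paper's per-$i'$ decomposition avoids the issue precisely because each entropy-continuity estimate there is applied to distributions on $[n]$, so no $\log k$ ever appears. Your restricted version suffices for the paper's only application (in \Cref{lem:noisyimperm} one takes $k = n\log^{-\beta/8}n \le n$, as the paper itself remarks), and for $k\gg n$ the lemma is typically vacuous or easy anyway, but strictly speaking your proof establishes a weaker statement than the one claimed; either add the hypothesis $k\le n$ (harmless for the application) or fall back on a conditional, per-$i'$ comparison as in the paper to remove it.
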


\begin{proof}
Let $D$ be the joint distribution on $(\pi,i)$ that satisfies (1),(2) and let $D_i,D_{\pi}$ be its marginals on $i$ and $\pi$ respectively. Unless specified, all the following probability statements are with respect to $D$. Let $U_k$ denote the random variable that is uniform on $[k]$. 

We will first make a few observations and then bound $H(\pi(i) | i)$. Firstly, since $H(i) \geq \log k - \delta$, by Pinsker's inequality, we have that,
\begin{equation}\label{pins-i}
\Delta(D_i, U_k) = \frac 12 \sum_{i'=1}^k|\Pr[i = i'] - 1/k|  \leq \sqrt{\delta/2}.	
\end{equation}
	
Let $D_\pi \otimes D_i$ denote the joint distribution over $(\pi,i)$, where $\pi$ and $i$ are independently drawn from their marginals $D_\pi$ and $D_i$ respectively. By Pinsker's inequality, we have that, $$\Delta(D, D_\pi \otimes D_i) \leq \sqrt{I(\pi; i)/2} \leq \sqrt{\delta/2}.$$ It then follows that,
\begin{equation}
\label{eq:abovedelta}
\sum_{i' \in [k], j' \in [n]} \left|\Pr[\pi(i') = j', i=i'] - \Pr[\pi(i') = j'] \cdot \Pr[i=i'] \right| \leq \sqrt{2\delta}.
\end{equation}
Now, for each $i' \in [k]$, define,
\begin{eqnarray}
\ep_{i'} &=& \sum_{j' \in [n]} \left| \Pr[\pi(i') = j', i=i'] - \Pr[\pi(i') = j'] \cdot \Pr[i=i'] \right|,\nonumber
\end{eqnarray}
so that $\sum_{i' \in [k]} \ep_{i'} \leq \sqrt{2\delta}$. We get that
\begin{align}
&\Delta((\pi(i') | i= i'), \pi(i')) = \frac{1}{2}\sum_{j' \in [n]} \left| \Pr[\pi(i') = j' | i=i'] - \Pr[\pi(i') = j'] \right| = \frac{\ep_{i'}}{2\Pr[i = i']} \nonumber,
\end{align}
which by Theorem~\ref{thm:ho} then gives,
\begin{align}
\left| H(\pi(i') | i=i') - H(\pi(i')) \right| \leq h \left(\frac{\ep_{i'}}{2\Pr[i=i']} \right) + \left(\frac{\ep_{i'}}{2\Pr[i=i']}\right) \log (n-1) := \beta_{i'}  \label{pins-pi}.		
\end{align}

We have that
\begin{eqnarray}
	H(\pi(i) | i) &=& \sum_{i' \in [k]} \Pr[i = i'] \cdot H(\pi(i) | i = i') \nonumber \\
	&\geq & \sum_{i'} \Pr[i = i'] (H(\pi(i')) - \beta_{i'}) \label{eq1} \\
	& \geq & \sum_{i'} \frac{1}{k} H(\pi(i')) -\sqrt{\delta/2}\log n - \sum_{i'} \Pr[i = i'] \beta_{i'},	\label{eq2}
\end{eqnarray}
where (\ref{eq1}) follows from (\ref{pins-pi}), and (\ref{eq2}) follows from (\ref{pins-i}) and the fact that $H(\pi(i')) \leq \log n$.

Using the chain rule for entropy we get that
\begin{equation}
\label{eq:hp}
\log n - C/k   \leq \frac{1}{k}H(\pi) = \frac{1}{k}\sum_{i'=1}^k H(\pi(i') | \pi(\{1, \ldots, i'-1\})) \leq \frac{1}{k}\sum_{i'=1}^k H(\pi(i')).
\end{equation}
Recall that $\sum_{i'} \ep_{i'} \leq \sqrt{2\delta}$ and we have that $h(\sum_{i'} \ep_{i'}) \leq h(\sqrt{2\delta})$, since $\delta < 1/8$. Since the binary entropy function $h(\cdot)$ is concave, by Jensen's inequality, we have that, 
\begin{align}\label{beta-ineq}
	\sum_{i'=1}^k \Pr[i = i']\beta_{i'} &= \sum_{i'} \Pr[i = i'] h \left(\frac{\ep_{i'}}{2\Pr[i=i']} \right) + \sum_{i'} \Pr[i = i']\left(\frac{\ep_{i'}}{2\Pr[i=i']}\right) \log (n-1) \nonumber\\
	& \leq   h\left(\sum_{i'} \Pr[i=i'] \cdot \frac{\ep_{i'}}{2\Pr[i=i']} \right) + \sqrt{\delta/2}\log n \nonumber \\
	&\leq h\left(\sqrt{\delta/2} \right) + \sqrt{\delta/2}\log n.
\end{align}

Note that $h(x) \leq 2x\log(1/x)$ for $x \rightarrow 0$, so $h(\sqrt{\delta/2}) \leq \sqrt{2\delta}\log n$. Using this, and plugging (\ref{eq:hp}) and (\ref{beta-ineq}) into (\ref{eq2}), we get that
\begin{equation*}
	H(\pi(i) | i) \geq \log n -\frac{C}{k} - 2\sqrt{\delta/2}\log n \geq \log(n) - \frac Ck - 2\sqrt{2\delta} \log(n).\qedhere
\end{equation*}	
\end{proof}

Now we are ready to prove an analogous lemma for random permutations instead of random functions. We note that we cannot replicate the proof above since for a typical $i'$ the conditional entropy $H(\pi(i') | \pi(\{1, \ldots, i'-1\}))$ is actually $\log n - \Theta(1)$ and this $\Theta(1)$ loss is too much for us. In the proof below we condition instead on $i$ being contained in some smaller set $S \subseteq [n]$, with $|S| = k = o(n)$, where $S$ itself is randomly chosen. This ``conditioning'' turns out to help with the application of the chain rule and this allows us to reproduce a bound that is roughly as strong as the bound above.

\begin{lemma}
	\label{lem:noisyimperm}
	There exists constants $\ep_1^*>0, \ep_2^*$ such that for every $\beta$ there exists $n_0$ such that for all $n \geq n_0$ the following holds:
	Suppose $i\in [n]$, $\pi \in S_n$ are random variables such that:
	\begin{enumerate}
		\item $H(i | \pi) \geq \log(n) - \delta$, with $\delta \in [1/n,1/\log^{\beta}n]$.
		\item $H(\pi ) \geq \log(n!) - C$, with $C \leq n/\log^\beta(n)$.
	\end{enumerate}
	Then 
	$$H(\pi(i) | i) \geq \log n - 1/\log^{\tilde{\beta}}n,$$ where $\tilde{\beta} = \ep_1^*\cdot \beta - \ep_2^*$.
\end{lemma}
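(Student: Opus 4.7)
My plan is to reduce the permutation case to the function case handled by Lemma~\ref{lem:noisyimfn}. The naive approach of applying that lemma directly to $\pi$ itself fails: for a uniform permutation, $H(\pi(i') \mid \pi(1), \ldots, \pi(i'-1)) = \log(n-i'+1)$, so the chain-rule bound $\sum_{i'=1}^{n} H(\pi(i')|\pi(\{1,\ldots,i'-1\})) \geq n\log n - C$ loses $\Theta(1)$ per term, far too much. The fix is to restrict $\pi$ to a random subset $S \subseteq [n]$ of size $k = \lfloor n/\log^{\gamma} n\rfloor$ (for $\gamma$ to be optimized, $\gamma = \beta/3$ will work), so that $\pi|_S$ resembles a uniform random function from $S$ to $[n]$; the permutation-vs-function entropy gap $k\log n - \log(n!/(n-k)!)$ is only $O(k^2/n)$.

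Concretely, let $S$ be drawn uniformly from size-$k$ subsets of $[n]$ independently of $(i,\pi)$, and let $E$ denote the event $\{i \in S\}$. The key observation is that $\Pr[E \mid i=i_0, \pi=p] = \Pr[i_0 \in S] = k/n$ regardless of $(i_0,p)$, so conditioning on $E$ leaves the marginal of $(i,\pi)$ unchanged. Hence
\[
H(\pi(i)\mid i) = H(\pi(i)\mid i, E=1) \geq H(\pi(i)\mid i, S, E=1),
\]
and it suffices to lower bound the last quantity. For each fixed $s$ with $|s|=k$, the conditional distribution of $(i, \pi|_s)$ given $(S=s, E=1)$ is a distribution on $s \times ([n]^s)$; identifying $s$ with $[k]$ by any bijection I would apply Lemma~\ref{lem:noisyimfn} for each $s$ and then average over $s\mid E=1$, using Jensen's inequality to pull the average inside the square root.

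Verifying the hypotheses requires some care. For the input hypothesis $H(i \mid \pi|_S, S, E=1) \geq \log k - \delta'$, write $q_p(\cdot) = \Pr[i = \cdot \mid \pi=p]$ and $Q_p(s) = \sum_{i_0\in s}q_p(i_0)$; the conditional distribution of $i$ given $\pi = p, S=s, E=1$ is $q_{p,s}(i_0) = q_p(i_0)/Q_p(s)$. Convexity of $x\log x$ gives $\mathbb{E}_{s}[Q_p(s)\log Q_p(s)] \geq (k/n)\log(k/n)$, from which a direct computation yields $\mathbb{E}_{s\mid E=1}[H(q_{p,s})] \geq \log k - (\log n - H(q_p))$; averaging over $p$ gives $\delta' = \delta$. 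For the function-entropy bound $H(\pi|_S \mid S, E=1) \geq k\log n - C'$, I would use $H(\pi|S) = H(\pi) \geq \log n! - C$, then $H(\pi|S,E) \geq H(\pi|S) - H(E|S) \geq \log n! - C - 1$; upper-bounding the $E=0$ contribution by $\log n!$ and dividing by $\Pr[E=1]=k/n$ gives $H(\pi\mid S, E=1) \geq \log n! - (n/k)(C+1)$, and then subtracting $\log((n-k)!)$ to pass from $\pi$ to $\pi|_S$ yields $C' = O(k^2/n) + (n/k)(C+1)$.

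The main obstacle I anticipate is controlling this second bound: the conditioning on $E$ depends jointly on $i$ and $S$, breaking the independence of $\pi$ and $S$, and the passage from $H(\pi|S, E)$ to $H(\pi|S, E=1)$ is lossy (costing the factor $n/k$). Choosing $k = n/\log^{\beta/3} n$ balances the two error sources: $C'/k \leq O(k/n) + (n/k)^2(C+1)/n = O(1/\log^{\beta/3} n)$, which combined with the $2\sqrt{2\delta}\log n = O(1/\log^{\beta/2 - 1} n)$ term from Lemma~\ref{lem:noisyimfn}'s conclusion gives $H(\pi(i)\mid i) \geq \log n - O(1/\log^{\beta/3} n)$ for sufficiently large $\beta$. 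Absorbing the constant into an extra $\log\log n$ slack yields the claim with $\epsilon_1^* = 1/3$ and some absolute $\epsilon_2^*$.
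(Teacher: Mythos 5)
Your proposal is correct and takes essentially the same approach as the paper: restrict $\pi$ to a random size-$k$ subset $S \ni i$ with $k = n/\mathrm{polylog}(n)$ to kill the permutation-vs-function entropy gap, then invoke Lemma~\ref{lem:noisyimfn}. Your formulation (draw $S$ uniform and independent of $(i,\pi)$, then condition on $E = \{i\in S\}$) is mathematically identical to the paper's explicitly defined weighted distribution $\mathcal{E}$ on subsets --- the conditional law of $(i,\pi)$ given $S=s,E=1$ is $D\mid i\in s$, and the law of $S\mid E=1$ equals $\mathcal{E}$ --- and the subsequent hypothesis-verification, Markov-plus-Jensen averaging, and parameter bookkeeping proceed along the same lines as in the paper.
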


\newcommand{\cE}{\mathcal{E}}
\begin{proof}
  We will prove the lemma with $\ep_1^* = 1/16, \ep_2^* = 4$. Note that for $\beta \leq 8$, $\tilde \beta = \ep_1^* \beta - \ep_2^* \leq -3$, so by non-negativity of entropy, the lemma statement follows immediately. We therefore assume $\beta > 8$ for the remainder of the proof.
	
	Let $D$ be the distribution of $(\pi, i)$ given in the lemma statement, where $D_{\pi}, D_{i}$ are its marginals on $i,\pi$ respectively. Let $k$ be a parameter to be fixed later. 
	We start by defining a joint distribution $D'$ on triples $(\pi,i,S)$ with $\pi \in S_n$ and $i \in S \subset [n]$, $|S| = k$ that satisfies the condition that its marginal on $(\pi,i)$ equals $D$ while at the same time the distribution of $(\pi,i)$ conditioned on $S = S'$ when $(\pi,i,S) \sim D'$ is the same as the distribution of $(\pi,i) \sim D$ conditioned on $i \in S'$. $D'$ is defined as follows:
	
	  Let $D_S$ be the distribution of $(\pi, i)$, conditioned on $i \in S$. Now let  $\mathcal{E}$ be the distribution over subsets $S \subset [n]$ of size $k$ where the probability of $\Pr_{S\sim \cE}[S = S'] = \frac{\sum_{i' \in S'}\Pr_D[i = i']}{{n-1 \choose k-1}}$. Now define the joint distribution $D'$ of $(\pi, i, S)$ of $\pi' \in S_n, i'\in S' \subset [n], |S'| = k$ so that
\begin{eqnarray*}
\Pr_{D'}[\pi = \pi', i=i', S=S'] &=& \Pr_\mathcal{E}[S=S'] \cdot \Pr_{D}[\pi = \pi', i=i' | i \in S']\\
&=& \Pr_\mathcal{E}[S=S'] \cdot \Pr_{D_{S'}} [\pi = \pi', i=i'].
\end{eqnarray*}
We claim that the marginal distribution of $(\pi, i)$, where $(\pi, i, S) \sim D'$, is equal to $D$. To see this,
\begin{eqnarray}
\Pr_{D'}[\pi = \pi', i=i'] &=& \sum_{S' \subset [n], |S'| = k, S' \ni i'} \Pr_{\mathcal{E}}[S=S'] \cdot \Pr_D[\pi = \pi', i=i' | i \in S']\nonumber\\
&=& \sum_{S' \subset [n], |S'| = k, S' \ni i'}\left( \sum_{i'' \in S'} \frac{\Pr_D[i = i'']} {{n-1 \choose k-1}}\right) \cdot \frac{\Pr_{D}[\pi = \pi', i=i']}{\Pr_D[i \in S']}\nonumber\\
&=& \frac{1}{{n-1 \choose k-1}} \cdot \sum_{S' \subset [n], |S'| = k, S' \ni i'} \Pr_D[\pi = \pi', i=i']\nonumber\\
&=& \Pr_D[\pi=\pi', i=i']\nonumber.
\end{eqnarray}
Recall we wish to lower bound $H_D(\pi(i)  | i)$.  But notice that
$$H_D(\pi(i)  | i) = H_{D'}(\pi(i) | i) \geq H_{D'}(\pi(i) | i, S) = \E_{S' \sim \mathcal{E}}[H_{D'} (\pi(i) | i,S=S')]. 
$$
Hence it suffices to show that for every set $S', |S'| = k$, $H_{D'} (\pi(i) | i,S=S') \geq \log n - \log^{(\ep_2^* - \beta \ep_1^*)}n$ and we do so below.

Fix a subset $S' \subset [n]$, of size $k$, where $k$ also satisfies
\begin{equation}
\label{eq:permassumek}
\delta^{1/4} \cdot n/k \leq \sqrt{2}-1, \quad \delta^{1/4}n \log n/k \leq 1/10, \quad nC/k^2 \leq 1/10, \quad k \leq n/10.
\end{equation}
We remark that for each $\beta > 4$, there is some $n_0$ such that for $n \geq n_0$, such a $k$ satisfying (\ref{eq:permassumek}) always exists. (Recall our assumption above that $\beta > 8$.) 

We will specify the exact value of $k$ below, but for now we note that our argument holds for any $k$ satisfying (\ref{eq:permassumek}). 
By the definition of $D'$, we have that $H_{D'} (\pi(i) | i,S=S') = H_{D_{S'}} (\pi(i) | i)$. We show below that $(\pi(S'),i)$ where $(\pi,i) \sim D_{S'}$ satisfies the preconditions of \Cref{lem:noisyimfn}.
 To show this, we need to choose $\gamma(n,k,\delta) \in [1/n,1/8)$ and $\Gamma(n,k,\delta,C) \leq k$ satisfying the following:
\begin{enumerate}
	\item $H_{D_{S'}}(i | \pi) = H_{D}(i | \pi, i \in S') \geq \log k - \gamma(n,k,\delta)$.
	\item $H_{D_{S'}}(\pi(S')) = H_{D}(\pi(S') | i \in S') \geq k \log n - \Gamma(n,k,\delta,C)$.
\end{enumerate}

The following claim helps with the choice of $\gamma(n,k,\delta)$.

\begin{claim}
  \label{cl:isprime}
  Suppose that $i \in [n]$ is a random variable such that $H(i) \geq \log n - \tau$ with 
  $n\sqrt{\tau}/k \leq \sqrt{2}-1$. Then $H_D(i | i \in S') \geq \log k - \frac{n\sqrt{\tau}}{k}\log\left(\frac{k^2}{n\sqrt{\tau}} \right)$.
\end{claim}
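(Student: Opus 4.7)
The plan is to translate the entropy bound $H(i) \geq \log n - \tau$ into a total-variation bound between $D_i$ and the uniform distribution $U_n$ on $[n]$ via Pinsker's inequality, transfer this bound to the distributions conditioned on $i \in S'$ (using that conditioning on an event of probability $q$ inflates TV distance by at most a factor of $O(1/q)$), and then convert back to an entropy lower bound via Theorem~\ref{thm:cover}. The hypothesis $n\sqrt{\tau}/k \leq \sqrt{2}-1$ is what will guarantee that $q = \Pr[i \in S']$ is bounded below by a constant multiple of $k/n$, so that the conditioning does not blow up the TV estimate.

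Concretely, the first step is to note that $D_{KL}(D_i \| U_n) = \log n - H(i) \leq \tau$ and apply Pinsker's inequality to get $\Delta(D_i,U_n) \leq \sqrt{\tau/2}$, equivalently $\sum_{i' \in [n]} |p_{i'} - 1/n| \leq \sqrt{2\tau}$, where $p_{i'} = \Pr[i=i']$. Writing $q = \Pr[i \in S']$ and $\alpha = n\sqrt{\tau}/k$, the triangle inequality then gives $|q - k/n| \leq \sum_{i' \in S'} |p_{i'} - 1/n| \leq \sqrt{2\tau} = \sqrt{2}\alpha \cdot k/n$, so under the hypothesis $\alpha \leq \sqrt{2}-1$ I obtain
$$
q \;\geq\; \tfrac{k}{n}\bigl(1 - \sqrt{2}\alpha\bigr) \;\geq\; (\sqrt{2}-1)\cdot \tfrac{k}{n} \;>\; 0.
$$
Next, a routine triangle-inequality computation $|p_{i'}/q - 1/k| \leq (1/q)\bigl(|p_{i'} - 1/n| + |1/n - q/k|\bigr)$, summed over $i' \in S'$, yields
$$
\Delta(D_{i \mid i \in S'},\, U_{S'}) \;\leq\; \tfrac{\sqrt{2\tau}}{q} \;=\; O(\alpha).
$$

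Finally, I apply Theorem~\ref{thm:cover} with $M = k$ and $\ep = \sqrt{2\tau}/q$ to conclude
$$
\bigl|H(i \mid i \in S') - \log k\bigr| \;\leq\; \ep \log(k/\ep) \;=\; O\!\left(\alpha\log(k/\alpha)\right),
$$
which matches the stated bound $\tfrac{n\sqrt{\tau}}{k}\log\bigl(k^2/(n\sqrt{\tau})\bigr) = \alpha \log(k/\alpha)$. The only real subtlety is bookkeeping the absolute constants in steps 2 and 3 (the factor $\sqrt{2}-1$ in the hypothesis is calibrated to control the loss from the conditioning); no conceptually new ideas beyond standard information-theoretic inequalities are needed.
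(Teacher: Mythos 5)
Your proposal follows the same route as the paper's proof: Pinsker's inequality to compare $D_i$ with the uniform distribution $U_n$, a transfer of the total-variation bound to the distributions conditioned on $i \in S'$, and then Theorem~\ref{thm:cover} to convert back to an entropy lower bound. The issue is precisely the bookkeeping you defer at the end, because the claim has an exact constant, not a big-$O$. Your chain gives $\sum_{i'\in[n]}|p_{i'}-1/n|\le\sqrt{2\tau}$ and then $\Delta(D_{i\mid i\in S'},U_{S'})\le \sqrt{2\tau}/q\le \frac{\sqrt{2}}{\sqrt{2}-1}\cdot\frac{n\sqrt{\tau}}{k}$, which is weaker than the paper's bound $\frac{n\sqrt{\tau}}{k}$ by a factor of roughly $3.4$; feeding this into Theorem~\ref{thm:cover} yields $\log k - O\!\left(\frac{n\sqrt{\tau}}{k}\log\frac{k^2}{n\sqrt{\tau}}\right)$, not the stated inequality. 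Moreover, Theorem~\ref{thm:cover} requires $\ep\le 1/2$, and since the hypothesis only guarantees $n\sqrt{\tau}/k\le\sqrt{2}-1\approx 0.414$, your looser variation bound can exceed $1/2$ near the edge of the allowed range, so the theorem you invoke is not even applicable there without further argument.

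The fix is to use two sharper facts, which is what the paper does: first, bound the event probability by the total variation distance itself, $|q-k/n|=|\Pr_{D_i}[i\in S']-k/n|\le\Delta(D_i,U_n)\le\sqrt{\tau/2}$, rather than by the $\ell_1$ sum over $S'$; second, use the conditional total-variation bound $\Delta(D_{i\mid i\in S'},U_{S'})\le \Delta(D_i,U_n)/\Pr_{D_i}[i\in S']\le \sqrt{\tau/2}\,/\,(k/n-\sqrt{\tau/2})$. The hypothesis $n\sqrt{\tau}/k\le\sqrt{2}-1$ is calibrated exactly so that this last quantity is at most $n\sqrt{\tau}/k$, which is in particular below $1/2$, and then Theorem~\ref{thm:cover} with $M=k$ and $\ep=n\sqrt{\tau}/k$ gives the claim verbatim. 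So your approach is the right one, but as written it proves the claim only up to a constant factor in the deficiency term and needs the tighter two steps above to match the statement.
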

\begin{proof}[Proof of Claim \ref{cl:isprime}]
Let $U_n$ denote the uniform distribution on $[n]$. By Pinsker's inequality we have that, $\Delta(D_i, U_n) \leq \sqrt{\tau/2},$ which in turn implies that $\left| \Pr_{D_i}[i \in S'] - k/n  \right| \leq \sqrt{\tau/2}.$ Let $U_{S'}$ be the uniform distribution over $S'$. We have that
$$\Delta((D_i | i \in S'), U_{S'}) \leq \sqrt{\tau/2}\cdot\frac{1}{k/n - \sqrt{\tau/2}} \leq \frac{n\sqrt{\tau}}{k},$$
since $n\sqrt{\tau}/k \leq \sqrt{2}-1$.
By Theorem~\ref{thm:cover}, we get that,
$$H_D(i | i \in S') \geq \log k - \frac{n\sqrt{\tau}}{k}\log\left(\frac{k}{(n\sqrt{\tau}/k) } \right) = \log k - \frac{n\sqrt{\tau}}{k}\log \left(\frac{k^2}{n\sqrt{\tau}}\right)
$$

\end{proof}

By Markov's inequality, with probability at least $1 - \sqrt{\delta}$ when $\pi' \sim D_\pi$, we have $ H(i | \pi = \pi') \geq \log k - \sqrt{\delta}$. 
For such $\pi'$, by Claim \ref{cl:isprime} applied to the distribution $i | \pi = \pi'$ and $\tau = \sqrt{\delta}$ (note that the condition $n\delta^{1/4}/k = n \sqrt{\tau}/k \leq 1 - 1/\sqrt{2}$ holds by the conditions on $k$), we obtain
$$H_D(i | i\in S', \pi = \pi') \geq \log k - \frac{n\delta^{1/4}}{k}\log\left(\frac{k^2}{\delta^{1/4}n} \right) \geq  \log k - \frac{n\delta^{1/4}}{k}\log\left(\frac{k}{\delta^{1/4}} \right).$$
Hence
$$H_D(i | i\in S', \pi) \geq  (1-\sqrt{\delta})\left(  \log k - \frac{n\delta^{1/4}}{k}\log\left(\frac{k}{\delta^{1/4}} \right) \right) \geq \log k - \gamma(n,k,\delta),$$
where $\gamma(n,k,\delta) = \sqrt{\delta} \log n + \frac{n\delta^{1/4}}{k} \log (n^2)$, where we have used $k \leq n$ and $\delta \geq 1/n$. 

Now we turn to determining $\Gamma(n,k,\delta,C)$ such that $H_{D_{S'}}(\pi(S')) \geq k\log n - \Gamma(n,k,\delta,C)$.
Note that $H(\pi | \One[i \in S']) \geq \log n! - C - 1$. 
Applying Pinsker's inequality to the condition $H(i) \geq H(i|\pi) \geq \log n - \delta$ yields that $\Delta(i, U_n) \leq \sqrt{\delta/2}$, meaning that $\left| k/n - \Pr_D[i \in S'] \right| \leq \sqrt{\delta/2}$. Hence 
\begin{eqnarray}
H_D(\pi | i \in S') &\geq& \frac{\log (n!) \cdot  (k/n - \sqrt{\delta/2}) - C - 1}{k/n + \sqrt{\delta/2}} \nonumber\\
&=& \log (n!) \cdot \frac{1 - \sqrt{\delta/2}n/k}{1 + \sqrt{\delta/2}n/k} - \frac{C+1}{k/n + \sqrt{\delta/2}}\nonumber\\
& \geq & \log (n!) \cdot (1 - \sqrt{2\delta} \cdot n/k) - \frac{C+1}{k/n + \sqrt{\delta/2}}\nonumber\\
& \geq & \log (n!) - n \cdot \left( \sqrt{2\delta} \cdot n \log( n) / k + 2C/k\right)\nonumber,
\end{eqnarray}
where we have used that $n! \leq n^n$. But since $\pi$ is a permutation,
\begin{eqnarray}
H_D(\pi(S') | i \in S') & = & H_D(\pi | i \in S') - H_D(\pi([n] \backslash S') | i \in S', \pi(S'))\nonumber\\
& \geq & \log (n!) - n \cdot \left( \sqrt{2\delta} \cdot n \log( n)  / k + 2C/k\right) - \log( (n-k)! )\nonumber\\
& \geq & k \log(n-k) - n \cdot \left( \sqrt{2\delta} \cdot n \log( n) / k + 2C/k\right) \nonumber\\
& \geq & k \log n  - k \cdot \left( \sqrt{2\delta} \cdot n^2 \log( n)  / k^2 + 2nC/k^2 +  \frac{2k}{n}\right) \nonumber,
\end{eqnarray}
where we have used that $\log(1-x) \geq -2x$ for $0 \leq x \leq 1/2$, as well as $k \leq n/2$. Hence with $\Gamma = \Gamma(n,k,\delta,C) = k \cdot \left( \sqrt{2\delta} \cdot n^2 \log( n) / k^2 + 2nC/k^2 +  \frac{2k}{n}\right) \leq k$ (by our assumption (\ref{eq:permassumek})), we have that $H(\pi(S') | i \in S') \geq k \log( n) - \Gamma$. It follows from Lemma \ref{lem:noisyimfn} that, writing $\gamma = \gamma(n,k,\delta)$,
\begin{equation}
\label{eq:hmds}
H_{D_{S'}}(\pi(i) | i) =H_D(\pi(i) | i, i \in S') \geq \log n - \frac{\Gamma}{k} - 2\sqrt{2\gamma} \cdot \log n.
\end{equation}
Therefore,
\begin{equation}
  \label{eq:hdpii}
H_D(\pi(i) | i) \geq \E_{S \sim \mathcal{E}}[H_{D_{S}}(\pi(i) | m,i)] \geq \log n - \frac{\Gamma}{k} - 2\sqrt{2\gamma} \cdot \log n,
\end{equation}
since the inequality is true for each value $S' \subset [n]$, $|S'| = k$, by (\ref{eq:hmds}).

It is now easily verified that for each $\beta > 8$, for $k = n \cdot \log^{-\beta/8}(n)$, there is some $n_0$, depending only on $\beta$, so that (\ref{eq:permassumek}) is satisfied for $n \geq n_0$. Moreover, for such $k$,
\begin{eqnarray}
  &&\Gamma/k + 2\sqrt{2\gamma} \cdot \log n\nonumber\\
  &\leq& \sqrt{2} \log^{(-\beta/2 + 1 + 2\beta/8)}n + 2 \log^{(-\beta + 2\beta/8)}n + 2 \log^{(-\beta/8)}n + 2 \sqrt 2 \cdot \left(\log^{(-\beta/4 + 3/2)}n + 2 \log^{(-\beta/8 + 3/2 + \beta/16)}n\right)\nonumber\\
  & \leq & 100 \log^{(3/2 - \beta/16)}n\nonumber\\
  & \leq & \log^{(4-\beta/16)}n\nonumber,
\end{eqnarray}
where the last inequality holds for sufficiently large $n$. By (\ref{eq:hdpii}) this implies that for each $\beta > 8$, there is some $n_0$ such that for $n \geq n_0$, $H_D(\pi(i) | i) \geq \log(n) - \log^{(4 - \beta/16)}n$, which completes the proof. 
\end{proof}



Now we are ready to lower bound the entropy $H(\pi(i) | m,i,Z)$, that proves Lemma~\ref{lem:base-case}: Equation~(\ref{eq:lem-part1}), via the following lemma.
\begin{lemma}\label{lem:noisyimz}
	There exists constants $\ep_1^*>0, \ep_2^*$ such that for every $\beta>0$ there exists $n_0$ such that for all $n \geq n_0$ the following holds:
	Let $\delta = 1/\log^{\beta} n$, $C = C' = \delta n$, and $\tilde{\beta} = \ep_1^*\cdot \beta - \ep_2^*$. Suppose $(i,j,\pi,Z)$ are drawn from a distribution $D$, with $Z$ taking on finitely many values, such that the following properties hold:
	\begin{enumerate}
		\item $H(i | \pi, Z) \geq \log(n) - \delta$.
		\item $H(\pi | Z) \geq \log(n!)- C$.
        \end{enumerate}
	Then, for every deterministic function $m = m(\pi, Z)$ with $m \in \{0,1\}^{C'}$, we have
	\begin{align*}
	H(\pi(i) | i,m, Z)& \geq \log(n) - 1/\log^{\tilde{\beta}} n.
	\end{align*}
\end{lemma}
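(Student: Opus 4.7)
The plan is to reduce \Cref{lem:noisyimz} to \Cref{lem:noisyimperm} by conditioning on both $Z = z$ and the value of the message $m$, and then applying a combination of entropy chain rules and Markov's inequality.

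First, since $m$ is a deterministic function of $(\pi, Z)$, we have $H(i \mid \pi, Z, m) = H(i \mid \pi, Z) \geq \log n - \delta$, so the expectation over $(z, m_0) \sim (Z, m)$ of $\log n - H(i \mid \pi, Z = z, m = m_0)$ is at most $\delta$. Similarly, by the chain rule and the fact that $m$ is determined by $(\pi, Z)$,
\[
H(\pi \mid Z, m) = H(\pi \mid Z) - H(m \mid Z) \geq \log(n!) - C - C',
\]
so the expectation over $(z, m_0)$ of $\log(n!) - H(\pi \mid Z = z, m = m_0)$ is at most $C + C'$. By Markov's inequality, for all but a $\sqrt{\delta}$-fraction (weighted by probability) of $(z, m_0)$ pairs the first quantity is at most $\sqrt{\delta}$, and for all but a $\sqrt{(C+C')/n}$-fraction the second is at most $\sqrt{n(C + C')}$. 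Call the pairs in the intersection of these two favorable events \emph{good}.

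For each good pair, the conditional distribution of $(\pi, i)$ given $\{Z = z, m = m_0\}$ satisfies the hypotheses of \Cref{lem:noisyimperm} with the gap parameter for $i$ at most $\sqrt{\delta} \leq 1/\log^{\beta/2} n$ and the gap parameter for $\pi$ at most $\sqrt{n(C+C')} \leq \sqrt{2}\,n/\log^{\beta/2} n$. \Cref{lem:noisyimperm} therefore yields $H(\pi(i) \mid i, Z = z, m = m_0) \geq \log n - 1/\log^{\tilde\beta_1} n$, where $\tilde\beta_1$ is linear in $\beta$ (with slope $\ep_1^*/2$, where $\ep_1^*$ is the constant from \Cref{lem:noisyimperm}). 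For bad pairs, I bound $H(\pi(i) \mid i, Z = z, m = m_0) \geq 0$ trivially. Averaging over $(z, m_0)$ gives
\[
H(\pi(i) \mid i, m, Z) \geq \log n - \frac{1}{\log^{\tilde\beta_1} n} - \left(\sqrt{\delta} + \sqrt{(C+C')/n}\right)\log n,
\]
and for our choice $\delta = 1/\log^{\beta} n$ and $C = C' = \delta n$ each error term on the right is at most $1/\log^{\tilde\beta} n$ for a suitable $\tilde\beta$ that is linear in $\beta$, provided $n \geq n_0$ is sufficiently large.

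The main obstacle is the bookkeeping: the Markov cutoffs, the trivial $\log n$ loss on bad pairs, and the \Cref{lem:noisyimperm} gap on good pairs must all be kept polynomially (in $1/\log n$) small simultaneously, which forces the thresholds to be chosen so the final exponent $\tilde\beta$ remains an affine function of $\beta$. This is routine but delicate, since the $\beta/2$ halving from Markov composes with the linear dependence already present in \Cref{lem:noisyimperm}; one just needs to verify that the composition still produces the advertised form $\tilde\beta = \ep_1^* \beta - \ep_2^*$ for appropriately chosen absolute constants $\ep_1^*, \ep_2^*$ (possibly different from, but derived from, those of \Cref{lem:noisyimperm}).
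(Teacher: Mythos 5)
Your proposal is correct and takes essentially the same route as the paper's proof: incorporate $m$ into the conditioning via the data-processing inequality, apply Markov's inequality to find a high-probability set of pairs $(z,m_0)$ on which the conditional entropies of $i$ and of $\pi$ are still close to maximal, invoke \Cref{lem:noisyimperm} on each such conditional distribution, and average. The only cosmetic difference is that the paper packages the two Markov thresholds into a single $\alpha = \max\{\delta, (C+C')/n\}$ with a $1-2\sqrt{\alpha}$-measure good set, whereas you keep the two error terms separate; the arithmetic is the same.
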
	
	
\begin{proof} 
	In Lemma~\ref{lem:noisyimperm} we proved a lower bound on $H(\pi(i) | i)$, given the conditions that $H(i | \pi) \geq \log n - \delta$ and $H(\pi) \geq \log n! - C$.	
	 We would now like to prove a bound on $H(\pi(i) | i, m, Z)$, where $m = m(\pi,Z)$ is a message of length $\leq C'$ and $Z$ is the random variable in the lemma statement. Since $|m| \leq C'$, (1) and (2) in the lemma hypothesis, along with the data processing inequality, imply that, 
	\begin{enumerate}
		\item  $H(i | \pi, m, Z) \geq \log n - \delta$.
		\item $H(\pi | m,Z) \geq \log n! - C - C'$.
	\end{enumerate}

	Let $\gamma = (C+C')/n$, so that $\gamma \leq 2/\log^\beta(n)$. By Markov's inequality (and the facts that $i$ takes on at most $n$ values and $\pi$ takes on at most $n!$ values), we have the following, for every $\ep > 0$:
	\begin{itemize}
		\item With probability at least $1 - \sqrt{\delta}$ over the choice of $(m',z) \sim (m,Z)$, we have that $H(i | \pi, m = m', Z=z) \geq \log(n) - \sqrt{\delta}$.
		\item With probability at least $1 - \sqrt{\gamma}$ over the choice of $(m',z) \sim (m,Z)$, we have that $H(\pi | m = m', Z=z) \geq \log(n!) - n \cdot \sqrt{\gamma}$.
	\end{itemize}
	Let $\alpha = \max\{\delta, \gamma\}$. For sufficiently large $n$ we have that $\sqrt{\alpha} \leq 1/\log^{(\beta/3)}n$. 
	Then by Lemma \ref{lem:noisyimperm}, there is some $n_0$, depending only on $\beta$, such that for all $(m',z)$ belonging to some set of measure at least $1 - 2\sqrt{\alpha}$, for $n \geq n_0$ we have that $H(\pi(i) | i, m=m', Z=z) \geq \log n - \eta$, where $\eta = \log^{\mu_2^* - \beta \mu_1^*}n$, for absolute constants $\mu_1^*, \mu_2^*$. Then there are suitable absolute constants $\ep_1^* \in (0,1), \ep_2^* > 0$ and $n_0'$ (depending only on $\beta$) such that for $n \geq n_0$,
	\begin{eqnarray*}
      H(\pi(i) | i,m,Z)& =& \E_{(m',z) \sim (m,Z)} [H(\pi(i) | i,m=m',Z=z)] \\
                       &\geq& (1 - 2 \sqrt{\alpha}) \cdot (\log(n) - \eta) \\
      &\geq& \log(n) - \log^{(\ep_2^* - \beta \ep_1^*)}n.
	\end{eqnarray*}
\end{proof}

Next we work towards the proof of (\ref{eq:lem-part2}) in Lemma \ref{lem:base-case}. The main difficulty in proving this inequality is to reason about the conditional entropy of the indicator random variable $\One[\pi(i) = j]$, conditioned on the random variable $j$. Roughly speaking, Lemma \ref{lem:jxyz} below allows us to infer a statement such as $H(\One[\pi(i) = j] | j) \geq 1-o(1)$ from an analogous statement of the form $H(\One[\pi(i) = j] | \pi(i)) \geq 1-o(1)$, if $\pi(i), j \in [n]$ satisfy certain regularity conditions. This same argument is needed in the inductive step presented in Lemma \ref{lem:simulation}. In these applications we need to additionally condition all entropies on some random variable $Z$. 
\begin{lemma}
  \label{lem:jxyz}
  There are absolute constants $\ep_1^*>0, \ep_2^*, n_0$ such that the following holds for every $n \geq n_0$:
	Let $X,Y,Z$ be random variables with $X,Y \in [n]$ and $Z$ takes on finitely many values. Let $J = \One[X=Y]$. If there is some constant $\beta > 0$ such that $\delta \leq 1/\log^\beta n$, and
	\begin{enumerate}
		\item $H(X|Z) \geq \log(n) - \delta$.
		\item $H(J|X,Z) \geq 1 - \delta$.
		\item $H(Y | X, Z, J=0) \geq \log(n) - \delta$
	\end{enumerate}
    Then $H(J|Y,Z) \geq 1-\log^{(\ep_2^* - \beta \ep_1^*)}n$.
\end{lemma}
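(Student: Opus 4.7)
The plan is to obtain the conclusion by a short direct chain-rule computation; in fact this will yield the much stronger bound $H(J\mid Y,Z)\geq 1-3\delta$, from which the statement follows immediately for $\ep_1^*=\ep_2^*=1$ and $n_0=8$. The one non-trivial observation driving the argument is that $X$ is entirely determined by $(Y,J=1)$ (since $J=1$ means $X=Y$); this lets us cancel what would otherwise be a $\log n$ loss.

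The strategy is to compute $H(X,Y,J\mid Z)$ in two ways. First, by the chain rule together with hypotheses (1) and (2),
\[
H(X,J\mid Z)=H(X\mid Z)+H(J\mid X,Z)\;\geq\;\log n+1-2\delta.
\]
Next, since $Y=X$ whenever $J=1$ (so that $H(Y\mid X,J=1,Z)=0$), and using hypothesis (3) for the $J=0$ branch,
\[
H(Y\mid X,J,Z)=\Pr[J=0]\cdot H(Y\mid X,Z,J=0)\;\geq\;\Pr[J=0]\,(\log n-\delta).
\]
Adding gives $H(X,Y,J\mid Z)\geq \log n+1-2\delta+\Pr[J=0](\log n-\delta)$.

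On the other hand, expanding as $H(X,Y,J\mid Z)=H(Y,J\mid Z)+H(X\mid Y,J,Z)$, the same ``determinism at $J=1$'' observation yields $H(X\mid Y,J=1,Z)=0$, hence
\[
H(X\mid Y,J,Z)\;\leq\;\Pr[J=0]\cdot\log n.
\]
Subtracting the two expressions, the $\Pr[J=0]\log n$ terms cancel, leaving
\[
H(Y,J\mid Z)\;\geq\;\log n+1-2\delta-\Pr[J=0]\,\delta\;\geq\;\log n+1-3\delta,
\]
and then, using the trivial bound $H(Y\mid Z)\leq\log n$,
\[
H(J\mid Y,Z)=H(Y,J\mid Z)-H(Y\mid Z)\;\geq\;1-3\delta.
\]

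There is essentially no obstacle here, only routine entropy bookkeeping; the only subtlety is arranging the two expansions of $H(X,Y,J\mid Z)$ so that the $\Pr[J=0]\log n$ terms cancel rather than one of them appearing as a $\Theta(\log n)$ loss. Since $\delta\leq 1/\log^{\beta}n$, we conclude $1-H(J\mid Y,Z)\leq 3/\log^{\beta}n\leq\log^{\ep_2^*-\beta\ep_1^*}n$ for any $\ep_1^*\leq 1$, $\ep_2^*>0$, once $n$ is large enough that $\log^{\ep_2^*+\beta(1-\ep_1^*)}n\geq 3$ (e.g., $\ep_1^*=\ep_2^*=1$ and $n_0=8$).
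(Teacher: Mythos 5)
Your proof is correct, and it takes a genuinely different and cleaner route than the paper's. The paper first reduces to the case $H(Z)=0$ via a Markov--union-bound argument (paying a square-root in $\delta$), then works with the expansion $H(J\mid Y)=H(J)+H(Y\mid J)-H(Y)$, invoking Pinsker's inequality to control $\Pr[J=0],\Pr[J=1]$ and a separate chain-rule bound on $H(X\mid J=1)$; the accumulated losses give $H(J\mid Y,Z)\geq 1-O(\delta^{1/4}\log n)$. Your argument instead expands the single triple entropy $H(X,Y,J\mid Z)$ two ways by the chain rule and exploits the two deterministic collapses $H(Y\mid X,J{=}1,Z)=H(X\mid Y,J{=}1,Z)=0$ so that the dangerous $\Pr[J{=}0]\log n$ terms cancel exactly; the general $Z$ is handled transparently by the chain rule with no Markov step. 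The payoff is a strictly sharper and parameter-free bound $H(J\mid Y,Z)\geq 1-3\delta$, with no $\log n$ or fractional-power losses, which trivially implies the stated conclusion with $\ep_1^*=\ep_2^*=1$ and $n_0=8$. This would in fact simplify the downstream bookkeeping in the paper (the constants $\nu_1^*,\nu_2^*$ it feeds into \Cref{lem:base-case}), though the weaker bound there is already sufficient for the paper's purposes.
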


\begin{proof}
We will first prove the above statement assuming that $H(Z) = 0$ and then use Markov's inequality and a union bound to prove the lemma statement for general $Z$. That is, we first prove that if conditions (1), (2), (3) hold without the conditioning on $Z$ then, $H(J | Y) \geq 1 - o(1)$. 

We have that $H(X), H(Y) \leq \log n$ since $X,Y \in [n]$ and $H(J) \leq 1$. Also note that, by Pinsker's inequality,
$$\Pr[J = 0], \Pr[J = 1] \in [1/2 - \sqrt{\delta/2}, 1/2 + \sqrt{\delta/2}].$$

We also have that
\begin{eqnarray}
H(J|Y) &=& H(J) + H(Y|J) - H(Y)\nonumber\\
\label{eq:jyyj}
& \geq & (1-\delta) + H(Y|J) - \log(n) \nonumber \\
& \geq & (1 - \delta) + \Pr[J=0] \cdot H(Y |J=0) + \Pr[J=1] \cdot H(Y|J=1) - \log n \nonumber \\
& \geq & (1-\delta) + (1/2 - \sqrt{\delta/2}) (\log n - \delta + H(Y | J = 1)) - \log n\label{hhh}
\end{eqnarray}

But notice that $H(Y | J = 1) = H(Y | X = Y) = H(X | J = 1)$, so it suffices to bound the latter.

From the lemma hypothesis we get that
$$H(X|J) = H(X) + H(J | X) - H(J) \geq (\log n - \delta) + (1 - \delta) - 1 \geq \log n - 2\delta.$$

On the other hand we have that
\begin{eqnarray}
H(X | J) &=& \Pr[J = 0] \cdot H(X | J = 0) + \Pr[J = 1] \cdot H(X | J = 1) \nonumber\\
& \leq & (1/2 + \sqrt{\delta/2}) \cdot (\log n + H(X | J=1)).
\end{eqnarray}

Combining the upper and lower bounds on $H(X | J)$, we get that
$$H(X | J = 1) \geq \frac{\log(n) - 2\delta}{1/2 + \sqrt{\delta/2}} - \log n \geq \log(n) - 4\delta - \sqrt{8\delta} \log n.
$$
Plugging the above into (\ref{hhh}), we get that,
$$H(J | Y) \geq 1 - \frac{7\delta}{2} - 2\sqrt{\delta} \log n.$$

To get the lower bound while conditioning on $Z$, we use Markov's inequality and a union bound (in the same manner as Lemma~\ref{lem:noisyimz}) to get that
\begin{eqnarray*}
	H(J | Y, Z) & \geq &  (1 -3\sqrt{\delta})\left(1 - \frac{7\sqrt{\delta}}{2} - 2\delta^{1/4} \log n\right) \\
	            & \geq & 1 - 7\sqrt{\delta} - 2\delta^{1/4}\log n \\
	            & \geq & 1 - 9 \delta^{1/4}\log n\\
	            & \geq &    1 - 9\log^{(1 - \beta/4)}n \\
	            & \geq & 1 - \log^{(\ep^*_2 - \beta\ep_1^*)} n,
\end{eqnarray*}
where the final inequality holds for $\ep_1^* = 1/4$, $\ep_2^* = 2$ and $n_0=2^9$ (so that $\log n \geq 9$).
\end{proof}

The proof of Lemma \ref{lem:base-case}: Equation~(\ref{eq:lem-part2}) follows as a consequence of Lemmas~\ref{lem:noisyimz} and \ref{lem:jxyz} above.
\begin{proof}[Proof of Lemma~\ref{lem:base-case}]
    
	We show that there exist $\ep_1^* > 0$ and $\ep_2^*$ such that if ${\beta} \geq (\tilde\beta+\ep_2^*)/\ep_1^*$ (or equivalently, if $\tilde \beta \leq \ep_1^*\cdot \beta - \ep_2^*$) then Equations (\ref{eq:lem-part1}) and (\ref{eq:lem-part2}) of \Cref{lem:base-case} hold for every $n \geq n_0$ where $n_0 = \max\{n_{0,1},n_{0,2}\}$ and $n_{0,1} = n_{0,1}(\beta)$ is as given by \Cref{lem:noisyimz} and $n_{0,2} = n_{0,2}(\beta)$ is the constant given by \Cref{lem:jxyz}.
	For this choice 
   Lemma \ref{lem:noisyimz} already gives us (\ref{eq:lem-part1}), that is, $H(\pi(i) | i,m) \geq \log(n) - \log^{(\mu_2^* - \beta\mu_1^*)}n$ for some absolute constants $\mu_1^*\in (0,1), \mu_2^* > 0$. Note in particular that this implies that for every $\ep_2^* \geq \mu_2^*$ and for every $\ep_1^* \leq \mu_1^*$ we have $H(\pi(i) | i,m) \geq \log(n) - \log^{(\ep_2^* - \beta\ep_1^*)}n$ and we will make such a choice below. 

  We next apply Lemma \ref{lem:jxyz} with $Z^* = (m,i,Z), X = \pi(i), Y = j$, and $J = \One[\pi(i) = j]$, where $Z^*$ refers to the random variable in Lemma~\ref{lem:jxyz} and $Z$ refers to the one in Lemma~\ref{lem:base-case}. We verify that each of the pre-conditions is met. 
\begin{enumerate}
	\item $X, Y \in [n], J \in \{0,1\}$ and $Z^*$ takes finitely many values.
	\item $H(X|Z^*) = H(\pi(i) | i,m,Z) \geq \log(n) - \log^{(\mu_2^* - \mu_1^* \beta)}n$, by (\ref{eq:lem-part1}).
	\item $H(J|X,Z^*) = H(\One[\pi(i) = j] | \pi(i),m,i,Z) \geq H(\One[\pi(i) = j] | \pi, i,Z) \geq 1 - \delta$, by assumption.
	\item $H(Y | X, Z^*, J = 0) = H(j | \pi(i), m,i, Z,\One[\pi(i) = j]) \geq H(j | \pi, i, Z,\One[\pi(i) = j]) \geq 1 - \delta$, by assumption.
\end{enumerate}
Then by Lemma \ref{lem:jxyz}, we have that for $n \geq n_0$,
$$
H(\One[\pi(i) = j] | m,i,j, Z)=H(J|Y,Z^*) \geq 1- \log^{(\nu_2^* - (\mu_2^* - \beta \mu_1^*) \nu_1^*)}n,
$$
where $\nu_1^*, \nu_2^*$ denote the absolute constants of Lemma \ref{lem:jxyz}. Thus again we have that if $\ep_2^* \geq \nu_2^* - \mu_2^* \nu_1^*$ and $\ep_1^* \leq \mu_1^* \nu_1^*$
then we have that $H(\One[\pi(i) = j] | m,i,j, Z) \geq 1-\log^{(\ep_2^* - \beta \ep_1^*)}n$.
Setting  $\ep_1^* = \min\{\mu_1^*,\mu_1^* \nu_1^*\}$ and $\ep_2^* = \max\{\mu_2^*,\nu_2^* - \mu_2^* \nu_1^*\}$ thus ensures that both conditions of the lemma are satisfied.
\end{proof}

\subsection{The Inductive Step: Proof of {\protect Lemma~\ref{lem:simulation}}}
\label{ssec:simulation}


We will prove the inductive step via a simulation argument. That is, we show that if Alice and Bob were able to succeed on $D \in \Dmpv(n,r,\delta,C)$ with non-negligible probability, then they would also succeed on some $\tilde{D} \in \Dmpv(n,r-2,\delta',C')$ by simulating the protocol for $D$ given an instance from $\tilde{D}$. 

Given a distribution $D$ on which Alice and Bob can succeed with non-negligible probability, we consider the distribution $\tilde{D}$ on the resulting ``inner inputs'' (i.e.~the original inputs minus $\pi_1, \pi_r$) after Alice sends a short message to Bob. 
More precisely, the distribution $\tilde{D}$ is the distribution of $(i_1,j_1,\pi_2,\ldots,\pi_{r-1})$ conditioned on Alice's first message $m_1$ and Bob's indices $(i_0,j_0)$, where $(i_1,j_1) = (\pi_1(i_0), \pi_r^{-1}(j_0))$. Moreover, the inputs of $\tilde{D}$ are given to the players as follows: Alice holds $(i_1,j_1, \pi_3, \pi_5, \ldots, \pi_{r-2})$, Bob holds $(\pi_2, \pi_4, \ldots, \pi_{r-1})$, and it is Bob's turn to send the next message. Therefore, 
this corresponds to an instance of an $(r-2)$-Pointer Verification Problem with Alice and Bob's roles flipped. We will show in Lemma~\ref{lem:inductivestep} that $\tilde{D} \in \Dmpv(n,r-2,\delta',C')$, for some $\delta', C'$ not too much larger than $\delta, C$, respectively. Then using the protocol for $D$, we will construct a protocol that succeeds when the inputs are drawn from $\tilde{D}$, with not much loss in the success probability. We will now prove two simple lemmas that will be used to prove Lemma~\ref{lem:inductivestep}.

\begin{lemma}
	\label{lem:noisyimz2}
	There exists $\ep_1^*>0$ and $\ep_2^*$ such that for every $\beta$ there exists $n_0$ such that for all $n \geq n_0$ the following holds:
	Suppose $i,j, \sigma_1, \sigma_2, Z$ are random variables, where $i,j\in [n]$, $\sigma_1,\sigma_2 \in S_n$ and $Z$ takes on finitely many values, satisfying the following conditions:
	\begin{enumerate}
		\item $H(i | \sigma_1, \sigma_2, Z) \geq \log(n) - \delta$, with $\delta \leq 1/\log^\beta n$.
		\item $H(\sigma_1, \sigma_2 | Z) \geq 2 \log(n!) - C$, with $C \leq n/\log^\beta n$.
		\item For each $z$ for which the event $\{Z = z\}$ has positive probability, there is a permutation $f_z : [n] \ra [n]$, such that $f_z(\sigma_1(i)) = \sigma_2(j)$ (which implies that $\sigma_1(i) = f_z^{-1}(\sigma_2(j))$.
	\end{enumerate}
	Suppose further that $m = m(\sigma_1, \sigma_2, Z)$ is a deterministic function and $m \in \{0,1\}^{C'}$, with $C' \leq n/\log^\beta n$.
	Then  $H(\sigma_1(i) | i,j, m, Z) \geq \log n - \log^{(\ep_2^* - \beta \ep_1^*)} n$. 
\end{lemma}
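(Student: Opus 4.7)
The plan is to apply \Cref{lem:noisyimz} twice, in two different ways, and then combine the two conclusions via a chain-rule identity that carefully avoids conditioning on $\sigma_2$ and $j$ simultaneously. The need for this care is the key subtlety: condition~(3) forces $\sigma_1(i) = f_Z^{-1}(\sigma_2(j))$, so conditioning on $(Z, \sigma_2, j)$ together would deterministically fix $\sigma_1(i)$ and immediately destroy its entropy, which rules out the naive attempt of invoking \Cref{lem:noisyimz} once with $(Z, \sigma_2, j)$ as the conditioning variable.

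In the first application I would set $Z' = (Z, \sigma_2)$ and invoke \Cref{lem:noisyimz} with $\pi \leftarrow \sigma_1$. The preconditions follow readily from the hypotheses: condition~(1) of \Cref{lem:noisyimz2} yields $H(i | \sigma_1, Z') \geq \log n - \delta$; condition~(2) combined with $H(\sigma_2 | Z) \leq \log(n!)$ gives $H(\sigma_1 | Z') \geq \log(n!) - C$; and $m(\sigma_1, \sigma_2, Z)$ is a deterministic function of $(\sigma_1, Z')$. The lemma returns $H(\sigma_1(i) | i, m, Z, \sigma_2) \geq \log n - 1/\log^{\tilde\beta}(n)$ with $\tilde\beta = \ep_1^* \beta - \ep_2^*$, and since dropping $\sigma_2$ from the conditioning can only increase entropy, this also gives $H(\sigma_1(i) | i, m, Z) \geq \log n - 1/\log^{\tilde\beta}(n)$.

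In the second, more delicate application I would set $\tilde{i} := f_Z(\sigma_1(i)) = \sigma_2(j)$ and invoke \Cref{lem:noisyimz} with $\pi \leftarrow \sigma_2^{-1}$, with $\tilde{i}$ playing the role of $i$, $Z \leftarrow Z'' = (Z, i, m)$, and with the message set to empty. The first precondition $H(\tilde{i} | \sigma_2^{-1}, Z'') = H(\sigma_1(i) | \sigma_2, Z, i, m) \geq \log n - 1/\log^{\tilde\beta}(n)$ is precisely the strengthened bound derived in the first application, using that $\tilde{i}$ and $\sigma_1(i)$ are in $Z$-determined bijection. The second precondition $H(\sigma_2^{-1} | Z'') \geq \log(n!) - O(n/\log^\beta(n))$ follows from hypothesis~(2) together with $I(\sigma_2; i | Z) \leq \delta$ (a direct consequence of condition~(1)) and $|m| \leq n/\log^\beta(n)$. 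The lemma's conclusion $H(\sigma_2^{-1}(\tilde{i}) | \tilde{i}, Z'') \geq \log n - 1/\log^{\tilde\beta'}(n)$ then translates, via $\sigma_2^{-1}(\tilde{i}) = j$ and the same bijection, into $H(j | \sigma_1(i), i, m, Z) \geq \log n - 1/\log^{\tilde\beta'}(n)$.

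The final step is the chain-rule identity
\begin{equation*}
H(\sigma_1(i) | i, j, m, Z) = H(\sigma_1(i) | i, m, Z) + H(j | \sigma_1(i), i, m, Z) - H(j | i, m, Z),
\end{equation*}
which, combined with the trivial $H(j | i, m, Z) \leq \log n$ and the two lower bounds above, yields $H(\sigma_1(i) | i, j, m, Z) \geq \log n - 1/\log^{\ep_1^* \beta - \ep_2^*}(n)$ after composing the constants from the two nested applications. The hard part will be the second application: although the symmetric move of conditioning on $(\sigma_1, i)$ would make $j$ deterministic, pivoting through $\tilde{i} = f_Z(\sigma_1(i))$ and applying \Cref{lem:noisyimz} to $\sigma_2^{-1}$ sidesteps this obstruction, at the cost of consuming one more factor of $\ep_1^*$ in the final exponent.
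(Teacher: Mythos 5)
Your proof is correct, and it takes a genuinely different route from the paper's.

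The paper's proof is a single application of \Cref{lem:noisyimz}: it defines the conditioning variable $Z' = (\sigma_2^{-1}\circ f_Z\circ \sigma_1, Z)$, observes that $(\sigma_1, Z')$ determines $(\sigma_1, \sigma_2, Z)$ and vice versa so that the preconditions transfer, applies the lemma to get $H(\sigma_1(i)\mid i, m, Z')\geq \log n - \log^{(\ep_2^*-\beta\ep_1^*)}n$, and then notes that $j = (\sigma_2^{-1}\circ f_Z\circ\sigma_1)(i)$ is already determined by $(i, Z')$, so conditioning on $j$ costs nothing, and dropping $\sigma_2^{-1}\circ f_Z\circ\sigma_1$ from the conditioning only increases entropy. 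The entire difficulty you identify --- that conditioning on $(Z,\sigma_2,j)$ fixes $\sigma_1(i)$ --- is finessed in one stroke by packaging the ``dangerous'' part of $\sigma_2$ into $Z'$ in a way that makes $j$ redundant rather than revealing.

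Your approach instead decomposes the target via the chain-rule identity $H(\sigma_1(i)\mid i,j,m,Z)=H(\sigma_1(i)\mid i,m,Z)+H(j\mid\sigma_1(i),i,m,Z)-H(j\mid i,m,Z)$, proving each of the first two terms is nearly $\log n$ via separate applications of \Cref{lem:noisyimz} --- once with $\pi\leftarrow\sigma_1$, $Z'\leftarrow(Z,\sigma_2)$, and once, after pivoting through $\tilde i = f_Z(\sigma_1(i))$, with $\pi\leftarrow\sigma_2^{-1}$, $Z''\leftarrow(Z,i,m)$. I verified both sets of preconditions: for the first, $H(\sigma_1\mid Z,\sigma_2)\geq\log(n!)-C$ follows from hypothesis (2) by the chain rule; for the second, $H(\tilde i\mid\sigma_2,Z,i,m)=H(\sigma_1(i)\mid\sigma_2,Z,i,m)$ is exactly the output of the first application (since $f_Z$ is a $Z$-determined bijection), and $H(\sigma_2\mid Z,i,m)\geq\log(n!)-C-\delta-C'$ via $I(\sigma_2;i\mid Z)\leq\delta$ and $|m|\leq C'$. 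The translation $H(\sigma_2^{-1}(\tilde i)\mid\tilde i,Z'')=H(j\mid\sigma_1(i),i,m,Z)$ is also sound. What your route buys is an explicit quantification of \emph{why} revealing $j$ doesn't help: $j$ has nearly full entropy even given $\sigma_1(i)$, so the mutual information $I(\sigma_1(i);j\mid i,m,Z)$ is small. What it costs is constants: the second application feeds in $\delta' \approx 1/\log^{\ep_1^*\beta-\ep_2^*}n$ rather than $1/\log^\beta n$, so the final exponent degrades to roughly $(\ep_1^*)^2\beta - (\ep_1^*+1)\ep_2^*$ rather than $\ep_1^*\beta-\ep_2^*$. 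Since the lemma only asserts existence of \emph{some} absolute constants $\ep_1^*>0,\ep_2^*$, this is harmless, and you correctly flag it. The one small imprecision is that your concluding line states the exponent as $\ep_1^*\beta-\ep_2^*$ rather than the composed constants, but since you note the extra factor of $\ep_1^*$ just above, this reads as shorthand rather than an error.
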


\begin{proof}
	Let us write $Z' = (\sigma_2^{-1} \circ f_Z \circ \sigma_1, Z)$. Then
	\begin{enumerate}
		\item $H(i | \sigma_1, Z') = H(i | \sigma_1, \sigma_2^{-1} \circ f_Z \circ \sigma_1, Z) = H(i | \sigma_1, \sigma_2, Z) \geq \log(n) - \delta$.
		\item $H(\sigma_1 | Z') = H(\sigma_1 | \sigma_2^{-1} \circ f_Z \circ \sigma_1, Z)  \geq \log(n!) - C$, where the last inequality follows from the following:
		\begin{eqnarray}
		2\log(n!) - C & \leq & H(\sigma_1, \sigma_2 | Z)\nonumber\\
		&=& H(\sigma_2^{-1} \circ f_Z \circ \sigma_1, \sigma_2 | Z)\nonumber\\
		&=& H(\sigma_2^{-1} \circ f_Z \circ \sigma_1 | Z) + H(\sigma_1 | \sigma_2^{-1} \circ f_Z \circ \sigma_1, Z)\nonumber\\
		& \leq & \log(n!) + H(\sigma_1 | \sigma_2^{-1} \circ f_Z \circ \sigma_1, Z)\nonumber.
		\end{eqnarray}
	\end{enumerate}
	Then by Lemma~\ref{lem:noisyimz}, $H(\sigma_1(i) | i,m,Z') = H(\sigma_1(i) | i,m,\sigma_2^{-1} \circ f_Z \circ \sigma_1, Z) \geq \log n - \log^{(\ep_2^* - \beta \ep_1^*)} n$, for absolute constants $\ep_1^*, \ep_2^*$ and for $n$ sufficiently large as a function of $\beta$. But since $j = \sigma_2^{-1} \circ f_Z \circ \sigma_1(i)$, we obtain that
	$$
	H(\sigma_1(i) | i,j,m,\sigma_2^{-1} \circ f_Z \circ \sigma_1, Z) \geq \log n - \log^{\ep_2^* - \beta \ep_1^*} n.
	$$
	Then the desired result follows since conditioning decreases entropy.
\end{proof}

\begin{lemma}
	\label{lem:abcindep}
	Suppose $A,B,C$ are random variables with finite ranges such that $A \indep B\ |\ C$. Let $\Omega_A$ denote the domain of $A$, and $f : \Omega_A \ra \{0,1\}^*$ be a function. It follows that
	$$
	A \indep B \ \ | \ \ \{C, f(A)\}.
	$$
\end{lemma}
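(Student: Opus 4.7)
The plan is to verify the conditional independence by a direct computation of joint and marginal conditional probabilities, exploiting the fact that $f(A)$ is a deterministic function of $A$. The only subtlety is bookkeeping over the two cases (whether the candidate value of $A$ is consistent with the conditioned value of $f(A)$).

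Concretely, I would fix an arbitrary outcome $(a,b,c,z)$ in the support of $(A,B,C,f(A))$ and aim to show
\[
\Pr[A=a,\, B=b \mid C=c,\, f(A)=z] \;=\; \Pr[A=a \mid C=c,\, f(A)=z]\cdot \Pr[B=b \mid C=c,\, f(A)=z].
\]
If $f(a)\neq z$, both sides are zero because $\Pr[A=a \mid C=c,\, f(A)=z] = 0$ and $\Pr[A=a,B=b,f(A)=z \mid C=c] = 0$, so the identity holds trivially. Thus I can assume $f(a)=z$, in which case the event $\{A=a\}$ already implies $\{f(A)=z\}$, and so the numerators of the conditional probabilities simplify: $\Pr[A=a, f(A)=z \mid C=c] = \Pr[A=a \mid C=c]$ and $\Pr[A=a, B=b, f(A)=z \mid C=c] = \Pr[A=a, B=b \mid C=c]$.

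Now I would use the given hypothesis $A \indep B \mid C$ twice. First, it directly gives $\Pr[A=a, B=b \mid C=c] = \Pr[A=a \mid C=c] \cdot \Pr[B=b \mid C=c]$. Second, since $f(A)$ is a function of $A$, the hypothesis also implies $B \indep f(A) \mid C$ (by the standard fact that conditional independence is preserved under post-composition with a deterministic map on either side), so $\Pr[B=b \mid C=c, f(A)=z] = \Pr[B=b \mid C=c]$. Dividing the joint by $\Pr[f(A)=z \mid C=c]$ on both sides of the factorization and matching terms then yields the desired equality.

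The argument is entirely routine and no step is genuinely obstructive; the only thing to be careful about is not to divide by zero, which is why I restrict attention to $(c,z)$ pairs in the support of $(C,f(A))$ and handle the $f(a)\neq z$ case separately. This lemma will presumably be applied in the round-elimination argument to maintain the conditional-independence properties in item~\ref{item:dmpv5} of \Cref{def:dmixdef} after conditioning on a deterministic message (i.e.,\ a function of one party's input).
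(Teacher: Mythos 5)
Your proposal is correct and follows essentially the same route as the paper's proof: split on whether $f(a)$ equals the conditioned value, simplify the numerators using that $\{A=a\}$ determines $f(A)$, factor via $A \indep B \mid C$, and absorb the denominator $\Pr[f(A)=z \mid C=c]$ into the $A$-marginal while using $\Pr[B=b \mid C=c, f(A)=z]=\Pr[B=b\mid C=c]$ (the paper invokes this last equality implicitly, exactly as your appeal to $B \indep f(A)\mid C$). No gap to report.
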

\begin{proof}
	Pick any $x \in \{0,1\}^*$, $a \in \Omega_A, b \in \Omega_B, c \in \Omega_C$. We have that
	\begin{eqnarray}
	&& \Pr[A=a, B=b | C=c, f(A) = x]\nonumber\\
	\label{eq:abcfx}
	&=& \frac{\Pr[A=a, B=b, f(A) = x | C=c]}{\Pr[ f(A) = x|C=c]}.
	\end{eqnarray}
	If $f(a) \neq x$, then the above is 0, and also
	$$
	\Pr[A=a | C=c,f(A) = x] \cdot \Pr[B=b | C=c, f(A) = x]= 0
	$$
	as well. If $f(a) = x$, then (\ref{eq:abcfx}) is equal to
	\begin{eqnarray}
	\frac{\Pr[A=a, B=b | C=c]}{\Pr[ f(A) = x|C=c]} &=& \frac{\Pr[A=a | C=c]}{\Pr[ f(A) = x|C=c]} \cdot \Pr[B=b | C=c]\nonumber\\
	&=& \frac{\Pr[A=a, f(A) = x | C=c]}{\Pr[ f(A) = x|C=c]} \cdot \Pr[B=b |f(A) = x, C=c]\nonumber\\
	&=& \Pr[A=a | f(A) = x, C=c] \cdot \Pr[B=b | C=c, f(A) = x] \nonumber,
	\end{eqnarray}
	where the second-to-last inequality follows since
	$$
	\Pr[B=b | C=c] = \Pr[B=b | f(A) = x, C=c],
	$$
	as $B$ is conditionally independent of $A$ given $C$.
\end{proof}


Given a distribution $D \in \Dmpv(n,r,\delta,C)$ and a deterministic function $m = m(\pi_A)$ we define a distribution $\tilde{D}^+$ on the $r-2$ permutation pointer verification problem with some auxiliary randomness $Z$ as follows:
To generate a sample $(\pi_2,\ldots,\pi_{r-2},i_1,j_1; Y)$ according to $\tilde{D}^+$ we first sample  $(\pi_1,\ldots,\pi_r,i_0,j_0)\sim D$ and let $i_1 = \pi_1(i_0)$, $j_1 = \pi_r^{-1}(j_0)$ and 
 $Y =(m_1(\pi_A),i_0,j_0)$. 
 
$\tilde{D}^+$ as defined above is a candidate ``noisy-on-average' (i.e., noisy when averaged over $Y$ --- see last paragraph of \Cref{def:dmixdef}) distribution on $r-2$ permutations, and the lemma below asserts that this is indeed the case for slightly larger values of $\delta$ and $C$ provided $|m|$ is small. 
Recall that $\pi_A = (\pi_1, \pi_3, \ldots, \pi_r), \pi_B = (\pi_2, \pi_4, \ldots, \pi_{r-1})$.

\begin{lemma}
\label{lem:inductivestep}
There exist constants $\ep^*_1 > 0, \ep^*_2$ such that for every odd $r\geq 3$ and $\beta > 0$ there exists $n_0$ such that for every $n \geq n_0$ the following holds:
Suppose $D \in \Dmpv(n,r,\delta,C)$, for some $\delta \leq 1/\log^\beta n$ and $C\leq n/\log^\beta n$. Also suppose that $C' \leq n/\log^\beta n$, and  that $m = m(\pi_A)$ is a deterministic function of $\pi_A$ such that $|m| \leq C'$. Then for $\delta' = \log^{(\ep_2^* - \ep_1^* \cdot \beta)} n$ we have $\tilde{D}^+ \in \Dmpvp(n,r-2,\delta',\delta'n)$.

\end{lemma}


\begin{proof}[Proof of Lemma~\ref{lem:inductivestep}]
  We need to verify statements (1) -- (5) of \Cref{def:dmixdef} in order to show that $\tilde{D}^+ \in \Dmpvp(n,r-2, \delta', \delta'n)$, for an appropriate choice of $\ep_1^*, \ep_2^*$ and for sufficiently large $n$ (depending only on $\beta$). We will show that statement (5) (which does not depend on $\delta'$) holds for all $n \in \mathbb{N}$. To verify statements (1) -- (4), we will show that for each of these statements, there are some absolute constants $\hat{\ep}_1^*, \hat{\ep}_2^*$ and some $\hat n_0$ (depending only on $\beta$) such that for $n \geq \hat n_0$, the statement holds with $\delta' = \log^{(\hat{\ep}_2^* - \hat{\ep}_1^* \beta)}n$. The proof of the lemma will follow by choosing $\ep_2^*$ to be the maximum of the individual $\hat{\ep}_2^*$, ${\ep}_1^*$ to be the minimum of the individual $\hat{\ep}_1^*$, and $n_0$ to be the maximum of the individual $\hat{n}_0$.

  We now proceed to verify each of the statements (1) -- (5). We remark that the values of $\hat{\ep}_1^*, \hat{\ep}_2^*, \hat{n}_0$ may change from line to line.
  \begin{enumerate}
  \item We first verify that $H(i_1 | i_0, j_0, \pi_2, \ldots, \pi_{r-1}, m) \geq \log(n) -\delta'$. Since conditioning can only reduce entropy, it suffices to find a lower bound on $H(i_1 | \One[\pi_1^r(i_0) = j_0], i_0, j_0, \pi_2, \ldots, \pi_{r-1}, m)$, and in particular, it suffices to find a lower bound on $H(i_1 | \pi_1^r(i_0) \neq j_0, i_0, j_0, \pi_2, \ldots, \pi_{r-1}, m)$ and on $H(i_1 | \pi_1^r(i_0) = j_0, i_0, j_0, \pi_2, \ldots, \pi_{r-1}, m)$. 

We first bound the former. Consider the distribution of $i_0, j_0, \pi_1, \pi_2, \ldots, \pi_r$ conditioned on the event $\pi_1^r(i_0) \neq j_0$, and let $Z = (\pi_2, \pi_3, \ldots, \pi_{r-1}, \pi_r)$. We will now use Lemma \ref{lem:noisyimz} with $i = i_0, \pi = \pi_1$, and with the distribution being $D$ conditioned on $\pi_1^r(i_0) \neq j_0$. To apply this lemma, we first verify its preconditions:
\begin{enumerate}
\item $H(i_0 | \pi_1, Z, \pi_1^r(i_0) \neq j_0) \geq \log(n) - 5\delta$ as long as $n$ is large enough so that $\delta \leq 1/50$. To see this, conditions (1a) and (3a) of the distribution $D \in \dmpv(n,r,\delta, C)$ (recall \Cref{def:dmixdef}) imply that
$$
H((i_0, \One[\pi_1^r(i_0) = j_0]) | \pi_1, \pi_2, \ldots, \pi_r) \geq 1 + \log(n) - 2\delta,
$$
meaning that 
\begin{eqnarray*}
&& \lefteqn{H(i_0 | \One[\pi_1^r(i_0) = j_0], \pi_1, \pi_2, \ldots, \pi_r)} \\
&=& \Pr[\pi_1^r(i_0) = j_0] \cdot H(i_0 | \pi_1^r(i_0) = j_0, \pi_1, \ldots, \pi_r) \\
& & + \Pr[\pi_1^r(i_0) \neq j_0] \cdot H(i_0 | \pi_1^r(i_0) \neq j_0, \pi_1, \ldots, \pi_r) \\
&\geq &\log(n) - 2\delta.
\end{eqnarray*}
By Pinsker's inequality and condition (3a) of $D$ we have that $\left| \Pr[\pi_1^r(i_0) = j_0] - 1/2 \right| \leq \sqrt{\delta/2}$, so for sufficiently small $\delta$ (in particular, such that $\sqrt{\delta/2} \leq 1/10$), it follows that
\begin{equation}
\label{eq:i1not1}
\min\left\{H(i_0 | \pi_1^r(i_0) = j_0, \pi_1, \ldots, \pi_r) , H(i_0 | \pi_1^r(i_0) \neq j_0, \pi_1, \ldots, \pi_r) \right\} \geq \log(n) - 5\delta.
\end{equation}

\item $H(\pi_1 | Z, \pi_1^r(i_0) \neq j_0) \geq \log(n!) - 3C - 3\delta$ as long as $n$ is large enough so that $\delta \leq 1/18$. The proof is similar to (a) above. In particular, condition (2) of the distribution $D$ implies that
$$
H(\pi_1 | \pi_2, \pi_3, \ldots, \pi_r) \geq \log(n!) - C.
$$
Since conditioning can only reduce entropy, condition (3a) of the distribution $D$ implies that
$$
H((\pi_1, \One[\pi_1^r(i_0) = j_0]) | \pi_2, \ldots, \pi_r) \geq 1 + \log(n!) - C - \delta,
$$
meaning that
\begin{eqnarray}
&& H(\pi_1 | \One[\pi_1^r(i_0) = j_0], \pi_2, \ldots, \pi_r)\nonumber\\
&=& \Pr[\pi_1^r(i_0) = j_0] \cdot H(\pi_1 | \pi_1^r(i_0) = j_0, \pi_2, \ldots, \pi_r) + \Pr[\pi_1^r(i_0) \neq j_0] \cdot H(\pi_1 | \pi_1^r(i_0) \neq j_0, \pi_2, \ldots, \pi_r)\nonumber\\
& \geq & \log(n!) - C - \delta.\nonumber
\end{eqnarray}
By Pinsker's inequality and condition (3a) of $D$ we have that $\left| \Pr[\pi_1^r(i_0) = j_0] - 1/2 \right| \leq \sqrt{\delta/2}$, so for sufficiently small $\delta$ (in particular, such that $\sqrt{\delta/2} \leq 1/6$), it follows that
\begin{equation}
\label{eq:c1not1}
\min \{ H(\pi_1 | \pi_1^r(i_0) = j_0, \pi_2, \ldots, \pi_r), H(\pi_1 | \pi_1^r(i_0) \neq j_0, \pi_2, \ldots, \pi_r) \} \geq \log(n!) - 3C - 3\delta.
\end{equation}

\end{enumerate}
Note also that indeed $m$ is a deterministic function of $(\pi, Z) = (\pi_1, \pi_2, \ldots, \pi_r)$. Therefore, by Lemma~\ref{lem:noisyimz}, we obtain that there are absolute constants $\hat{\ep}_1^*, \hat{\ep}_2^*$, such that for some $\hat{n}_0$ depending only on $\beta$, if $n \geq \hat n_0$,
$$
H(i_1 | i_0, \pi_2, \ldots, \pi_r, m, \pi_1^r(i_0) \neq j_0) \geq \log(n) - \log^{(\hat{\ep}_2^* - \beta \hat{\ep}_1^*)} n.
$$
Condition (4a) of the distribution $D$ implies that
$$
H(j_0 | i_0, \pi_1, \pi_2, \ldots, \pi_r, \pi_1^r(i_0) \neq j_0) = H(j_0 | i_0, i_1, m, \pi_1, \pi_2, \ldots, \pi_r, \pi_1^r(i_0) \neq j_0) \geq \log(n) - \delta.
$$
Since conditioning can only reduce entropy we have from the two above equations that
$$
H((i_1, j_0) | i_0, m, \pi_2, \pi_3, \ldots, \pi_r, \pi_1^r(i_0) \neq j_0) \geq 2\log(n) - \log^{(\hat{\ep}_2^* - \beta \hat{\ep}_1^*)} n - \delta,
$$
so
$$
H(i_1 | i_0, j_0, m, \pi_2, \pi_3, \ldots, \pi_r, \pi_1^r(i_0) \neq j_0) \geq \log(n) -  \log^{(\hat{\ep}_2^* - \beta \hat{\ep}_1^*)} n - \delta,
$$
as desired.

Next we lower bound $H(i_1 | \pi_1^r(i_0) = j_0, i_0, j_0, \pi_2, \ldots, \pi_{r-1}, m)$ using Lemma \ref{lem:noisyimz2} with $Z = (\pi_2, \pi_3, \ldots, \pi_{r-1})$, $\sigma_1 = \pi_1, \sigma_2 = \pi_r^{-1}$, and with the distribution being $D$ conditioned on $\pi_1^r(i_0) = j_0$. We first verify that the lemma's preconditions hold:
\begin{enumerate}
\item The fact that $H(i_0 | \pi_1, \pi_r, Z, \pi_1^r(i_0) = j_0) \geq \log(n) - 5\delta$ for $\delta \leq 1/50$ was proven in (\ref{eq:i1not1}).
\item To verify that $H(\pi_1, \pi_r |Z, \pi_1^r(i_0) = j_0) \geq \log(n) - 3C - 3\delta$ for $\delta \leq 1/18$, we may exactly mirror the proof of (\ref{eq:c1not1}) except for replacing $\pi_1$ with $(\pi_1, \pi_r)$ (and removing $\pi_r$ from the random variables being conditioned on). We omit the details.
  \item Since we are conditioning on $\pi_1^r(i_0) = j_0$, we have that $\pi_r^{-1}(j_0) = \pi_{r-1} (\cdots \pi_2(\pi_1(i_0)))$, which means that we may take $f_Z = \pi_{r-1} \circ \cdots \circ \pi_2$. 
\end{enumerate}
Note also that indeed $m$ is a deterministic function of $(\sigma_1, \sigma_2, Z) = (\pi_1, \pi_2, \ldots, \pi_{r-1}, \pi_r^{-1})$. Then by Lemma \ref{lem:noisyimz2}, it follows that for some absolute constants $\hat{\ep}_1^*, \hat{\ep}_2^*$, there is some $\hat n_0$ (depending only on $\beta$) such that for $n \geq \hat n_0$, $H(i_1 | i_0, j_0, m, Z, \pi_1^r(i_0) = j_0) \geq \log n - \log^{(\hat{\ep}_2^* - \beta \hat{\ep}_1^*)}n$.

By the previous discussion, it then follows that for some absolute constants $\hat{\ep}_1^*, \hat{\ep}_2^*$, there is some $\hat n_0$ (depending only on $\beta$) such that for $n \geq \hat n_0$, $H(i_1 | i_0, j_0, m, \pi_2, \ldots, \pi_{r-1}) \geq \log n - \log^{(\hat{\ep}_2^* - \beta \hat{\ep}_1^*)}n$. 

In an identical manner, using conditions (1b), (2), (3b), (4b) of the distribution $D \in \dmpv(n,r,\delta,C)$, we obtain that for the same $\hat{\ep}_1^*, \hat{\ep}_2^*, \hat n_0$, if $n \geq n_0$ then $H(j_1 | i_0, j_0, m, \pi_2, \ldots, \pi_r) \geq \log(n) - \log^{(\hat{\ep}_2^* - \beta \hat{\ep}_1^*)}n$.

\item To prove statement (2) we claim that $H(\pi_2, \ldots, \pi_{r-1} | m, i_0, j_0) \geq (r-2) \log(n!) - C - C' - 2 \log(n)$; to see this note that
\begin{eqnarray}
  H(\pi_2, \ldots, \pi_{r-1} | m, i_0, j_0) &=& H(\pi_2, \ldots, \pi_{r-1}) + H(m, i_0, j_0 | \pi_2, \ldots, \pi_{r-1}) - H(m, i_0, j_0) \nonumber\\
  & \geq & H(\pi_2, \ldots, \pi_{r-1}) - H(m, i_0, j_0)\nonumber\\
&\geq & (r-2) \log(n!) - C - C' - 2\log(n)\nonumber,
\end{eqnarray}
since $|m| \leq C'$. It readily follows that there exist absolute constants $\hat{\ep}_1^*, \hat{\ep}_2^*$ and some $\hat n_0$ (depending only on $\beta$) such that for $n \geq \hat n_0$, $H(\pi_2, \ldots, \pi_{r-1} | m, i_0, j_0) \geq (r-2)\log(n!) - n\log^{(\hat{\ep}_2^* - \hat{\ep}_1^* \beta)}n$.


\item We will next prove that 3(b) holds by applying Lemma~\ref{lem:base-case}, with $i = i_0, j = j_{r-1}, \pi = \pi_1, Z = (\pi_2,\ldots,\pi_r)$ (recall that $j_{r-1} = \pi_2^{-1} \circ \cdots \circ \pi_r^{-1}(j_0)$). We will first verify that the preconditions of Lemma~\ref{lem:base-case} hold:
\begin{enumerate}
	\item $H(i | \pi, Z) = H(i_0 | \pi_1, \pi_2,\ldots,\pi_r)  \geq \log(n) - \delta$, by condition (1) of the distribution $D$.
	\item $H(\pi | Z) = H(\pi_1 | \pi_2,\ldots,\pi_r) \geq \log n! - C$, by condition (2) of the distribution $D$.
	\item $\begin{aligned}[t]
		H(\One[\pi(i) = j] | \pi, i, Z) &= H(\One[\pi_1(i_0) = j_{r-1}] | \pi_1, i_0, \pi_2,\ldots,\pi_r ) \\ &= H(\One[\pi_1^r(i_0) = j_0] | i_0,\pi_1, \pi_2,\ldots,\pi_r ) \\ &\geq 1-\delta,
      \end{aligned}$
      
      by condition (3) of the distribution $D$.
	
	\item $\begin{aligned}[t]
	H(j | \pi, i, \pi(i) \neq j, Z) &= H(j_{r-1} | \pi_1, i_0, \pi_1(i_0) \neq j_{r-1}, \pi_2,\ldots,\pi_r) \\ 
	&= H(j_{0} | i_0, \pi_1, \ldots,\pi_r, \pi_1^r(i_0) \neq j_{0}) \\ &\geq \log(n) - \delta,
  \end{aligned}$
  
  by condition (4) of the distribution $D$,
\end{enumerate}
where by assumption, there exists $\beta > 0$ such that $\delta, C, C'$ are such that $\max\{ \delta, C/n, C'/n\} \leq 1/\log^\beta(n)$. Moreover, $m$ is a deterministic function of $(\pi, Z) = (\pi_1, \ldots, \pi_r)$. 
Therefore by Lemma~\ref{lem:base-case} we get that, for some absolute constants $\hat{\ep}_1^*, \hat{\ep}_2^*$, and for some $\hat n_0$ (depending only on $\beta$), 
\begin{align}
	H(\One[\pi_1(i_0) = j_{r-1}] | \pi_2,\ldots,\pi_r, m, i_0, j_{r-1}) 
	&= H(\One[\pi(i) = j] | m,i,j, Z) \\ &\geq 1-\log^{(\hat{\ep}_2^* - \beta \hat{\ep}_1^*)}n,
\end{align}
Since $j_0 = \pi_r \circ \cdots \circ \pi_2(j_{r-1})$ and $j_1 =  \pi_{r-1} \circ \cdots \circ \pi_2(j_{r-1})$, by the data processing inequality, we get that for $n \geq \hat n_0$,
\begin{align*}
H(\One[\pi_2^{r-2}(i_1) = j_{1}] | j_1, \pi_2,\ldots,\pi_{r-1}, m, i_0, j_{0}) &\geq H(\One[\pi_1(i_0) = j_{r-1}] | \pi_2,\ldots,\pi_r, m, i_0, j_{r-1}) \\
&\geq   1-\log^{(\hat{\ep}_2^* - \beta \hat{\ep}_1^*)}n.
\end{align*}

The proof of 3(a) (with the same $\hat{\ep}_1^*, \hat{\ep}_2^*, \hat n_0$) follows in a symmetric manner.

\item Next we lower bound $H(j_1 | i_1, \pi_2, \ldots, \pi_{r-1}, \pi_1^r(i_0) \neq j_0, m, i_0, j_0)$. We apply Lemma~\ref{lem:noisyimz} with $Z = (i_0, \pi_1, \pi_2, \ldots, \pi_{r-1})$, $i = j_0$, $\pi = \pi_r^{-1}$, with the distribution given by $D$ conditioned on $\pi_1^r(i_0) \neq j_0$. We first verify that the preconditions are met:
\begin{enumerate}
\item $H(j_0 | \pi_r, Z, \pi_1^r(i_0) \neq j_0) = H(j_0 | i_0, \pi_1, \pi_2, \ldots, \pi_r, \pi_1^r(i_0) \neq j_0) \geq \log(n) - \delta$, by condition (4a) of the distribution $D$.
\item As long as $n$ is large enough so that $\delta \leq 1/18$,
  $$
  H(\pi_r | Z, \pi_1^r(i_0) \neq j_0) = H(\pi_r | i_0, \pi_1, \pi_2, \ldots, \pi_{r-1}, \pi_1^r(i_0) \neq j_0) \geq \log(n!) - 3C - 3\delta - \log n,
  $$
  by an argument identical to that used to prove (\ref{eq:c1not1}), as well as the fact that $i_0 \in [n]$, meaning that its entropy is at most $\log n$.
\end{enumerate}
Moreover, $m$ is a deterministic function of $(\pi, Z) = (\pi_1, \pi_2, \ldots, \pi_r, i_0)$. Then by Lemma~\ref{lem:noisyimz}, it follows that there are absolute constants $\hat{\ep}_1^*, \hat{\ep}_2^*$ and some $\hat n_0$ (depending only on $\beta$) such that for $n \geq \hat n_0$,
\begin{eqnarray*}
  \lefteqn{H(j_1 | i_0, \pi_1, \pi_2, \ldots, \pi_{r-1}, j_0, m, \pi_1^r(i_0) \neq j_0)}\\ &=& H(j_1 | i_1, \pi_1, \pi_2, \ldots, \pi_{r-1}, m, i_0, j_0, \pi_1^r(i_0) \neq j_0) \\
                                                                               &\geq& \log(n) - \log^{(\hat{\ep}_2^* - \beta \hat{\ep}_1^*)}n,
\end{eqnarray*}
which proves the desired statement since conditioning can only reduce entropy. Similarly, conditions (2), (3b), (4b) of $D$ imply in a symmetric manner that  for $n \geq \hat n_0$, $H(i_1 | j_1, \pi_2, \pi_3, \ldots, \pi_{r-1}, \pi_1^r(i_0) \neq j_0, m, i_0, j_0) \geq \log(n) - \log^{(\hat{\ep}_2^* - \beta \hat{\ep}_1^*)} n$.

\item To prove statement (5), first take $t$ odd, and let $X = \pi_A \cap (\pi_1, \pi_2, \ldots, \pi_t, \pi_{r-t+1}, \ldots, \pi_{r-1}, \pi_r)$, $Y  = \pi_B$, and note that condition (5) of the distribution $D$ states that conditioned on:
{\small$$
E := \{ (i_0, \ldots, i_t) = (i_0', \ldots, i_t'), (j_0, \ldots, j_t) = (j_0', \ldots, j_t'), (\pi_{t+2}, \pi_{t+4}, \ldots, \pi_{r-t-1}) = (\pi_{t+2}', \pi_{t+4}', \ldots, \pi_{r-t-1}') \},
$$}
we have that $X$ is independent of $Y$. Note that conditioned on $E$, $m = m(\pi_1, \pi_3, \ldots, \pi_r)$ is a deterministic function of $(\pi_1, \pi_3, \ldots, \pi_t, \pi_{r-t+1}, \pi_{r-t+3}, \ldots, \pi_r) = X$. It follows by Lemma \ref{lem:abcindep} that $X \indep Y | E, m=m'$, which implies that
$$
\pi_A \cap (\pi_2, \ldots, \pi_t, \pi_{r-t+1}, \ldots, \pi_{r-1}) \indep \pi_B | E, m=m'.
$$

Next take $t$ even, take $X = \pi_A$, $Y = \pi_B \cap (\pi_1, \pi_2, \ldots, \pi_t, \pi_{r-t+1}, \ldots, \pi_r)$, and conditioned on:
{\small$$
E := \{ (i_0, \ldots, i_t) = (i_0', \ldots, i_t'), (j_0, \ldots, j_t) = (j_0', \ldots, j_t'), (\pi_{t+2}, \pi_{t+4}, \ldots, \pi_{r-t-1}) = (\pi_{t+2}', \pi_{t+4}', \ldots, \pi_{r-t-1}') \},
$$}$X$ is independent of $Y$. Note that conditioned on $E$, $m = m(\pi_1, \pi_3, \ldots, \pi_r)$ is a deterministic function of $X$. It follows by Lemma \ref{lem:abcindep} that $X \indep Y  | E, m=m'$, which implies that
$$
\pi_B \cap (\pi_2, \ldots, \pi_t, \pi_{r-t+1}, \ldots, \pi_{r-1}) \indep \pi_A | E,m=m'.
$$
\end{enumerate}
\end{proof}


\Cref{lem:inductivestep} establishes that the ``inner input'' (after removing the $\pi_1$ and $\pi_r$ and pushing pointers inwards) is from a noisy distribution (according to \Cref{def:dmixdef}) when averaged over the auxiliary variable $Y$. Intuitively this should imply that the pointer verification problem remains as hard (with one fewer round of communication), but this needs to be shown formally. In particular, Alice and Bob do have additional information such as $\pi_1,\pi_r,i_0,j_0,m$ and all of this might help determine  $\One[\pi_2^{r-1}(i_1) = j_1]$. 

In Lemma \ref{lem:simulation} we formalize this intuition by creating an $(r-1)/2$ round protocol for a noisy distribution $\tilde{D}$ on $r-2$ permutations, using an $(r+1)/2$ round protocol for a related noisy distribution $D$ solving the pointer verification problem on $r$ permutations. This argument makes use of Property (5) of \Cref{def:dmixdef}, which we have not really used yet (except to argue that it holds inductively).
%

\begin{proof}[Proof of Lemma~\ref{lem:simulation}]
  Let $\ep_1^*, \ep_2^*$ be the absolute constants from Lemma \ref{lem:inductivestep}. We will show that we can take $\beta_1 = \max\left\{ \beta_2, \frac{2\beta_2 + \ep_2^*}{\ep_1^*} \right\}$.
  
Let $D \in \Dmpv(n,r,1/\log^{\beta_1}n, n/\log^{\beta_1}n)$ and let $\Pi$ be a protocol for $D$ with communication at most $n/\log^{\beta_1}n$. For sufficiently large $n$, we will give a distribution $\tilde D \in \Dmpv(n,r-2,1/\log^{\beta_2}n, n/\log^{\beta_2}n)$, and will construct a protocol $\tilde \Pi$ for $\tilde D$, which uses no more communication than $\Pi$, and crucially uses one less round of communication than $\Pi$.


\paragraph{Definition of $\tilde{D}$.}
We denote the messages in each round of $\Pi$ by $m_1, \ldots, m_{(r+1)/2}$. Recall that Alice sends $m_1 = m_1(\pi_1,\pi_3,\ldots,\pi_{r})$, Bob sends $m_2 = m_2(m_1,i_0,j_0,\pi_2,\pi_4,\ldots,\pi_{r-1})$, Alice sends $m_3 = m_3(m_1,m_2,\pi_1,\pi_3,\ldots)$, and so on. Let $(m_1',i_0',j_0')$ be a fixed instantiation of the random variables $(m_1,i_0,j_0)$. Given the distribution $D$ on $(i_0, i_1, j_0, j_1\pi_1,\ldots,\pi_r)$, consider the conditional distribution ${D_{m_1',i_0',j_0'} := D | (m_1 = m_1', i_0 = i_0', j_0 = j_0')}$ on $(i_1, j_1, \pi_1,\ldots,\pi_r)$. Furthermore, let $\tilde{D}_{m_1',i_0',j_0'}$ denote the marginal distribution of $D_{m_1',i_0',j_0'}$ on the inner inputs, that is, $(i_1, j_1, \pi_2,\ldots,\pi_{r-1})$. One can interpret $\tilde{D}_{m_1',i_0',j_0'}$, as an $(r-2)$-PV problem, and we will show how, for {\it each} tuple $(m_1', i_0', j_0')$, Alice and Bob can simulate the protocol $\Pi$, given an instance from $\tilde{D}_{m_1',i_0',j_0'}$. 
We will then show how it follows that for {\it some} tuple $(m_1', i_0', j_0')$ this simulation will have success probability at least $1/2 + \ep_2$ and moreover for this tuple $\tilde{D}_{m_1', i_0', j_0'} \in \Dmpv(n, r-2, 1/\log^{\beta_2}n, n/\log^{\beta_2}n)$. 

\paragraph{The protocol $\tilde{\Pi}$.}
Consider any tuple $(m_1', i_0', j_0')$, and an instance of $(r-2)$-PV drawn from $\tilde{D} = \tilde{D}_{m_1',i_0',j_0'}$. We use the symbol $\tilde{}$ for the random variables drawn from $\tilde{D}$. We label the $r-2$ permutations drawn from $\tilde{D}$ as $\tilde{\pi}_2,\ldots, \tilde{\pi}_{r-1}$ (instead of $\pi_1,\ldots,\pi_{r-2}$), the initial indices as $(\tilde{i}_1, \tilde{j}_1)$ (instead of $(i_0,j_0)$). The roles of Alice and Bob are also flipped, in that Bob receives $\tilde \pi_2,\tilde \pi_4, \ldots, \tilde \pi_{r-1}$, and Alice receives $\tilde i_1, \tilde j_1,\tilde \pi_3, \ldots, \tilde\pi_{r-2}$. The goal is to determine whether $\tilde\pi_2^{r-1}(\tilde i_1) = \tilde j_1$. The protocol $\tilde \Pi$ for $\tilde{D}$ is constructed as follows:

\begin{enumerate}
	\item Bob sends the first message $\tilde m_2 := m_2(m_1', i_0', j_0', \tilde \pi_2, \ldots, \tilde \pi_{r-1})$. Recall that $m_2$ was the second message of the protocol $\Pi$. 
	\item Alice then draws $(\tilde \pi_1, \tilde \pi_r)$ from its marginal in $D_{m_1',i_0',j_0'}$, conditioned on the event $\{i_1 = \tilde i_1, j_1 = \tilde j_1, \pi_3 = \tilde \pi_3, \pi_5 = \tilde \pi_5, \ldots, \pi_{r-2} = \tilde \pi_{r-2}\}$, using private randomness. That is,
	\begin{equation}
	\label{eq:tildepi1r}
	(\tilde \pi_1, \tilde \pi_r) \sim [(\pi_1, \pi_r)_{D_{m_1',i_0',j_0'}} | \{ i_1 = \tilde i_1, j_1 = \tilde j_1, \pi_3 = \tilde \pi_3, \pi_5 = \tilde \pi_5, \ldots, \pi_{r-2} = \tilde \pi_{r-2}\}].
	\end{equation}
	\item After receiving $\tilde m_2$ from Bob, Alice then sends $\tilde{m_3} := m_3(m_1',\tilde{m_2},\tilde{\pi_1},\tilde{\pi_3},\ldots,\tilde{\pi_r})$.
	Starting with Alice's $\tilde m_3$, Alice and Bob just simulate the remaining $(r+1)/2 - 2$ rounds of the protocol $\Pi$ (as well as an additional output bit at the end), where Alice takes as her input $\tilde \pi_1, \tilde \pi_3, \ldots, \tilde \pi_{r-2}, \tilde \pi_r$ and Bob takes as his input $i_0', j_0', \tilde \pi_2, \ldots, \tilde \pi_{r-1}$. 
  \end{enumerate}

Since the messages of $\tilde \Pi$ are given by $m_2, m_3, \ldots, m_{(r+1)/2}$ for appropriate inputs of $\Pi$, $\tilde \Pi$ has $(r-1)/2$ rounds, and the communication complexity of $\tilde \Pi$ is no greater than the communication complexity of $\Pi$, namely $n/\log^{\beta_1}n$.

\paragraph{Success Probability.}
Now we will prove that for each tuple $(i_0', j_0', m_1')$, the success probability of $\tilde \Pi$ when inputs are drawn from $\tilde{D}_{i_0', j_0', m_1'}$ is equal to the success probability of $\Pi$ on the distribution $D$ conditioned on $\{ m_1 = m_1', i_0 = i_0', j_0 = j_0'\}$. This will ultimately allow us to choose an appropriate tuple $(i_0', j_0', m_1')$ for which $\tilde \Pi$ achieves success probability at least $1/2 + \ep_2$ on $\tilde{D}_{i_0', j_0', m_1}$.

  Notice that the protocol $\tilde \Pi$ induces a distribution on $(\tilde \pi_1,\ldots,\tilde\pi_r,\tilde i_1,\tilde j_1)$, which we will denote by $\tilde{D}_\Pi$, where $(\tilde i_1,\tilde j_1,\tilde \pi_2,\ldots,\tilde\pi_{r-1})$ is drawn from $\tilde{D}_{m_1',i_0',j_0'}$ and Alice draws $(\tilde\pi_1,\tilde\pi_r)$ from the conditional distribution specified in step (2) above, using private randomness. 

We claim that the distribution of $(\tilde i_1, \tilde j_1, \tilde \pi_1, \ldots, \tilde \pi_r)$ under $\tilde D_\Pi$ is the same as the distribution of $(i_1, j_1, \pi_1, \ldots, \pi_r)$ under $D_{m_1',i_0',j_0'}$.
One can think of drawing $(i_1, j_1, \pi_1, \ldots, \pi_r)$ from $D_{m_1',i_0',j_0'}$ as first drawing $(i_1,j_1,\pi_2,\ldots,\pi_{r-1})$ from its marginal distribution $\tilde{D}_{m_1', i_0', j_0'}$ and then drawing $(\pi_1,\pi_r)$ from $D_{m_1',i_0',j_0'} | \{(i_1, j_1, \pi_2, \pi_3, \ldots, \pi_{r-1})\}$. By construction, the marginal distribution of $(\tilde i_1,\tilde j_1,\tilde\pi_2,\ldots,\tilde\pi_{r-1})$ under $\tilde D_\Pi$ is the same as the marginal distribution of $(i_1, j_1, \pi_2, \ldots, \pi_{r-1})$ under $D_{m_1', i_0', j_0'}$. Formally, for $i_1', j_1' \in [n], \pi_2', \ldots, \pi_{r-1}'\in\MS_n$,
\begin{equation}
  \label{eq:marginalequal}
\Pr_{\tilde D_\Pi} \left[ \tilde i_1  = i_1', \tilde j_1 = j_1', \tilde \pi_2 = \pi_2', \ldots, \tilde \pi_{r-1} = \pi_{r-1}'\right] = \Pr_{D_{m_1', i_0', j_0'}}\left[ i_1 = i_1', j_1 = j_1', \pi_2 = \pi_2', \ldots, \pi_{r-1} = \pi_{r-1}'\right].
\end{equation}
It is not clear a priori that the conditional distributions of $(\pi_1,\pi_r)$ under $D_{m_1', i_0', j_0'}$ and of $(\tilde \pi_1, \tilde \pi_r)$ under $\tilde D_\Pi$ are the same, since in $\tilde D_\Pi$, Alice draws $(\tilde \pi_1,\tilde \pi_r)$ with knowledge of only $(\tilde i_1,\tilde j_1,\tilde \pi_3,\tilde \pi_5,\ldots,\tilde \pi_{r-2})$, whereas under $D_{m_1',i_0',j_0'}$, $(\pi_1, \pi_r)$ is drawn from the conditional distribution with knowledge of all the permutations $(\pi_2, \pi_3, \ldots, \pi_{r-1})$. Nevertheless we will show that these two distributions are the same. More formally, for any $\pi_1',\ldots, \pi_r' \in S_n,i_1',j_1' \in [n]$,
\begin{eqnarray}
  && \Pr_{D_{m_1',i_0',j_0'}}[\pi_1 = \pi_1',  \pi_r = \pi_r' |   i_1 = i_1',  j_1 = j_1',  \pi_2 = \pi_2',  \pi_3 = \pi_3', \ldots, \pi_{r-1} = \pi_{r-1}']\nonumber\\
  \label{eq:followsfromindep}
&=& \Pr_{D_{m_1',i_0',j_0'}}[ \pi_1 = \pi_1',  \pi_r = \pi_r' |  i_1 = i_1',  j_1 = j_1',  \pi_3 = \pi_3',  \pi_5 = \pi_5', \ldots, \pi_{r-2} = \pi_{r-2}']\\
\label{eq:hatdr2}
&=& \Pr_{\tilde D_\Pi}[\tilde \pi_1 = \pi_1', \tilde \pi_r = \pi_r' | \tilde i_1 = i_1', \tilde j_1=  j_1', \tilde \pi_3 = \tilde \pi_3', \tilde \pi_5 = \pi_5', \ldots, \tilde\pi_{r-2} = \pi_{r-2}'],
\end{eqnarray}
where the second equality follows from construction (i.e., (\ref{eq:tildepi1r})), and the first equality follows from property (5) of the distribution $D \in \Dmpv(n,r,1/\log^{\beta_1}n,n/\log^{\beta_1}n)$ with $t=1$. That is, under the distribution $D$, for all $m_1', i_0', j_0', \pi_3', \ldots, \pi_{r-2}'$,
$$
(\pi_1, \pi_r) \indep (\pi_2, \pi_4, \ldots, \pi_{r-1}) | \{m_1 = m_1', i_0 = i_0', j_0 = j_0', i_1 = i_1', j_1 = j_1', \pi_3 = \pi_3', \pi_5 = \pi_5', \ldots, \pi_{r-2} = \pi_{r-2}'\}.
$$
As a consequence, under the distribution $D_{m_1', i_0', j_0'}$,
$$
(\pi_1, \pi_r) \indep (\pi_2, \pi_4, \ldots, \pi_{r-1}) | \{ i_1 = i_1', j_1 = j_1', \pi_3 = \pi_3', \pi_5 = \pi_5', \ldots, \pi_{r-2} = \pi_{r-2}' \},
$$
which verifies (\ref{eq:followsfromindep}) and therefore our claim that the distribution of $(\tilde i_1, \tilde j_1, \tilde \pi_1, \ldots, \tilde \pi_r)$ under $\tilde D_\Pi$ is the same as the distribution of $(i_1, j_1, \pi_1, \ldots, \pi_r)$ under $D_{m_1',i_0',j_0'}$.

It follows that for each tuple $(i_0', j_0', m_1')$, $\tilde \Pi$ is a protocol for the $(r-2)$-PV problem with success probability equal to: 
\begin{eqnarray}
&&  \Pr_{\tilde D_{m_1', i_0', j_0'}}[\tilde \Pi(\tilde i_1, \tilde j_1, \tilde \pi_2, \tilde \pi_3, \ldots, \tilde \pi_{r-1}) = \One[\tilde \pi_2^{r-1}(\tilde i_1) = \tilde j_1]]\nonumber\\
\label{eq:tildetsuccess}
  &= &\Pr_{D}[\Pi(i_0', j_0', \pi_1, \pi_2, \ldots, \pi_r) = \One[\pi_1^r(i_0) = j_0] | i_0 = i_0', j_0 = j_0', m_1 = m_1'].
\end{eqnarray}

\paragraph{Membership in $\Dmpv(n,r-2, 1/\log^{\beta_2}n, n/\log^{\beta_2}n)$.}
By hypothesis, we have that
$$D \in \Dmpv(n,r,1/\log^{\beta_1}n,n/\log^{\beta_1}n),$$ and that $|m_1| \leq n/\log^{\beta_1}n$. By Lemma \ref{lem:inductivestep}, for some $n_0$ that depends only on $\beta_1$ (which in turn depends only on $\beta_2$), for $n \geq n_0$, $$\tilde{D}^+ \in \Dmpvp(n,r,\log^{(\ep_2^* - \ep_1^* \beta_1)}n, n\log^{(\ep_2^* - \ep_1^* \beta_1)}n).$$
By definition of $\beta_1$, we have that $\frac{\ep_1^* \beta_1 - \ep_2^*}{2} \geq \beta_2$, so $\sqrt{\log^{(\ep_2^* - \ep_1^* \beta_1)}n} \leq 1/\log^{\beta_2}n$. We call the tuple $(i_0', j_0', m_1')$ {\bf good} if the distribution of $(i_1, j_1, \pi_2, \ldots, \pi_{r-1})$ under $\tilde{D}_{m_1', i_0', j_0'}$ belongs to $\Dmpv(n,r-2, 1/\log^{\beta_2}n, n/\log^{\beta_2}n)$. Recall that this means that

\begin{enumerate}
	\item $H(i_1 |  \pi_2, \ldots, \pi_{r-1}, m_1=m_1', i_0=i_0', j_0 = j_0') \geq \log(n) -1/\log^{\beta_2}n$. 
	\item $H(\pi_2, \ldots, \pi_{r-1} | m_1=m_1', i_0 = i_0', j_0 = j_0') \geq (r-2) \log(n!) - n/\log^{\beta_2}n.$ 
	\item $H(\One[\pi_1^r(i_1) = j_1] |  i_1, \pi_2, \ldots, \pi_{r-1}, m_1=m_1', i_0 = i_0', j_0 = j_0') \geq 1 -1/\log^{\beta_2}n$.
	\item $H(j_1 | i_1, \pi_2, \ldots, \pi_{r-1}, \pi_1^r(i_0) \neq j_0, m_1=m_1', i_0 = i_0', j_0 = j_0') \geq \log(n) -1/\log^{\beta_2}n$,
\end{enumerate}
and analogously the (b) statements in the definition of $\Dmpv(n,r,1/\log^{\beta_2}n,n/\log^{\beta_2}n)$ (\Cref{def:dmixdef}) hold as well.

By Lemma \ref{lem:inductivestep}, Markov's inequality, and a union bound, if $n \geq n_0$, with probability at least $1 - 7/\log^{\beta_2}n$ over the tuple $(i_0, j_0, m_1)$ drawn from its marginal in $D$, $(i_0, j_0, m_1)$ is good. (Notice that there is a coefficient of $7$, as opposed to $8$, since there is no (b) statement for item (2) above.)

\paragraph{Choosing a good tuple $(i_0', j_0', m_1')$.}
Now we will use (\ref{eq:tildetsuccess}) to choose a good tuple $(i_0', j_0', m_1')$ for which $\tilde \Pi$ also achieves success probability at least $1/2 + \ep_2$, for all $n > \max\left\{ n_0, 2^{(7/(\ep_1 - \ep_2))^{1/\beta_2}} \right\}$. 
For each tuple $(i_0', j_0', m_1')$, we have constructed above a protocol $\tilde \Pi$ for $(r-2)$-PV, with communication at most $n/\log^{\beta_1}n \leq n/\log^{\beta_2}n$, and where Alice and Bob use $(r-1)/2$ rounds of communication. 
If moreover $(i_0', j_0', m_1')$ is good, then the distribution of $( i_1,  j_1,  \pi_2, \ldots,  \pi_{r-1})$ under $\tilde{D}_{i_0', j_0', m_1'}$ 
belongs to $\Dmpv(n, r-2, 1/\log^{\beta_2}n,n/\log^{\beta_2}n)$.

Now suppose for the purpose of contradiction that the probability of success of all $((r-1)/2,n/\log^{\beta_2}n)$ protocols on any distribution $\tilde D \in\Dmpv(n, r-2, 1/\log^{\beta_2}n, n/\log^{\beta_2}n)$  were at most $1/2 + \ep_2$. In particular, for any good tuple $(i_0', j_0', m_1')$, the probability of success of $\tilde \Pi$ on the distribution $\tilde{D}_{m_1', i_0', j_0'}$ is at most $1/2 + \ep_2$. Then by (\ref{eq:tildetsuccess}) and since $n \geq n_0$, the probability of success of $\Pi$ would be at most
$$
7/\log^{\beta_2}n + (1 - 7/\log^{\beta_2}n) \cdot (1/2 + \ep_2) \leq 1/2 +  7 /\log^{\beta_2}n + \ep_2.
$$
Since we also have $n > 2^{(7/(\ep_1 - \ep_2))^{1/\beta_2}}$, it follows that
$$
\ep_2 + 7/\log^{\beta_2}n < \ep_1,
$$
which is a contradiction and thus completes the proof of Lemma \ref{lem:simulation}.


\end{proof}

\section*{Acknowledgements}\label{sec:ack}

We would like to thank Venkatesan Guruswami and T.S. Jayram for very enlightening discussions related to the questions considered in this work.

\bibliographystyle{alpha}
\bibliography{references,Pointer_chasing}

\newcommand{\etalchar}[1]{$^{#1}$}
\begin{thebibliography}{MOR{\etalchar{+}}06}

\bibitem[AC93]{ahlswede1993common}
Rudolf Ahlswede and Imre Csisz{\'a}r.
\newblock Common randomness in information theory and cryptography. part i:
  secret sharing.
\newblock {\em IEEE Transactions on Information Theory}, 39(4), 1993.

\bibitem[AC98]{ahlswede1998common}
Rudolf Ahlswede and Imre Csisz{\'a}r.
\newblock Common randomness in information theory and cryptography. ii. cr
  capacity.
\newblock {\em Information Theory, IEEE Transactions on}, 44(1):225--240, 1998.

\bibitem[AG76]{ahlswede1976spreading}
Rudolf Ahlswede and Peter G{\'a}cs.
\newblock Spreading of sets in product spaces and hypercontraction of the
  markov operator.
\newblock {\em The annals of probability}, pages 925--939, 1976.

\bibitem[AGKN13]{anantharam2013maximal}
Venkat Anantharam, Amin Gohari, Sudeep Kamath, and Chandra Nair.
\newblock On maximal correlation, hypercontractivity, and the data processing
  inequality studied by {E}rkip and {C}over.
\newblock {\em arXiv preprint arXiv:1304.6133}, 2013.

\bibitem[BBCR13]{barak2013compress}
Boaz Barak, Mark Braverman, Xi~Chen, and Anup Rao.
\newblock How to compress interactive communication.
\newblock {\em SIAM Journal on Computing}, 42(3):1327--1363, 2013.

\bibitem[BGI14]{bavarian2014role}
Mohammad Bavarian, Dmitry Gavinsky, and Tsuyoshi Ito.
\newblock On the role of shared randomness in simultaneous communication.
\newblock In {\em Automata, Languages, and Programming}, pages 150--162.
  Springer, 2014.

\bibitem[BJKS04]{BJKS}
Ziv {Bar-Yossef}, T.~S. Jayram, Ravi Kumar, and D.~Sivakumar.
\newblock An information statistics approach to data stream and communication
  complexity.
\newblock {\em Journal of Computer and System Sciences}, 68(4):702--732, 2004.

\bibitem[BM11]{bogdanov2011extracting}
Andrej Bogdanov and Elchanan Mossel.
\newblock On extracting common random bits from correlated sources.
\newblock {\em Information Theory, IEEE Transactions on}, 57(10):6351--6355,
  2011.

\bibitem[CGMS17]{canonne2017communication}
Cl{\'e}ment~L Canonne, Venkatesan Guruswami, Raghu Meka, and Madhu Sudan.
\newblock Communication with imperfectly shared randomness.
\newblock {\em IEEE Transactions on Information Theory}, 63(10):6799--6818,
  2017.

\bibitem[CMN14]{chan2014extracting}
Siu~On Chan, Elchanan Mossel, and Joe Neeman.
\newblock On extracting common random bits from correlated sources on large
  alphabets.
\newblock {\em Information Theory, IEEE Transactions on}, 60(3):1630--1637,
  2014.

\bibitem[CN00]{csiszar2000common}
Imre Csisz{\'a}r and Prakash Narayan.
\newblock Common randomness and secret key generation with a helper.
\newblock {\em Information Theory, IEEE Transactions on}, 46(2):344--366, 2000.

\bibitem[CN04]{csiszar2004secrecy}
Imre Csisz{\'a}r and Prakash Narayan.
\newblock Secrecy capacities for multiple terminals.
\newblock {\em IEEE Transactions on Information Theory}, 50(12):3047--3061,
  2004.

\bibitem[CSWY01]{cswy01}
Amit Chakrabarti, Yaoyun Shi, Anthony Wirth, and Andrew Yao.
\newblock Informational complexity and the direct sum problem for simultaneous
  message complexity.
\newblock In {\em Foundations of Computer Science, 2001. Proceedings. 42nd IEEE
  Symposium on}, pages 270--278. IEEE, 2001.

\bibitem[CT06]{cover_elements_2006}
T.~M. Cover and Joy~A. Thomas.
\newblock {\em Elements of information theory}.
\newblock Wiley-Interscience, Hoboken, N.J, 2nd ed edition, 2006.
\newblock OCLC: ocm59879802.

\bibitem[CT12]{cover2012elements}
Thomas~M Cover and Joy~A Thomas.
\newblock {\em Elements of information theory}.
\newblock John Wiley \& Sons, 2012.

\bibitem[DMN18]{de2018non}
Anindya De, Elchanan Mossel, and Joe Neeman.
\newblock Non interactive simulation of correlated distributions is decidable.
\newblock In {\em Proceedings of the Twenty-Ninth Annual ACM-SIAM Symposium on
  Discrete Algorithms}, pages 2728--2746. SIAM, 2018.

\bibitem[GJ18]{ghazi2018resource}
Badih Ghazi and TS~Jayram.
\newblock Resource-efficient common randomness and secret-key schemes.
\newblock In {\em Proceedings of the Twenty-Ninth Annual ACM-SIAM Symposium on
  Discrete Algorithms}, pages 1834--1853. Society for Industrial and Applied
  Mathematics, 2018.

\bibitem[GK73]{gacs1973common}
Peter G{\'a}cs and J{\'a}nos K{\"o}rner.
\newblock Common information is far less than mutual information.
\newblock {\em Problems of Control and Information Theory}, 2(2):149--162,
  1973.

\bibitem[GKR17]{ghazi2017dimension}
Badih Ghazi, Pritish Kamath, and Prasad Raghavendra.
\newblock Dimension reduction for polynomials over gaussian space and
  applications.
\newblock {\em arXiv preprint arXiv:1708.03808}, 2017.

\bibitem[GKS16a]{ghazi2015communication}
Badih Ghazi, Pritish Kamath, and Madhu Sudan.
\newblock Communication complexity of permutation-invariant functions.
\newblock In {\em Proceedings of the Twenty-Seventh Annual {ACM-SIAM} Symposium
  on Discrete Algorithms, {SODA} 2016, Arlington, VA, USA, January 10-12,
  2016}, pages 1902--1921, 2016.

\bibitem[GKS16b]{ghazi2016decidability}
Badih Ghazi, Pritish Kamath, and Madhu Sudan.
\newblock Decidability of non-interactive simulation of joint distributions.
\newblock In {\em Foundations of Computer Science (FOCS), 2016 IEEE 57th Annual
  Symposium on}, pages 545--554. IEEE, 2016.

\bibitem[GR16]{guruswami2016tight}
Venkatesan Guruswami and Jaikumar Radhakrishnan.
\newblock Tight bounds for communication-assisted agreement distillation.
\newblock In {\em 31st Conference on Computational Complexity, {CCC} 2016, May
  29 to June 1, 2016, Tokyo, Japan}, pages 6:1--6:17, 2016.

\bibitem[HY10]{ho_interplay_2010}
S.~W. Ho and R.~W. Yeung.
\newblock The {Interplay} {Between} {Entropy} and {Variational} {Distance}.
\newblock {\em IEEE Transactions on Information Theory}, 56(12):5906--5929,
  December 2010.

\bibitem[KA15]{kamath2015non}
Sudeep Kamath and Venkat Anantharam.
\newblock On non-interactive simulation of joint distributions.
\newblock {\em arXiv preprint arXiv:1505.00769}, 2015.

\bibitem[KN97]{kushilevitz_nisan}
Eyal Kushilevitz and Noam Nisan.
\newblock {\em Communication complexity}.
\newblock Cambridge University Press, 1997.

\bibitem[KS92]{KalyanaS}
Bala Kalyanasundaram and Georg Schintger.
\newblock The probabilistic communication complexity of set intersection.
\newblock {\em SIAM Journal on Discrete Mathematics}, 5(4):545--557, 1992.

\bibitem[LCV15]{liu2015secret}
Jingbo Liu, Paul Cuff, and Sergio Verd{\'u}.
\newblock Secret key generation with one communicator and a one-shot converse
  via hypercontractivity.
\newblock In {\em 2015 IEEE International Symposium on Information Theory
  (ISIT)}, pages 710--714. IEEE, 2015.

\bibitem[LCV16]{liu2016common}
Jingbo Liu, Paul~W. Cuff, and Sergio Verd{\'{u}}.
\newblock Common randomness and key generation with limited interaction.
\newblock {\em CoRR}, abs/1601.00899, 2016.

\bibitem[Liu16]{liurate}
Jingbo Liu.
\newblock Rate region for interactive key generation and common randomness
  generation.
\newblock {\em Manuscript available at {\tt
  \url{http://www.princeton.edu/~jingbo/preprints/RateRegionInteractiveKeyGen120415.pdf}
  (visited on 02/13/2017)}}, 2016.

\bibitem[LLG{\etalchar{+}}05]{lim2005extracting}
Daihyun Lim, Jae~W Lee, Blaise Gassend, G~Edward Suh, Marten Van~Dijk, and
  Srinivas Devadas.
\newblock Extracting secret keys from integrated circuits.
\newblock {\em IEEE Transactions on Very Large Scale Integration (VLSI)
  Systems}, 13(10):1200--1205, 2005.

\bibitem[Mau93]{maurer1993secret}
Ueli~M Maurer.
\newblock Secret key agreement by public discussion from common information.
\newblock {\em Information Theory, IEEE Transactions on}, 39(3):733--742, 1993.

\bibitem[MO04]{mossel2004coin}
Elchanan Mossel and Ryan O'Donnell.
\newblock Coin flipping from a cosmic source: On error correction of truly
  random bits.
\newblock {\em arXiv preprint math/0406504}, 2004.

\bibitem[MOR{\etalchar{+}}06]{mossel2006non}
Elchanan Mossel, Ryan O'Donnell, Oded Regev, Jeffrey~E Steif, and Benny
  Sudakov.
\newblock Non-interactive correlation distillation, inhomogeneous markov
  chains, and the reverse bonami-beckner inequality.
\newblock {\em Israel Journal of Mathematics}, 154(1):299--336, 2006.

\bibitem[NW93]{NW}
Noam Nisan and Avi Wigderson.
\newblock Rounds in communication complexity revisited.
\newblock {\em SIAM Journal on Computing}, 22(1):211--219, 1993.

\bibitem[Raz92]{Razborov}
Alexander~A. Razborov.
\newblock On the distributional complexity of disjointness.
\newblock {\em Theoretical Computer Science}, 106(2):385--390, 1992.

\bibitem[SD07]{suh2007physical}
G~Edward Suh and Srinivas Devadas.
\newblock Physical unclonable functions for device authentication and secret
  key generation.
\newblock In {\em Proceedings of the 44th annual Design Automation Conference},
  pages 9--14. ACM, 2007.

\bibitem[Sha49]{shannon1949communication}
Claude~E Shannon.
\newblock Communication theory of secrecy systems.
\newblock {\em Bell system technical journal}, 28(4):656--715, 1949.

\bibitem[SHO08]{su2008digital}
Ying Su, Jeremy Holleman, and Brian~P Otis.
\newblock A digital 1.6 pj/bit chip identification circuit using process
  variations.
\newblock {\em IEEE Journal of Solid-State Circuits}, 43(1):69--77, 2008.

\bibitem[Tya13]{tyagi2013common}
Himanshu Tyagi.
\newblock Common information and secret key capacity.
\newblock {\em IEEE Transactions on Information Theory}, 59(9):5627--5640,
  2013.

\bibitem[Wit75]{witsenhausen1975sequences}
Hans~S Witsenhausen.
\newblock On sequences of pairs of dependent random variables.
\newblock {\em SIAM Journal on Applied Mathematics}, 28(1):100--113, 1975.

\bibitem[Wyn75]{Wyner_CommonInfo}
Aaron~D. Wyner.
\newblock The common information of two dependent random variables.
\newblock {\em {IEEE} Transactions on Information Theory}, 21(2):163--179,
  1975.

\bibitem[YLH{\etalchar{+}}09]{yu2009towards}
Haile Yu, Philip Heng~Wai Leong, Heiko Hinkelmann, L~Moller, Manfred Glesner,
  and Peter Zipf.
\newblock Towards a unique {FPGA}-based identification circuit using process
  variations.
\newblock In {\em 2009 International Conference on Field Programmable Logic and
  Applications}, pages 397--402. IEEE, 2009.

\bibitem[ZC11]{zhao2011efficiency}
Lei Zhao and Yeow-Kiang Chia.
\newblock The efficiency of common randomness generation.
\newblock In {\em 2011 49th Annual Allerton Conference on Communication,
  Control, and Computing (Allerton)}, 2011.

\end{thebibliography}

\end{document}